\newtheorem{theorem}{Theorem}
\newtheorem{definition}{Definition}
\newtheorem{remark}{Remark}
\newtheorem{corollary}{Corollary}
\newtheorem{lemma}{Lemma}
\newtheorem{proposition}{Proposition}
\newtheorem{conjecture}{Conjecture}
\journal{Finite Fields and Their Applications}
\begin{document}

\begin{frontmatter}

%% Title, authors and addresses

%% use the tnoteref command within \title for footnotes;
%% use the tnotetext command for theassociated footnote;
%% use the fnref command within \author or \affiliation for footnotes;
%% use the fntext command for theassociated footnote;
%% use the corref command within \author for corresponding author footnotes;
%% use the cortext command for theassociated footnote;
%% use the ead command for the email address,
%% and the form \ead[url] for the home page:
%% \title{Title\tnoteref{label1}}
%% \tnotetext[label1]{}
%% \author{Name\corref{cor1}\fnref{label2}}
%% \ead{email address}
%% \ead[url]{home page}
%% \fntext[label2]{}
%% \cortext[cor1]{}
%% \affiliation{organization={},
%%             addressline={},
%%             city={},
%%             postcode={},
%%             state={},
%%             country={}}
%% \fntext[label3]{}

\title{Deep Holes of Twisted Reed-Solomon Codes\tnoteref{t1}}
\tnotetext[t1]{This paper was presented in part at the 2024 IEEE International Symposium on Information Theory (ISIT)}
%% use optional labels to link authors explicitly to addresses:
%% \author[label1,label2]{}
%% \affiliation[label1]{organization={},
%%             addressline={},
%%             city={},
%%             postcode={},
%%             state={},
%%             country={}}
%%
%% \affiliation[label2]{organization={},
%%             addressline={},
%%             city={},
%%             postcode={},
%%             state={},
%%             country={}}
 %% Author name

%% Author affiliation
\author[1,2,3]{Weijun Fang}

\affiliation[1]{organization={State Key Laboratory of Cryptography and Digital Economy Security, Shandong University},%Department and Organization
      %addressline={}, 
            city={Qingdao},
            postcode={266237}, 
            %state={Shandong},
            country={China}}
            
\affiliation[2]{organization={Key Laboratory of Cryptologic Technology and Information Security, Ministry of Education, Shandong University},%Department and Organization
            %addressline={}, 
            city={Qingdao},
            postcode={266237}, 
            %state={Shandong},
            country={China}}

\affiliation[3]{organization={ School of Cyber Science and Technology, Shandong University},%Department and Organization
            %addressline={}, 
            city={Qingdao},
            postcode={266237}, 
            %state={Shandong},
            country={China}}

\author[4]{Jingke Xu\corref{cor1}}

\affiliation[4]{organization={School of Information Science and Engineering, Shandong Agricultural University},%Department and Organization
            %addressline={}, 
            city={Tai'an},
            postcode={266237}, 
            %state={Shandong},
            country={China}}
\cortext[cor1]{Corresponding author: Jingke Xu (email: xujingke@sdau.edu.cn)}
\author[1,2,3]{Ruiqi Zhu}

%% Abstract
\begin{abstract}
The deep holes of a linear code are the vectors that achieve the maximum error distance (covering radius) to the code. {Determining the covering radius and deep holes of linear codes is a fundamental problem in coding theory. In this paper, we investigate the problem of deep holes of twisted Reed-Solomon codes.} The covering radius and a standard class of deep holes of twisted Reed-Solomon codes ${\rm TRS}_k(\mathcal{A}, \theta)$ are obtained for a general evaluation set $\mathcal{A} \subseteq \mathbb{F}_q$. Furthermore, we consider the problem of determining all deep holes of the full-length twisted Reed-Solomon codes ${\rm TRS}_k(\mathbb{F}_q, \theta)$. For 
 even $q$, by utilizing the polynomial method and Gauss sums over finite fields, we prove that the standard deep holes are all the deep holes of ${\rm TRS}_k(\mathbb{F}_q, \theta)$ with $\frac{3q-4}{4} \leq k\leq q-4$. For odd $q$, we adopt a different method and employ the results on some equations over finite fields to show that there are also no other deep holes of ${\rm TRS}_k(\mathbb{F}_q, \theta)$ with $\frac{3q+3\sqrt{q}-7}{4} \leq k\leq q-4$. In addition, for the boundary cases of $k=q-3, q-2$ and $q-1$, we completely determine their deep holes using results on certain character sums.
\end{abstract}

%%Graphical abstract
%\begin{graphicalabstract}
%\includegraphics{grabs}
%\end{graphicalabstract}

%%Research highlights
%\begin{highlights}
%\item Research highlight 1
%\item Research highlight 2
%\end{highlights}

%% Keywords
\begin{keyword}
%% keywords here, in the form: keyword \sep keyword

%% PACS codes here, in the form: \PACS code \sep code

%% MSC codes here, in the form: \MSC code \sep code
%% or \MSC[2008] code \sep code (2000 is the default)
Deep Holes, Twisted Reed-Solomon Codes, Annihilated Polynomial Method, Character Sums, Gauss Sums
\end{keyword}

\end{frontmatter}

%% Add \usepackage{lineno} before \begin{document} and uncomment 
%% following line to enable line numbers
%% \linenumbers

%% main text
%%

%% Use \section commands to start a section
\section{Introduction}

Let $\mathbb{F}_q$ be the finite field of $q$ elements with characteristic $p$ and $\mathbb{F}^n_q$ be the vector space of dimension $n$ over $\mathbb{F}_q$. The Hamming distance $d(\bm u, \bm v)$ of two vectors $\bm u=(u_0, u_1, \cdots, u_{n-1}), \bm v =(v_0, v_1, \cdots, v_{n-1})\in \mathbb{F}^n_q$ is defined as $d(\bm u, \bm v) \triangleq |\{0 \leq i \leq n-1: u_i \neq v_i \}|.$

An $[n, k, d]_q$-linear code $C$ is just a $k$-dimensional subspace of $\mathbb{F}^n_q$ with minimum distance $d=d(C)$, which is defined as 
$d(C)\triangleq \min\limits_{ \bm u\neq \bm v \in C}d(\bm u, \bm v).$
For any vector $\bm u \in \mathbb{F}^n_q$, we define its error distance to the code $C$ as  $d(\bm u, C)=\min\limits_{\bm c \in C}d(\bm u, \bm c).$
Then the covering radius $\rho(C)$ of $C$ is defined as the maximum error distance, i.e., $\rho(C)=\max\limits_{\bm u \in \mathbb{F}^n_q}d(\bm u, C).$
The vectors achieving maximum error distance are called deep holes of the code. The computation of the covering radius is a fundamental problem in coding theory. However,  McLoughlin\cite{M84} has proven that the computational difficulty of determining the covering radius of random linear codes strictly exceeds NP-completeness. There are few linear codes with known covering radii, and examples include perfect codes \cite{HP03}, generalized Reed-Solomon codes\cite{D94}, and first-order Reed-Muller codes ${\rm RM}(1, m)$ (where $m$ is even)\cite{R76}, among others. Therefore, it is very hard to determine the exact value of the covering radius of a linear code, not to mention deciding its deep holes.
 
In recent years, the problem of determining the deep holes of Reed-Solomon codes has attracted lots of attention in the literature \cite{CM07,LW08,LW08-2,LW10,WH12,ZW12,CMP12,ZFL13,KW16,ZCL16,K17,ZWK20, ZW23}. 
\begin{definition}
    Let $\mathcal{A}=\{\alpha_1, \alpha_2, \cdots, \alpha_n\} \subseteq \mathbb{F}_q$ be the evaluation set, then the Reed-Solomon code ${\rm RS}_k(\mathcal{A})$ of length $n$ and dimension $k$, is defined as 
\[{\rm RS}_k(\mathcal{A})=\{(f(\alpha_1), f(\alpha_2), \cdots, f(\alpha_n)): f(x)\in\mathbb{F}_q[x], \deg(f) \leq k-1\}.\]
\end{definition}

It can be shown that the covering radius of ${\rm RS}_k(\mathcal{A})$ is equal to $n-k$. It was shown in \cite{GV05} that the problem
of determining whether a vector is a deep hole of a
given Reed-Solomon code is NP-hard. 
\begin{definition}
    The generating polynomial of a vector $\bm u \in \mathbb{F}^n_q$ is defined as the unique polynomial $f(x) \in \mathbb{F}_q[x]$ with $\deg(f) \leq n-1$ and $f(\alpha_i)=u_i$ for $i=1,2, \cdots,n$. 
    Conversely, we denote $\bm u_f=(f(\alpha_1), f(\alpha_2), \cdots, f(\alpha_{n})) \in \mathbb{F}^n_q$ for any polynomial $f(x) \in \mathbb{F}_q[x]$ with $\deg(f) \leq n-1$.
\end{definition}  

It can be verified that the vectors with generating polynomial of degree $k$ are deep holes of ${\rm RS}_k(\mathcal{A})$ \cite{CM07}. There may be some other deep holes of ${\rm RS}_k(\mathcal{A})$ for certain subset $\mathcal{A}$ of $\mathbb{F}_q$. However, for the Reed-Solomon codes with full length, i.e., ${\rm RS}_k(\mathbb{F}_q)$, Cheng and Murray \cite{CM07} conjectured that:

\begin{conjecture}[\!\!\cite{CM07}]\label{conj1}
	For $2 \leq k \leq q-2$, all deep holes of ${\rm RS}_k(\mathbb{F}_q)$ have generating polynomials of degree $k$, except when $q$ is even
	and $k= q - 3$.
	\end{conjecture}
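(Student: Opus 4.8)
The plan is to reduce the deep-hole question to a non-vanishing statement for a symmetric polynomial and then detect its zeros with character-sum estimates. Since the covering radius of ${\rm RS}_k(\mathbb{F}_q)$ equals $q-k$, a word $\bm u_f$ with generating polynomial $f$ of degree $d$ fails to be a deep hole precisely when $f$ agrees with some $g$ of degree $\le k-1$ on at least $k+1$ points of $\mathbb{F}_q$; equivalently $f-g$ has $k+1$ distinct roots. First I would record the interpolation criterion: for distinct $x_1, \ldots, x_{k+1}$, the leading coefficient of the degree-$\le k$ polynomial interpolating $f$ on these nodes is the divided difference
\[
\Delta_f(x_1, \ldots, x_{k+1}) = \sum_{i=1}^{k+1} \frac{f(x_i)}{\prod_{j \ne i}(x_i - x_j)},
\]
and $\bm u_f$ is a deep hole iff $\Delta_f \neq 0$ for every such tuple. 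For a monomial $f = x^d$ this divided difference equals the complete homogeneous symmetric polynomial $h_{d-k}(x_1, \ldots, x_{k+1})$, and in general $\Delta_f = a_k + \sum_{j>k} a_j\, h_{j-k}$ where the $a_j$ are the coefficients of $f$. In particular $d=k$ gives $\Delta_f = a_k \ne 0$, recovering the fact (from the excerpt) that degree-$k$ generating polynomials are deep holes.

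The heart of the converse is to show that for every $d$ with $k+1 \le d \le q-1$ the symmetric polynomial $\Delta_f$, whose top part is $a_d\, h_{d-k}$, actually has a zero with pairwise-distinct coordinates in $\mathbb{F}_q$; such a zero produces a nearby codeword and rules out $\bm u_f$ as a deep hole. I would count these zeros by expressing the number of distinct-coordinate solutions of $\Delta_f = 0$ through additive characters, expanding via a Li–Wan type inclusion–exclusion over the diagonals $x_i = x_j$, and bounding the resulting exponential sums by Weil's theorem, with Gauss-sum evaluations entering when the character sums factor through a power map. Provided $d-k$ is moderate and $k+1$ is bounded away from $q$, these estimates force a positive count, so no such $\bm u_f$ is a deep hole.

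I expect two places to be the genuine obstacles. The first is uniformity across the whole range $2 \le k \le q-2$: the Weil bound degrades both when $d$ approaches $q-1$ (the symmetric polynomial acquires many variables and large degree) and when $k$ is close to $q$ (few points remain), so the clean counting argument only covers a middle range and the two boundary regimes require separate, more delicate treatment — this is precisely why the statement persists as a conjecture rather than a theorem. The second is the characteristic-$2$ exception, which the method should recover rather than obscure: when $q$ is even and $k=q-3$ the simplest case $d=k+1$ demands a zero of $h_1 = x_1+\cdots+x_{q-2}$ on distinct coordinates, yet the sum over any $q-2$ distinct elements of $\mathbb{F}_q$ equals the sum of the two omitted elements, which in even characteristic is nonzero exactly because distinctness forbids $a=b$; hence $x^{q-2}$ is forced to be an extra deep hole, matching the stated exception. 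The proof must therefore isolate this single configuration and verify that it is the only obstruction to $\Delta_f$ admitting a distinct-coordinate zero.
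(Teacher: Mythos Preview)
The statement you are attempting to prove is labelled \emph{Conjecture} in the paper, and the paper contains no proof of it. It is cited from Cheng and Murray~\cite{CM07} as an open problem; the surrounding discussion only surveys partial progress (Li--Wan, Zhuang--Cheng--Li, Kaipa, etc.) toward it. So there is no ``paper's own proof'' to compare your proposal against, and any complete proof would resolve a well-known open question.

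Your outline is a faithful sketch of the techniques behind the known partial results: the divided-difference/Lagrange criterion, the identification $\Delta_{x^d}=h_{d-k}$, and the Li--Wan sieve combined with Weil bounds are exactly the ingredients used in \cite{LW08,LW08-2,ZW12,CMP12}. Your explanation of the characteristic-$2$ exception at $k=q-3$ is also correct and matches the known picture. But you yourself identify the genuine gap: the character-sum estimates only control a middle range of $k$ and $d$, and the boundary regimes are not handled. That is not a minor technicality to be filled in later; it is precisely the unresolved part of the conjecture. The strongest general result in this direction, due to Kaipa~\cite{K17}, uses a different method (MDS extensions via Seroussi--Roth) and still only reaches $k\ge\lfloor (q-1)/2\rfloor$, while Zhuang--Cheng--Li~\cite{ZCL16} handle $2\le k\le p-2$ and $q-p\le k\le q-3$ by other means. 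Your plan, as written, would at best reproduce a result of this type rather than settle the full range $2\le k\le q-2$.

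In short: your proposal is a reasonable research outline and correctly isolates where the difficulty lies, but it is not a proof, and the paper does not claim one either.
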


Significant research advances have been made regarding Conjecture \ref{conj1}.  Li and Wan \cite{LW08}(resp. Zhang et al. \cite{ZFL13} ) proved that the vectors with generating polynomials of degree $k+1$ (resp. $k+2$) are not deep holes of ${\rm RS}_k(\mathbb{F}_q)$. By employing techniques such as exponential sums, algebraic geometry, and sieving methods, Li and Wan \cite{LW08-2}, Zhu and Wan \cite{ZW12}, Cafure\cite{CMP12} have provided estimates for the error distances of certain vectors, thereby advancing Conjecture \ref{conj1}. When  $2 \leq k \leq p-2$ or $2 \leq q-p \leq k \leq q-3$, Zhuang et al. \cite{ZCL16} proved that Conjecture \ref{conj1} is true.  In particular, Conjecture \ref{conj1} holds for prime fields. Applying Seroussi and Roth’s results on the extension of RS codes \cite{SR86}, Kaipa \cite{K17} proved that Conjecture 1 holds for $k \geq \lfloor \frac{q-1}{2}\rfloor$. In \cite{ZZ23}, the authors provided an excellent survey on the deep hole problem of RS codes.
	
	Since  Beelen et al. first introduced twisted Reed-Solomon (TRS) codes in \cite{BPN17,BPR22}, %which can be seen as a generalization of Reed-Solomon codes. Since then, 
 many coding scholars have studied TRS codes with good properties, including TRS MDS codes, TRS self-dual codes, and TRS LCD codes \cite{HYNL21,LL21,SYLH22,SYLH22-2, ZZT22}.  Beelen et al. \cite{BBP18} proposed the application of TRS codes to the McEliece PKC, suggesting that TRS codes with multiple twists could be used as an alternative to Goppa codes. Lavauzelle et al. \cite{LR20} presented an efficient key-recovery attack on the McEliece-type cryptosystem based on TRS codes. TRS codes have played an increasingly important role in coding theory. As a generalization of Reed-Solomon codes, it is also difficult to determine the covering radius and deep holes of twisted RS codes.  Thus, as a preliminary exploration, we mainly focus a special class of twisted RS codes and obtain some results on their covering radius and deep holes in this paper.

\subsection{Our Results}
Suppose $2 \leq k<n$ and $\mathcal{A}=\{\alpha_1, \alpha_2, \cdots, \alpha_n\} \subseteq  \mathbb{F}_q$.
Let $\theta \neq 0 \in \mathbb{F}_q$ (when $\theta=0$, the TRS code defined below is just the RS code) and 
\begin{equation}\label{S}
    \mathcal{S}_{k,\theta} \triangleq \{f_{k,\theta}(x)\triangleq\sum_{i=0}^{k-2}f_ix^i+f_{k-1}(x^{k-1}+\theta x^k):  f_0, f_1,\cdots, f_{k-1} \in \mathbb{F}_q\}.
\end{equation}
	
\begin{definition}
The TRS code ${\rm TRS}_k(\mathcal{A}, \theta)$ is defined as
\[{\rm TRS}_k(\mathcal{A}, \theta) \triangleq \{(f(\alpha_1), f(\alpha_2), \cdots, f(\alpha_n)): f \in \mathcal{S}_{k,\theta}\}.\]
\end{definition}	

 Firstly, we determine the covering radius and a standard class of deep holes of ${\rm TRS}_k(\mathcal{A}, \theta)$ for a subset $\mathcal{A} \subseteq \mathbb{F}_q$.

 \begin{theorem}\label{thm1}
     Suppose $\mathcal{A} \subseteq \mathbb{F}_q$ with $|\mathcal{A}|=n$ and $1<  k <n$. Then 
     \begin{description}
         \item[(i)] The covering radius $\rho({\rm TRS}_k(\mathcal{A}, \theta))$ of ${\rm TRS}_k(\mathcal{A}, \theta)$ is equal to $n-k$. (See Theorem \ref{thm4})
         \item[(ii)] Suppose $a \in \mathbb{F}^*_q$ , then the vector $\bm u_f$ with generating polynomial $f(x)=ax^{k}+f_{k,\theta}(x)$, where $f_{k,\theta}(x) \in \mathcal{S}_{k,\theta}$ given by Eq. \eqref{S}, is a deep hole of ${\rm TRS}_k(\mathcal{A}, \theta)$. (See Theorem \ref{thm5})
     \end{description}
 \end{theorem}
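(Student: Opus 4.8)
The plan is to prove the two parts together, deriving the covering radius in (i) from the deep-hole construction in (ii) combined with a standard upper bound. I would first record that $\dim {\rm TRS}_k(\mathcal{A},\theta) = k$: the space $\mathcal{S}_{k,\theta}$ is $k$-dimensional (the free parameters are $f_0,\dots,f_{k-1}$), and the evaluation map $g \mapsto \bm u_g$ is injective because any nonzero $g \in \mathcal{S}_{k,\theta}$ has $\deg g \le k < n = |\mathcal{A}|$ and hence cannot vanish at all $n$ distinct points of $\mathcal{A}$. For the upper bound $\rho \le n-k$ I would invoke the redundancy (information-set) bound valid for every $[n,k]$ linear code: a $k\times n$ generator matrix has $k$ linearly independent columns, indexed by some $I$ with $|I|=k$, so the projection of the code onto the coordinates in $I$ is a bijection onto $\mathbb{F}_q^{I}$; thus every $\bm u \in \mathbb{F}_q^n$ agrees with some codeword on all of $I$, giving $d(\bm u,C) \le n-|I| = n-k$.

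For (ii) I would fix $a \in \mathbb{F}_q^*$, set $f = ax^k + f_{k,\theta}$, and bound $d(\bm u_f, C)$ directly. For an arbitrary codeword $\bm u_g$ with $g \in \mathcal{S}_{k,\theta}$, put $h = f_{k,\theta}-g$, which again lies in the linear space $\mathcal{S}_{k,\theta}$, so that $f-g = ax^k + h$. Writing $h = \sum_{i=0}^{k-2} h_i x^i + h_{k-1}(x^{k-1}+\theta x^k)$, the difference becomes
\[
f-g = (a+\theta h_{k-1})\,x^k + h_{k-1}\,x^{k-1} + \sum_{i=0}^{k-2} h_i x^i,
\]
a polynomial of degree at most $k$. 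The crucial observation is that $f-g$ is never identically zero: if $f=g$ held, then $ax^k = g - f_{k,\theta} \in \mathcal{S}_{k,\theta}$, but membership in $\mathcal{S}_{k,\theta}$ forces the $x^k$-coefficient to equal $\theta$ times the $x^{k-1}$-coefficient, i.e. $a = \theta\cdot 0 = 0$, contradicting $a \ne 0$.

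Consequently $f-g$ is a nonzero polynomial of degree $\le k$, so it has at most $k$ roots in $\mathbb{F}_q$ and in particular vanishes at most $k$ of the $n$ distinct points of $\mathcal{A}$. Hence $\bm u_f$ and $\bm u_g$ agree in at most $k$ coordinates, giving $d(\bm u_f,\bm u_g) \ge n-k$; minimizing over $g$ yields $d(\bm u_f,C) \ge n-k$. Together with the upper bound this gives $d(\bm u_f,C) = n-k = \rho$, which simultaneously proves $\bm u_f$ is a deep hole (part (ii)) and furnishes the matching lower bound for the covering radius (completing part (i)).

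The argument is elementary once the degree bookkeeping is arranged; the one genuinely structural point — and the only place the twist enters — is the verification that $ax^k \notin \mathcal{S}_{k,\theta}$, i.e. that adjoining a pure $ax^k$ term to a codeword polynomial raises its "effective degree" beyond what the twisted space permits. I expect this to be the main obstacle only in the sense of being the conceptual crux; the degenerate subcase $a+\theta h_{k-1}=0$ merely drops the degree to $k-1$, which strengthens the bound and therefore requires no separate case analysis.
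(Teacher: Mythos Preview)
Your argument is correct. It differs from the paper's organization, though the underlying idea is close.

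The paper proves (i) and (ii) separately. For (i), it uses the redundancy bound for the upper bound (as you do), but obtains the lower bound via the supercode lemma: since ${\rm TRS}_k(\mathcal{A},\theta) \subseteq {\rm RS}_{k+1}(\mathcal{A})$, one has $\rho({\rm TRS}_k) \ge d({\rm RS}_{k+1}) = n-k$. For (ii), the paper invokes a structural criterion (its Proposition~5): $\bm u$ is a deep hole iff the augmented matrix $\begin{pmatrix} G \\ \bm u \end{pmatrix}$ generates an $[n,k+1]$-MDS code; then it observes that augmenting by $\bm u_{ax^k}$ yields a matrix row-equivalent to the generator of ${\rm RS}_{k+1}(\mathcal{A})$, which is MDS.

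Your route bypasses both the supercode lemma and the MDS-extension criterion by directly counting roots of $f-g$; this is more self-contained and in fact unpacks exactly what the supercode argument would give for the specific witness $\bm u_f \in {\rm RS}_{k+1}\setminus {\rm TRS}_k$. The paper's approach, by contrast, isolates Proposition~5 as a reusable tool, which it exploits repeatedly later when classifying all deep holes. Both arguments hinge on the same structural fact you identified, namely $ax^k \notin \mathcal{S}_{k,\theta}$ for $a\neq 0$.
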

 
 Furthermore,  we are committed to completely determining the deep holes of the full-length twisted RS code ${\rm TRS}_k(\mathbb{F}_q, \theta)$. For 
$q$ being even or odd, we use different methods respectively to settle its deep holes.

 For even $q$, inspired by the polynomial method outlined in \cite{K17, H19}, we provide a key necessary condition for a vector to be a deep hole of ${\rm TRS}_k(\mathbb{F}_q, \theta)$. Specifically,  a vector $\bm u$ being a deep hole will lead to a corresponding multivariate polynomial vanishing. By the Schwartz-Zippel Lemma,  we can deduce that the polynomial is indeed the zero polynomial when the dimension $k$ satisfies $\frac{3q-4}{4} \leq k \leq q-4$. Combining this with some techniques involving Gauss sums, we obtain the result that the standard deep holes given in Theorem \ref{thm1} are all the deep holes of ${\rm TRS}_k(\mathbb{F}_q, \theta)$ for $\frac{3q-4}{4} \leq k \leq q-4$. For the boundary cases $k=q-3, q-2$, and $q-1$, by using some results on equations over finite fields and character sums, we also completely determine the deep holes of ${\rm TRS}_k(\mathbb{F}_q, \theta)$. We summary the results for even $q$ as follows.

 \begin{theorem}\label{even}
Suppose $q$ is even with $q \geq 8$, $\bm u \in \mathbb{F}^q_q$ and ${\rm Tr}(\cdot)$ is the absolute trace function from $\mathbb{F}_q$ to $\mathbb{F}_2$.
\begin{description}
    \item[(i)] If $\frac{3q-4}{4}\leq k \leq q-4$ or $k=q-1$, then $\bm u$ is a deep hole of ${\rm TRS}_k(\mathbb{F}_q, \theta)$ if and only if $\bm u$ is generated by Theorem \ref{thm1}. (See Theorems \ref{even} and \ref{thm7})
    \item[(ii)] If $k=q-2$, then $\bm u$ is a deep hole of ${\rm TRS}_k(\mathbb{F}_q, \theta)$ if and only if $\bm u$ is generated by  $w_0x^{q-1}+w_1x^{q-2}+f_{q-2,\theta}(x)$ with $w_0=0, w_1 \neq 0$ or $w_0 \neq 0$ and ${\rm Tr}(\frac{w_1\theta}{w_0})=1$, where $f_{q-2,\theta}(x) \in \mathcal{S}_{q-2, \theta}$. (See Theorem \ref{thm7})
    \item[(iii)] If $k=q-3$, then $\bm u$ is a deep hole of ${\rm TRS}_k(\mathbb{F}_q, \theta)$ if and only if $\bm u$ is given by  Theorem \ref{thm1} or generated by $a(x^{q-2}+\theta^{-1}x^{q-3})+f_{q-3,\theta}(x)$ with $a \neq 0$, $f_{q-3,\theta}(x) \in \mathcal{S}_{q-3, \theta}$ and $2 \nmid m$. (See Theorem \ref{thm7})
\end{description}
\end{theorem}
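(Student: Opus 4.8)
The plan is to prove all three parts through one reduction that converts the deep-hole property into a solvability question for a single polynomial equation indexed by a \emph{small} complementary set, and then to analyze that equation by Gauss sums in the main range (and for $k=q-2$) and by direct counting of solutions of an equation over $\mathbb{F}_q$ in the boundary cases. First I would normalize: since adding a codeword preserves the error distance, I may assume the generating polynomial of $\bm{u}$ is $u(x)=\sum_{i=k}^{q-1}u_i x^i$, because $\{x^k,x^{k+1},\dots,x^{q-1}\}$ is a complement of $\mathcal{S}_{k,\theta}$ in the space of polynomials of degree $\le q-1$; in this normalization the standard deep holes of Theorem \ref{thm1} are exactly those with $u(x)=u_k x^k$, $u_k\neq 0$. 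By Theorem \ref{thm1}(i) the covering radius is $q-k$, so $\bm{u}$ is a deep hole iff no codeword agrees with $u$ in $k+1$ positions, i.e.\ for every $(k+1)$-subset $T\subseteq\mathbb{F}_q$ the restriction $u|_T$ is not an evaluation of an element of $\mathcal{S}_{k,\theta}$. Writing $[x^j]p$ for the coefficient of $x^j$ in $p$, we have $\mathcal{S}_{k,\theta}=\{p:\deg p\le k,\ [x^k]p=\theta[x^{k-1}]p\}$, and evaluation of degree-$\le k$ polynomials on $T$ is bijective; hence membership is governed by one linear functional. Letting $c_k(T),c_{k-1}(T)$ be the two top interpolation coefficients of $u|_T$ and using $\sum_{\alpha\in T}\alpha^{j}/\prod_{\beta\in T,\beta\neq\alpha}(\alpha-\beta)=h_{j-k}(T)$ (complete homogeneous symmetric functions), the criterion becomes $L^{\theta}_T(u):=c_k(T)-\theta c_{k-1}(T)\neq 0$ for all such $T$, an explicit symmetric polynomial in the elements of $T$.

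The decisive simplification is to pass to $S=\mathbb{F}_q\setminus T$, which has size $q-k-1$, small because $k$ is large. From $\prod_{\alpha\in T}(x-\alpha)=\frac{x^q-x}{\prod_{\beta\in S}(x-\beta)}$ one gets $e_j(T)=(-1)^jh_j(S)$ for the relevant $j$, and the criterion collapses to a single polynomial equation $\Phi_u(S)=0$ in the $q-k-1$ variables of $S$: writing $u-g=\frac{x^q-x}{\prod_{\beta\in S}(x-\beta)}\,w(x)$ and matching top coefficients determines $w$ from $u$ and $S$, leaving the twist relation $[x^k]-\theta[x^{k-1}]$ as the one surviving equation. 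Thus $\bm{u}$ is a deep hole exactly when $\Phi_u(S)\neq 0$ for every $(q-k-1)$-subset $S$, and $\Phi_u$ is constant (equal to $u_k$) precisely for the standard holes. For the main range $\frac{3q-4}{4}\le k\le q-4$, equivalently $4\le q-k\le \frac q4+1$, I would count the subsets with $\Phi_u(S)=0$ by additive characters, $N=\frac1q\sum_{a\in\mathbb{F}_q}\sum_{S}\psi(a\Phi_u(S))$, with main term $\binom{q}{\,q-k-1}/q$. Converting the inner sums, via inclusion--exclusion on coincident coordinates, into complete exponential sums of $\Phi_u$ in $q-k-1$ free variables, the even-characteristic top-degree part is a quadratic form, so these are evaluated by Gauss sums, while the Schwartz--Zippel lemma bounds the lower-dimensional coincidence corrections and, in the stated range, forces the obstruction polynomial attached to any non-standard candidate to be identically zero. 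The boundary $q-k\le\frac q4+1$ is exactly where the Gauss-sum main term dominates the error, giving $N>0$ unless $u_{k+1}=\cdots=u_{q-1}=0$; this proves (i) for $\frac{3q-4}{4}\le k\le q-4$, and $k=q-1$ is immediate since the only $(k+1)$-subset is $T=\mathbb{F}_q$, on which $u_{q-1}x^{q-1}\notin\mathrm{ev}(\mathcal{S}_{q-1,\theta})$.

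For the boundary cases the complementary sets are tiny, so $\Phi_u(S)=0$ is solved explicitly. For $k=q-2$ (part (ii)) the sets $S=\{\beta\}$ give $e_1(T)=-\beta$ and $e_2(T)=\beta^2$, so $\Phi_u(\beta)=0$ is the Artin--Schreier equation $\theta\beta^2+\beta=w_1/w_0$; the substitution $y=\theta\beta$ turns it into $y^2+y=\theta w_1/w_0$, solvable in $\mathbb{F}_q$ iff ${\rm Tr}(\theta w_1/w_0)=0$. Hence $\bm{u}$ is a deep hole iff $w_0=0$ with $w_1\neq 0$, or $w_0\neq 0$ and ${\rm Tr}(w_1\theta/w_0)=1$, which is (ii). For $k=q-3$ (part (iii)) the sets are pairs $S=\{\beta_1,\beta_2\}$ and $\Phi_u(S)=0$ is a system in $\beta_1,\beta_2$; I would reduce it to a single equation over $\mathbb{F}_q$ whose solution count is controlled by a quadratic character sum, and show that the family $a(x^{q-2}+\theta^{-1}x^{q-3})+f_{q-3,\theta}$ yields an unsolvable equation, hence a deep hole, exactly when $2\nmid m$, recovering the classical even-$q$, $k=q-3$ exception.

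The main obstacle is the uniform character-sum estimate of the second paragraph: one must pin down the leading quadratic form of $\Phi_u$, evaluate its Gauss sum in characteristic $2$ (where the quadratic-form/trace dichotomy is delicate), and simultaneously bound, via Schwartz--Zippel, every inclusion--exclusion correction from coincident coordinates, so that $N$ is provably positive for \emph{every} non-standard coefficient vector throughout $\frac{3q-4}{4}\le k\le q-4$. The boundary analysis for $k=q-3$, where the parity-of-$m$ criterion emerges, together with verifying that the listed extra families are genuine deep holes, are the remaining technical pressure points.
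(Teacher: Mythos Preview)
Your framework is sound and essentially equivalent to the paper's: passing to the complement $S=\mathbb{F}_q\setminus T$ of size $r-1=q-k-1$ is dual to the paper's syndrome/parity-check formulation (Proposition~\ref{prop8}), and your $\Phi_u(S)$ is precisely the normalized determinant of Lemma~\ref{lem4}. Your treatment of $k=q-1$ and $k=q-2$ is correct and matches the paper.

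The genuine gap is in the main range $\frac{3q-4}{4}\le k\le q-4$. You describe a hybrid of two incompatible strategies: (A) character-sum counting of subsets with $\Phi_u(S)=0$, and (B) a Schwartz--Zippel argument forcing an ``obstruction polynomial'' to vanish identically. These do different things, and you never commit to either. Strategy (A) alone does not obviously produce the threshold $\frac{3q-4}{4}$; a Weil-type estimate on $\sum_S\psi(a\Phi_u(S))$ would typically give a range involving $\sqrt{q}$, not the exact $4(q-k)-6<q$ that the paper needs. Strategy (B) is what the paper actually uses, but the key step you are missing is this: $\Phi_u$ is a \emph{quadratic} in $x_{r-1}$ (Lemma~\ref{lem4}), and a quadratic over $\mathbb{F}_{2^m}$ need not have a root. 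The paper's decisive trick (Lemma~\ref{lem5} and the Remark following it) is to specialize $x_{r-2}=\sum_{j=1}^{r-3}x_j+\theta^{-1}$ so that the linear term vanishes, leaving $\tilde f\,x_{r-1}^2+\tilde h$, which always has a root in characteristic $2$. Only after this substitution can one build the product polynomial $P(x_1,\dots,x_{r-3})$ that vanishes on all of $\mathbb{F}_q^{r-3}$; its degree in each variable is $4(q-k)-6$, and \emph{that} is where the threshold $k\ge\frac{3q-4}{4}$ comes from via Schwartz--Zippel. Your proposal contains no analogue of this degree-one-killing substitution.

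A second omission: once $P\equiv 0$ is established, the paper's coefficient analysis (Proposition~\ref{prop9}) does \emph{not} immediately yield only the standard syndrome; a spurious family $\bm w=(0,\dots,0,w_{r-3},\theta^{-1}w_{r-3},w_{r-1})$ with $w_{r-3}\neq 0$ survives and must be eliminated separately by a nontrivial Gauss-sum argument (Proposition~\ref{prop10} and Appendix~\ref{A}). You do not anticipate this case. Finally, your sketch for $k=q-3$ is too vague: the paper's proof there is a lengthy five-case analysis with several nontrivial trace computations, and the parity-of-$m$ criterion does not emerge from a single character-sum estimate.
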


For odd $q$, the method used in the even case does not work. Instead, we adopt a different method based on Seroussi and Roth's results for the MDS extensions of RS codes. By empolying certain results concerning equations over finite fields, we prove that the standard deep holes given in Theorem \ref{thm1} are all the deep holes for $\frac{3q+3\sqrt{q}-7}{4} \leq k \leq q-4$. The deep holes of ${\rm TRS}_k(\mathbb{F}_q, \theta)$ are also completely determined for the boundary cases $k=q-3$, $q-2$ and $q-1$. Specifically, 

\begin{theorem}\label{odd}
Let $q$ be an odd prime power with $q>16$. Suppose $\bm u \in \mathbb{F}^q_q$ and $\eta$ is the quadratic character of $\mathbb{F}_q$.
\begin{description}
    \item[(i)] If $\frac{3q+3\sqrt{q}-7}{4} \leq k \leq q-4$ or $k=q-1$, then $\bm u$ is a deep hole of ${\rm TRS}_k(\mathbb{F}_q, \theta)$ if and only if $\bm u$ is generated by Theorem \ref{thm1}. (See Theorems \ref{thm9} and \ref{thm10})
    \item[(ii)] If $k=q-2$, then $\bm u$ is a deep hole of ${\rm TRS}_k(\mathbb{F}_q, \theta)$ if and only if $\bm u$ is generated by  $w_0x^{q-1}+w_1x^{q-2}+f_{q-2,\theta}(x)$ with $w_0=0, w_1 \neq 0$ or $w_0 \neq 0$, $f_{q-2,\theta}(x) \in \mathcal{S}_{q-2, \theta}$ and $\eta(w_0^2-4w_0w_1\theta)=-1$. (See Theorem \ref{thm10})
    \item[(iii)] If $k=q-3$, then $\bm u$ is a deep hole of ${\rm TRS}_k(\mathbb{F}_q, \theta)$ if and only if $\bm u$ is given by  Theorem \ref{thm1} or generated by $a(x^{q-2}+\frac{1}{3\theta}x^{q-3})+f_{q-3,\theta}(x)$ with $a \neq 0$, $f_{q-3,\theta}(x) \in \mathcal{S}_{q-3, \theta}$ and $\eta(-3)=-1$. (See Theorem \ref{thm10})
\end{description}
\end{theorem}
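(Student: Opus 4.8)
The plan is to reduce the deep-hole condition to a root-counting statement, and then to treat two regimes separately: the generic range $\frac{3q+3\sqrt q-7}{4}\le k\le q-4$ via the Seroussi--Roth theory of MDS extensions combined with the Weil bound, and the boundary cases $k\in\{q-3,q-2,q-1\}$ by an explicit ``missing coordinates'' analysis whose outcome is governed by the quadratic character $\eta$.

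First I would record the basic reduction. Let $h$ denote the generating polynomial of $\bm u$. Since $\rho({\rm TRS}_k(\mathbb{F}_q,\theta))=q-k$ by Theorem \ref{thm1}(i), being a deep hole is a property of the coset $h+\mathcal{S}_{k,\theta}$, and it follows that $\bm u_h$ is a deep hole if and only if $h-g$ has at most $k$ roots in $\mathbb{F}_q$ for every $g\in\mathcal{S}_{k,\theta}$; equivalently, no codeword agrees with $h$ on a set of $k+1$ coordinates. If the coset $h+\mathcal{S}_{k,\theta}$ contains a representative of degree $\le k$, one checks directly (as in Theorem \ref{thm1}(ii)) that $\bm u_h$ is either a codeword or a standard deep hole. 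Hence the entire content of part (i) is to prove that a coset of minimal degree $\ge k+1$ is never a deep hole in the generic range, while parts (ii), (iii) amount to pinning down the exceptional deep holes at the boundary.

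For the generic range I would rewrite the obstruction using $\prod_{\beta\in\mathbb{F}_q}(x-\beta)=x^q-x$: a codeword $g$ agreeing with $h$ on $k+1$ points is exactly a decomposition $h-g=\frac{x^q-x}{\prod_{i=1}^{s}(x-\gamma_i)}\,P(x)$ with distinct $\gamma_i$, $s\le q-k-1$, and $\deg P\le s-1$. Imposing $\deg g\le k$ together with the single twist relation (the coefficient of $x^k$ equals $\theta$ times the coefficient of $x^{k-1}$) turns the existence of such a decomposition into the solvability over $\mathbb{F}_q$ of an explicit system in the ``missing points'' $\gamma_i$ and the coefficients of $P$. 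Following the framework of Kaipa \cite{K17}, which is built on the Seroussi--Roth classification of MDS extensions of Reed--Solomon codes \cite{SR86}, this system describes an MDS extension of the underlying RS code subject to the extra twist constraint, and its solutions are parametrized by the $\mathbb{F}_q$-rational points of an explicit affine curve attached to $h$ and $\theta$. I would then count these points by the Weil bound (equivalently, by the character-sum estimates used in the even case): the main term grows with the number of admissible configurations while the error is of size $O(\sqrt q)$ with explicit constant, so the point set is nonempty --- and therefore $\bm u_h$ fails to be a deep hole --- precisely when $r=q-k$ satisfies $r\le\frac{q-3\sqrt q+7}{4}$, i.e.\ $k\ge\frac{3q+3\sqrt q-7}{4}$. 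The hard part will be exactly here: identifying the correct curve, controlling its genus and the number of its absolutely irreducible components uniformly in $\deg h$, and extracting the sharp constant $3$ in the $\sqrt q$ term, since this is what forces the threshold to be larger than the $\frac{3q-4}{4}$ obtained in the even case by Schwartz--Zippel.

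For the boundary cases $r=q-k\in\{1,2,3\}$ the curve degenerates and I would argue by hand. After reducing $h$ modulo $\mathcal{S}_{k,\theta}$ to the canonical representative $\sum_{i=0}^{r-1}w_i x^{q-1-i}$, the decomposition above has $s\le r-1$, so $P$ has very low degree and the matching of top coefficients together with the twist relation collapses to a single polynomial equation of degree $\le r-1$ in one missing point $\gamma$. For $k=q-2$ this is the quadratic $\theta w_0\gamma^2-w_0\gamma+w_1=0$, whose insolvability is equivalent to $\eta(w_0^2-4w_0w_1\theta)=-1$ (and to $w_0=0,\ w_1\neq 0$ in the degenerate linear case), giving (ii); for $k=q-3$ one obtains a low-degree equation whose relevant discriminant involves $-3$, which simultaneously produces the exceptional family $a\big(x^{q-2}+\tfrac{1}{3\theta}x^{q-3}\big)+f_{q-3,\theta}$ and the constraint $\eta(-3)=-1$, giving (iii); and for $k=q-1$ the covering radius is $1$, so every non-codeword is automatically a standard deep hole. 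Evaluating the associated character sums over $\mathbb{F}_q$ then yields the exact description of the deep-hole set in each boundary case, completing the proof of Theorem \ref{odd}.
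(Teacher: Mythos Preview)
Your overall strategy---Seroussi--Roth plus Weil for the generic range, explicit discriminant analysis at the boundary---matches the paper. But the generic-range argument, as you have written it, has a real gap: you describe the obstruction as ``the $\mathbb{F}_q$-rational points of an explicit affine curve attached to $h$ and $\theta$'' whose genus and components must be controlled uniformly in $\deg h$, and you flag this as the hard part without saying how to do it. The paper does \emph{not} control a curve depending on $h$. Instead it splits the syndrome $\bm w=(\bm w',\lambda)\in\mathbb{F}_q^{r-1}\times\mathbb{F}_q$ and applies Seroussi--Roth to $\bm w'$ alone, viewed as a syndrome for the ordinary RS code (the first $r-1$ rows of $H$ are a Vandermonde block). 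This already produces $r-2$ of the $r-1$ required points $\alpha_1,\dots,\alpha_{r-2}$; the twist enters only in the last row, so the remaining problem is to find \emph{two} more points $\alpha_{r-1},\alpha_r$ satisfying a single explicit relation. That relation is $\beta_0 XY(\xi-\theta(X+Y))+b=0$ for constants depending on $\alpha_1,\dots,\alpha_{r-2}$, and counting its solutions by a quadratic Gauss sum in $X$ followed by the Weil bound for $\sum_W\eta(Wf(W))$ with $\deg f=3$ is what yields the sharp constant $3\sqrt q$. Without this two-step reduction (RS part via Seroussi--Roth, then a fixed bivariate equation), there is no reason the ``curve attached to $h$'' should have uniformly bounded genus. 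You also need a separate lemma for the degenerate case $\bm w'^{\top}=a\bm c_{r-1}(\infty)$, which Seroussi--Roth does not cover.

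For $k=q-3$ your claim that the problem ``collapses to a single polynomial equation in one missing point $\gamma$'' is not right: $r-1=2$, so there are two missing points $\alpha,\beta$, and the paper passes to $Z_1=(\alpha+\beta)/2$, $Z_2=(\alpha-\beta)/2$ to obtain a bivariate form $3(Z_1-c)^2+Z_2^2-b$ in the critical case $w_0=0$, from which $\eta(-3)$ arises via the point count of a conic rather than a univariate discriminant. Your $k=q-2$ and $k=q-1$ analyses are correct.
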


The rest of this paper is organized as follows. In Section \ref{sec2}, we present some results on some determinants and character sums. In Section \ref{sec3}, we determine the covering radius and standard deep holes of twisted RS codes.
In Section \ref{sec4}, we present the results on the completeness of deep holes of ${\rm TRS}_k(\mathbb{F}_q, \theta)$. Section \ref{sec5} concludes the paper and provides a conjecture on the deep holes of ${\rm TRS}_k(\mathbb{F}_q, \theta)$.

\section{Preliminaries}\label{sec2}
We first list some fixed notations that will be used throughout this paper.

\begin{itemize}
    \item ${\rm Char}(\mathbb{F}_{q})$ is the characteristic of $\mathbb{F}_{q}$, and $\mathbb{F}_{q}^{*}=\mathbb{F}_{q}\setminus\{0\}$. 
    \item For  $\alpha \in \mathbb{F}_q$,  $\bm c_{n, \theta}(\alpha)=(1, \alpha, \cdots,\alpha^{n-2}, \alpha^{n-1}-\theta \alpha^{n})^\top \in \mathbb{F}^n_q$;  $\bm c_n(\alpha)=(1, \alpha, \cdots, \alpha^{n-1})^{\top} \in \mathbb{F}^n_q$; $\bm c_n(\infty)=(0, \cdots,0,1)^{\top} \in \mathbb{F}^n_q$.
    \item $V(\alpha_1, \alpha_2, \cdots, \alpha_{n})=\prod\limits_{1\leq i <j \leq n}(\alpha_j-\alpha_i)$, the determinant of the Vandermonde matrix defined by $\alpha_1, \alpha_2, \cdots, \alpha_{n}$.
    \item $\det({\bm v}_1|{\bm v}_2|\cdots|{\bm v}_n)$, the determinant of the matrix defined by $n$ column vectors ${\bm v}_1,{\bm v}_2,\cdots,{\bm v}_n $ in $\mathbb{F}_q^n$.
    \item For $f\in\mathbb{F}_q[X_1,X_2,\cdots,X_n]$, $${\rm N}(f)=|\{(X_1,\cdots,X_n)\in\mathbb{F}_q^n:f(X_1,\cdots,X_n)=0\}|.$$
\end{itemize}

\subsection{Some Determinants}
In this subsection, we present the calculations of some determinants, which will be used in our proof of main theorems.

 \begin{lemma}[{\cite[Sec. 338]{M60}}]\label{det2}
 For any $\alpha_1, \alpha_2, \cdots, \alpha_n \in \mathbb{F}_q$, we have 
  \[\det\left(\begin{array}{cccc}
		1 & 1 &\cdots & 1 \\
		\alpha_1 & \alpha_2 &\cdots & \alpha_{n} \\
		\vdots & \vdots &\ddots & \vdots \\
		\alpha_1^{n-2} & \alpha_2^{n-2} &\cdots & \alpha_{n}^{n-2} \\
		\alpha_1^{n} & \alpha_2^{n} &\cdots & \alpha_{n}^{n} \\  
	\end{array}\right)=V(\alpha_1, \alpha_2, \cdots, \alpha_{n})\sum_{i=1}^n\alpha_i,\]
 and
    \[\det\left(\begin{array}{cccc}
		1 & 1 &\cdots & 1 \\
		\alpha_1 & \alpha_2 &\cdots & \alpha_{n} \\
		\vdots & \vdots &\ddots & \vdots \\
		\alpha_1^{n-2} & \alpha_2^{n-2} &\cdots & \alpha_{n}^{n-2} \\
		\alpha_1^{n+1} & \alpha_2^{n+1} &\cdots & \alpha_{n}^{n+1} \\   
	\end{array}\right)=V(\alpha_1, \alpha_2, \cdots, \alpha_{n})\sum_{1 \leq i \leq j  \leq n}\alpha_i\alpha_j.\]
\end{lemma}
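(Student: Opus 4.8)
My plan is to treat both determinants as \emph{alternant} (generalized Vandermonde) determinants and to extract the quotient by the ordinary Vandermonde $V(\alpha_1,\dots,\alpha_n)$. Denote the two determinants by $D_1$ and $D_2$. Swapping columns $i$ and $j$ of either matrix interchanges $\alpha_i$ and $\alpha_j$ and flips the sign of the determinant, so $D_1$ and $D_2$ are alternating polynomials in $\alpha_1,\dots,\alpha_n$ and are therefore divisible by the difference product $V(\alpha_1,\dots,\alpha_n)=\prod_{i<j}(\alpha_j-\alpha_i)$. Writing $D_t=V\cdot g_t$, I first pin down the degree of the cofactor $g_t$ by a degree count: the rows of $D_1$ carry exponents $0,1,\dots,n-2,n$, so $\deg D_1=\binom{n-1}{2}+n$, while $\deg V=\binom{n}{2}$, giving $\deg g_1=1$; similarly the rows of $D_2$ carry exponents $0,1,\dots,n-2,n+1$, so $\deg g_2=2$.

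Since $D_t$ and $V$ pick up the same sign under any transposition of the $\alpha_i$, each quotient $g_t$ is invariant, i.e.\ a homogeneous symmetric polynomial. Thus $g_1$ has degree $1$ and is a scalar multiple of $e_1=\sum_i\alpha_i$, while $g_2$ lies in the two-dimensional space of homogeneous symmetric polynomials of degree $2$, spanned by $\sum_{i<j}\alpha_i\alpha_j$ and $\sum_i\alpha_i^2$. The cleanest way to identify the quotients exactly is the bialternant (Jacobi) formula for Schur polynomials: for a strictly increasing exponent vector, the alternant divided by $V$ equals the Schur polynomial $s_\lambda$, where $\lambda$ is obtained by writing the exponents in decreasing order and subtracting the staircase $(n-1,n-2,\dots,0)$. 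For $D_1$ the exponent set $\{0,\dots,n-2,n\}$ gives $\lambda=(1)$, so $g_1=s_{(1)}=\sum_i\alpha_i$; for $D_2$ the set $\{0,\dots,n-2,n+1\}$ gives $\lambda=(2)$, so $g_2=s_{(2)}=\sum_{1\le i\le j\le n}\alpha_i\alpha_j$, the complete homogeneous symmetric polynomial of degree $2$. With the increasing-row convention matching the definition of $V$, the leading constants are exactly $1$, which yields both claimed identities.

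If a self-contained argument is preferred over invoking Schur-polynomial machinery, I would fix the constant for $D_1$ by a generating-function trick: expand the $(n+1)\times(n+1)$ Vandermonde determinant in the nodes $\alpha_1,\dots,\alpha_n,x$, which equals $V(\alpha)\prod_i(x-\alpha_i)=V(\alpha)\sum_{k=0}^{n}(-1)^k e_k x^{n-k}$, and read off the coefficient of $x^{n-1}$; cofactor expansion along the last column shows this coefficient is $\pm D_1$, and matching against $-V(\alpha)e_1$ forces $D_1=V\sum_i\alpha_i$. For $D_2$ I would either match two coefficients of $g_2$ against the two basis monomials, or induct on $n$ by specializing $\alpha_n=0$ (which collapses the last column to $(1,0,\dots,0)^{\top}$ and reduces $D_2$, $V$, and $\sum_{i\le j}\alpha_i\alpha_j$ simultaneously to their $(n-1)$-variable analogues).

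The main obstacle is the degree-$2$ case. Unlike $g_1$, whose ambient space of degree-$1$ symmetric polynomials is one-dimensional so that divisibility plus the degree count already determines it up to a scalar, the quotient $g_2$ lives in a two-dimensional space; divisibility by $V$ and the degree count alone do not distinguish $\sum_{i\le j}\alpha_i\alpha_j$ from $\sum_{i<j}\alpha_i\alpha_j$. The real content is therefore to confirm that the diagonal terms $\alpha_i^2$ genuinely appear, which requires identifying a second coefficient, together with careful bookkeeping of the row-reversal sign $(-1)^{n(n-1)/2}$ and the cofactor signs in order to certify that the overall constant is $+1$.
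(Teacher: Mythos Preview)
The paper does not supply its own proof of this lemma; it is stated with a citation to Muir's treatise \cite{M60} and then used as a black box in the proof of Lemma~\ref{det3}. So there is no in-paper argument to compare against.

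Your argument is correct. The cleanest route is exactly the one you give first: both determinants are alternants, hence divisible by $V$; the degree count forces the quotients to be homogeneous symmetric of degree $1$ and $2$ respectively; and the bialternant (Jacobi--Trudi) identity identifies them as $s_{(1)}=h_1=\sum_i\alpha_i$ and $s_{(2)}=h_2=\sum_{i\le j}\alpha_i\alpha_j$. Your generating-function alternative for $D_1$ (expanding the $(n{+}1)$-node Vandermonde along its last column and reading off the $x^{n-1}$ coefficient) is also fine and entirely self-contained. The only place I would tighten the writing is the ``induct on $n$ by specializing $\alpha_n=0$'' suggestion for $D_2$: as stated, specializing $\alpha_n=0$ verifies the identity only on that hyperplane, which by itself does not pin down both coefficients of $g_2$; you would need a second specialization (or the coefficient-matching you mention) to finish. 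Since your primary Schur-polynomial argument already settles the matter, this is a minor cosmetic point.
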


 \begin{lemma}\label{det3}
 For any $\alpha_1, \alpha_2, \cdots, \alpha_n \in \mathbb{F}_q$,
\[\det\big(\bm c_{n, \theta}(\alpha_1)|\bm c_{n, \theta}(\alpha_2)|\cdots|\bm c_{n, \theta}(\alpha_{n})\big)\\	
=V(\alpha_1, \alpha_2, \cdots, \alpha_{n})\big(1-\theta\sum_{i=1}^{n}\alpha_i\big).\]
\end{lemma}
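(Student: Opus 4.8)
The goal is to compute the determinant
\[
D \triangleq \det\bigl(\bm c_{n,\theta}(\alpha_1)\mid \bm c_{n,\theta}(\alpha_2)\mid\cdots\mid \bm c_{n,\theta}(\alpha_n)\bigr),
\]
where each column is $\bm c_{n,\theta}(\alpha_i)=(1,\alpha_i,\ldots,\alpha_i^{n-2},\alpha_i^{n-1}-\theta\alpha_i^{n})^\top$. The plan is to exploit the fact that this matrix differs from the ordinary Vandermonde matrix only in its last row, so the determinant is linear in that last row. Writing the last row as $(\alpha_i^{n-1}-\theta\alpha_i^n)_i = (\alpha_i^{n-1})_i - \theta(\alpha_i^n)_i$, I would use multilinearity of the determinant in rows to split $D$ as a difference of two determinants:
\[
D = \det\begin{pmatrix} \text{rows }0,\ldots,n-2 \text{ of Vandermonde}\\ (\alpha_i^{n-1})_i\end{pmatrix}
- \theta\,\det\begin{pmatrix} \text{rows }0,\ldots,n-2 \text{ of Vandermonde}\\ (\alpha_i^{n})_i\end{pmatrix}.
\]

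The first determinant is exactly the standard Vandermonde determinant $V(\alpha_1,\ldots,\alpha_n)$, since its rows are $1,\alpha_i,\ldots,\alpha_i^{n-1}$. The second determinant is precisely the one evaluated in the first identity of Lemma~\ref{det2}: rows $1,\alpha_i,\ldots,\alpha_i^{n-2}$ followed by $\alpha_i^{n}$, whose value is $V(\alpha_1,\ldots,\alpha_n)\sum_{i=1}^n\alpha_i$. Substituting both gives
\[
D = V(\alpha_1,\ldots,\alpha_n) - \theta\,V(\alpha_1,\ldots,\alpha_n)\sum_{i=1}^n\alpha_i
= V(\alpha_1,\ldots,\alpha_n)\Bigl(1-\theta\sum_{i=1}^n\alpha_i\Bigr),
\]
which is the claimed formula.

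I expect no serious obstacle here: the entire argument is a single application of row-multilinearity followed by two known determinant evaluations, one of which is the classical Vandermonde and the other is quoted directly from Lemma~\ref{det2}. The only point requiring a little care is bookkeeping the row indices so that the split respects the shared first $n-1$ rows (powers $0$ through $n-2$) and isolates the differing last row; once that is set up correctly, the identity follows immediately. If one prefers to avoid even citing Lemma~\ref{det2}, an alternative is to recognize $D$ as a bordered/almost-alternant determinant and expand via the dual basis of symmetric functions, but the multilinear split is by far the cleanest route and reuses the already-established lemma.
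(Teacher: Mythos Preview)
Your proposal is correct and essentially identical to the paper's own proof: the paper also splits the determinant by linearity in the last row into the Vandermonde determinant minus $\theta$ times the determinant handled by the first identity of Lemma~\ref{det2}, and then combines them. There is nothing to add or change.
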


\begin{proof}
By Lemma \ref{det2}, we have
\begin{eqnarray*}
    &&\det\big(\bm c_{n, \theta}(\alpha_1)|\bm c_{n, \theta}(\alpha_2)|\cdots|\bm c_{n, \theta}(\alpha_{n})\big)\\
    &=&V(\alpha_1, \alpha_2, \cdots, \alpha_{n})-\theta\det\left(\begin{array}{ccc}
		1 & \cdots & 1 \\
		\alpha_1 & \cdots & \alpha_{n} \\
		\vdots & \ddots & \vdots \\
		\alpha_1^{n-2} & \cdots & \alpha_{n}^{n-2} \\
		\alpha_1^{n} & \cdots & \alpha_{n}^{n} \\  
	\end{array}\right)\\
 &=& V(\alpha_1, \alpha_2, \cdots, \alpha_{n})\big(1-\theta\sum_{i=1}^{n}\alpha_i\big).
\end{eqnarray*}
\end{proof}

\subsection{Character Sums}
In this subsection, we present some basic notations and results of group characters and exponential sums \cite{LN97}. Suppose $p={\rm Char}(\mathbb{F}_q)$ and $q=p^m$, the absolute trace function ${\rm Tr}(x)$  is defined by 
$$
{\rm Tr}(x)=x+x^p+x^{p^2}+\cdots+x^{p^{m-1}} .
$$

An additive character $\chi$ of $\mathbb{F}_q$ is a homomorphism from $\mathbb{F}_q$ into the multiplicative group $U$ of complex numbers of absolute value 1. For any $a \in \mathbb{F}_q$, the function
$$
\chi_a(x)=\zeta_p^{{\rm Tr}(a x)},
$$
defines an additive character of $\mathbb{F}_q$, where $\zeta_p=e^{\frac{2 \pi \sqrt{-1}}{p}}$. For $a=0$, $\chi_a(x)$ is called the trivial additive character of $\mathbb{F}_q$, for which  $\chi_0(x)=1$ for all $x \in \mathbb{F}_q$. For $a=1$,  $\chi_1(x)=\zeta_p^{{\rm Tr}(x)}$ is called the canonical additive character of $\mathbb{F}_q$. 

Characters of the multiplicative group $\mathbb{F}_q^*$ of $\mathbb{F}_q$ are called multiplicative characters of $\mathbb{F}_q$. It is known that all characters of $\mathbb{F}_q^*$ are given by
$$
\psi_i\left(\pi^j\right)=\zeta_{q-1}^{i j} \text { for } j=0,1, \ldots, q-2,
$$
where $0 \leq i \leq q-2$ and $\pi$ is a primitive element of $\mathbb{F}_q$. It is convenient to extend the definition of $\psi_i$ by setting $\psi_i(0)=0$. For $i=0$, $\psi_0(x)=1$ for all $x \in \mathbb{F}^*_q$ is called the trivial multiplicative character of $\mathbb{F}_q$.  For $i=(q-1) / 2$, the multiplicative character $\psi_{(q-1) / 2}$ is called the quadratic character of $\mathbb{F}_q$, and is denoted by $\eta$ in this paper. That is $\eta(x)=1$ if $x \in \mathbb{F}_q^*$ is a square; $\eta(x)=-1$ if $x \in \mathbb{F}_q^*$ is not a square. 

Let $\psi$ be a multiplicative character and $\chi$ an additive character of $\mathbb{F}_q$. The $\operatorname{Gauss} \operatorname{sum} G(\psi, \chi)$ is defined by
$$
G(\psi, \chi)=\sum_{x \in \mathbb{F}_q^*} \psi(x) \chi(x).
$$

\begin{proposition}[\cite{LN97}]\label{gauss}
\begin{description}
    \item[(i)]$G(\psi, \chi_{ab})=\overline{\psi(a)}G(\psi, \chi_b)$, for any $a \in \mathbb{F}_q^*$ and $b \in \mathbb{F}_q$.
    \item[(ii)] If $\psi \neq \psi_0$ and $\chi \neq \chi_0$, then $|G(\psi, \chi)|=\sqrt{q}$.
\end{description}
    
\end{proposition}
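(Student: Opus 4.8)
The plan is to establish both parts directly from the definition $G(\psi,\chi)=\sum_{x\in\mathbb{F}_q^*}\psi(x)\chi(x)$ together with the standard orthogonality relations $\sum_{w\in\mathbb{F}_q}\chi(w)=0$ for a nontrivial additive character and $\sum_{z\in\mathbb{F}_q^*}\psi(z)=0$ for a nontrivial multiplicative character.

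For part (i), I would first note that from $\chi_c(x)=\zeta_p^{{\rm Tr}(cx)}$ one has $\chi_{ab}(x)=\zeta_p^{{\rm Tr}(abx)}=\chi_b(ax)$ for every $a\in\mathbb{F}_q^*$ and $b\in\mathbb{F}_q$. Substituting this into the definition gives $G(\psi,\chi_{ab})=\sum_{x\in\mathbb{F}_q^*}\psi(x)\chi_b(ax)$. The key step is the change of variable $y=ax$: since $a\neq 0$, as $x$ runs over $\mathbb{F}_q^*$ so does $y$, and $x=a^{-1}y$. Using multiplicativity, $\psi(x)=\psi(a^{-1})\psi(y)$, and because $|\psi(a)|=1$ we have $\psi(a^{-1})=\psi(a)^{-1}=\overline{\psi(a)}$. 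Pulling the constant $\overline{\psi(a)}$ out of the sum yields $G(\psi,\chi_{ab})=\overline{\psi(a)}\sum_{y\in\mathbb{F}_q^*}\psi(y)\chi_b(y)=\overline{\psi(a)}G(\psi,\chi_b)$, as claimed.

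For part (ii), the strategy is to compute $|G(\psi,\chi)|^2=G(\psi,\chi)\overline{G(\psi,\chi)}$ and show it equals $q$. Writing $\overline{G(\psi,\chi)}=\sum_{y\in\mathbb{F}_q^*}\overline{\psi(y)}\,\overline{\chi(y)}$ and using $\overline{\psi(y)}=\psi(y^{-1})$ and $\overline{\chi(y)}=\chi(-y)$, the product becomes a double sum $\sum_{x,y\in\mathbb{F}_q^*}\psi(xy^{-1})\chi(x-y)$. The decisive step is the substitution $x=yz$ for each fixed $y$, under which $xy^{-1}=z$ and $x-y=y(z-1)$; this separates the two variables and produces $\sum_{z\in\mathbb{F}_q^*}\psi(z)\sum_{y\in\mathbb{F}_q^*}\chi\big(y(z-1)\big)$. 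I would then evaluate the inner additive sum by cases: when $z=1$ it equals $q-1$, and when $z\neq 1$ the argument $y(z-1)$ ranges over $\mathbb{F}_q^*$, so by $\sum_{w\in\mathbb{F}_q}\chi(w)=0$ (valid since $\chi\neq\chi_0$) it equals $-1$. Substituting back and invoking $\sum_{z\in\mathbb{F}_q^*}\psi(z)=0$ (valid since $\psi\neq\psi_0$), so that $\sum_{z\neq 1}\psi(z)=-1$, gives $|G(\psi,\chi)|^2=(q-1)-\big(\sum_{z\neq 1}\psi(z)\big)=(q-1)+1=q$, whence $|G(\psi,\chi)|=\sqrt{q}$.

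The character arithmetic is routine bookkeeping; the main obstacle, and the only genuinely clever point, is the change of variable $x=yz$ in part (ii), which decouples the double sum so that the additive and multiplicative orthogonality relations can be applied independently. Everything else reduces to manipulating the definitions.
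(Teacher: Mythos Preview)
Your proof is correct. The paper does not actually prove this proposition; it simply cites it from Lidl--Niederreiter \cite{LN97} as a background fact, so there is no ``paper's own proof'' to compare against. What you have written is exactly the standard argument one finds in that reference (Theorems 5.12 and 5.11): the change of variable $y=ax$ for part (i), and for part (ii) the expansion of $|G(\psi,\chi)|^2$ as a double sum followed by the substitution $x=yz$ and separate application of additive and multiplicative orthogonality.
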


 \begin{proposition}[{\cite[Theorems 5.32, 5.33 and 5.34]{LN97}}]\label{prop2}
     Let $\chi\neq \chi_0$ be a nontrivial additive character of $\mathbb{F}_q$.
     \begin{description}
         \item[(i)] Suppose $n \in \mathbb{N}$, and $d=\operatorname{gcd}(n, q-1)$. Then
         $$\left|\sum_{c \in \mathbb{F}_q} \chi\left(a c^n+b\right)\right| \leqslant(d-1) q^{1 / 2}$$
for any $a, b \in \mathbb{F}_q$ with $a \neq 0$.
\item[(ii)] Suppose $q$ odd, let $f(x)=a_2 x^2+a_1 x+a_0 \in \mathbb{F}_q[x]$ with $a_2 \neq 0$. Then
$$
\sum_{c \in \mathbb{F}_q} \chi(f(c))=\chi\left(a_0-a_1^2\left(4 a_2\right)^{-1}\right) \eta\left(a_2\right) G(\eta, \chi),
$$
where $\eta$ is the quadratic character of $\mathbb{F}_q$.
\item[(iii)] Suppose $q$ is even and $b\in \mathbb{F}_q^*$.  Let $f(x)=a_2 x^2+a_1 x+a_0 \in \mathbb{F}_q[x]$ with $a_2 \neq 0$. Then
$$
\sum_{c \in \mathbb{F}_q} \chi_b(f(c))=\begin{cases}\chi_b(a_0)q & {\rm if~} ba_2+b^2a_1^2=0,\\
0 &{\rm otherwise}.
\end{cases}
$$
\end{description}
     
 \end{proposition}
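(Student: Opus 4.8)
The plan is to derive all three parts from the Gauss-sum facts already recorded in Proposition \ref{gauss}, together with the orthogonality relation $\sum_{c\in\mathbb{F}_q}\chi(ac)=0$ valid for any nontrivial additive character $\chi$ and any $a\neq 0$. These are classical identities, so essentially all the work lies in organizing the reductions cleanly and tracking the exact constants.

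For part (i), I would first factor out the constant term, writing $\sum_{c}\chi(ac^n+b)=\chi(b)\sum_{c}\chi(ac^n)$, so that it suffices to control $\sum_{c}\chi(ac^n)$. The map $c\mapsto c^n$ sends $\mathbb{F}_q^*$ onto the subgroup $H$ of $d$-th powers, hitting each element of $H$ exactly $d$ times, where $d=\gcd(n,q-1)$. Writing the indicator of $H$ as $\frac1d\sum_{j=0}^{d-1}\psi^j$ for a multiplicative character $\psi$ of order $d$, I would rewrite
\[
\sum_{c\in\mathbb{F}_q}\chi(ac^n)=1+\sum_{j=0}^{d-1}\sum_{y\in\mathbb{F}_q^*}\psi^j(y)\chi(ay),
\]
where the leading $1$ is the $c=0$ term. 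The $j=0$ summand equals $\sum_{y\neq 0}\chi(ay)=-1$, which cancels the leading $1$, and each of the remaining $d-1$ summands is a Gauss sum $G(\psi^j,\chi_a)$ of modulus $\sqrt q$ by Proposition \ref{gauss}(ii). The triangle inequality then yields the bound $(d-1)q^{1/2}$.

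For part (ii), since $q$ is odd I would complete the square, $a_2x^2+a_1x+a_0=a_2\big(x+a_1(2a_2)^{-1}\big)^2+\big(a_0-a_1^2(4a_2)^{-1}\big)$, and substitute $z=x+a_1(2a_2)^{-1}$ (a bijection of $\mathbb{F}_q$) to obtain
\[
\sum_{c}\chi(f(c))=\chi\!\big(a_0-a_1^2(4a_2)^{-1}\big)\sum_{z\in\mathbb{F}_q}\chi(a_2z^2).
\]
Counting that $z^2=y$ has $1+\eta(y)$ solutions reduces the last sum to $G(\eta,\chi_{a_2})=\eta(a_2)G(\eta,\chi)$ once the trivial-character contribution again cancels the $z=0$ term, which is precisely the claimed formula.

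For part (iii) the square cannot be completed, so I would exploit characteristic two directly. The Frobenius identity ${\rm Tr}(x^2)={\rm Tr}(x)$ combined with the bijectivity of squaring linearizes the quadratic term: writing $\sqrt{ba_2}$ for the unique square root, ${\rm Tr}(ba_2c^2)={\rm Tr}(\sqrt{ba_2}\,c)$, so that
\[
\chi_b(f(c))=\chi_b(a_0)\,\chi_1\big((\sqrt{ba_2}+ba_1)c\big).
\]
Summing over $c$ and invoking orthogonality gives $\chi_b(a_0)q$ when the linear coefficient $\sqrt{ba_2}+ba_1$ vanishes and $0$ otherwise; squaring the vanishing condition turns it into $ba_2+b^2a_1^2=0$, signs being irrelevant in characteristic two. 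I expect the main subtlety across the three parts to be the bookkeeping of exact constants—most notably in part (i), where obtaining $(d-1)\sqrt q$ rather than a crude $d\sqrt q$ hinges on the cancellation of the $c=0$ term against the trivial-character term—and the observation that the Frobenius identity driving part (iii) has no analogue in odd characteristic, which is exactly why parts (ii) and (iii) require separate treatments.
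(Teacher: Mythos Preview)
Your proposal is correct and follows the standard arguments (essentially those in Lidl--Niederreiter): the Gauss-sum decomposition for (i), completing the square for (ii), and the Frobenius linearization ${\rm Tr}(y^2)={\rm Tr}(y)$ for (iii), with the constants tracked accurately. The paper itself does not prove this proposition; it is stated as a citation from \cite[Theorems 5.32, 5.33 and 5.34]{LN97}, so there is no in-paper proof to compare against.
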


\begin{proposition}[{\cite[Theorem 5.41]{LN97}}]\label{prop3}
Let $\psi$ be a multiplicative character of $\mathbb{F}_q$ of order $m>1$ and let $f \in \mathbb{F}_q[x]$ be a monic polynomial of positive degree that is not an $m$-th power of a polynomial. Let $d$ be the number of distinct roots of $f$ in its splitting field over $\mathbb{F}_q$. Then for every $a \in \mathbb{F}^*_q$, we have
$$
\left|\sum_{c \in \mathbb{F}_q} \psi(a f(c))\right| \leqslant(d-1) q^{1 / 2}.
$$	
\end{proposition}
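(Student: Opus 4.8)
The plan is to recognize this as Weil's bound for multiplicative character sums, and to prove it by interpreting $S := \sum_{c \in \mathbb{F}_q}\psi(af(c))$ as a Frobenius trace on the cohomology of the Kummer cover $C: y^m = af(x)$. In this picture the Riemann Hypothesis for curves (Weil's theorem) bounds each contributing Frobenius eigenvalue by $\sqrt q$, while a conductor computation shows there are exactly $d-1$ of them. Note first that, since $\psi$ has order $m$, we have $m \mid q-1$, so $p={\rm Char}(\mathbb{F}_q)$ does not divide $m$; the entire situation is therefore tame, which is what makes the eigenvalue count clean.

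First I would pass to all finite extensions. For $s\ge 1$ let $\psi_s=\psi\circ {\rm N}_{\mathbb{F}_{q^s}/\mathbb{F}_q}$ be the character lifted through the norm (so $\psi_s$ again has order $m$), and set $S_s=\sum_{c\in\mathbb{F}_{q^s}}\psi_s(af(c))$, with $S_1=S$. The elementary identity $\#\{y\in\mathbb{F}_{q^s}:y^m=b\}=\sum_{j=0}^{m-1}\psi_s^{\,j}(b)$ (using $\psi_s^{0}(b)=1$ for all $b$ and $\psi_s^{\,j}(0)=0$ for $1\le j\le m-1$) expresses the number of affine $\mathbb{F}_{q^s}$-points of $C$ as $q^s+\sum_{j=1}^{m-1}S_s^{(j)}$, where $S_s^{(j)}$ is the sum built from $\psi^j$. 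Packaging the sums for the relevant character $\psi$ into the $L$-function $L(t)=\exp\!\big(\sum_{s\ge 1}S_s^{(1)}\,t^s/s\big)$, I would use the hypothesis that $f$ is not an $m$-th power: this is exactly the condition that the rank-one Kummer sheaf $\mathcal{L}_\psi(af)$ be geometrically nonconstant, equivalently that the component of $C$ detected by $\psi$ be geometrically irreducible, so that $L(t)$ is a genuine polynomial rather than a trivial factor.

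The analytic heart is then two facts about $L(t)=\prod_i(1-\omega_i t)$. First, $\deg L=d-1$: this is the Euler--Poincar\'e (Grothendieck--Ogg--Shafarevich) count for the tame rank-one sheaf $\mathcal{L}_\psi(af)$ on $\mathbb{A}^1$, whose compactly supported cohomology satisfies $H^0_c=H^2_c=0$ (by the geometric nonconstancy noted above) and $\dim H^1_c=-\chi_c=d-1$, the $d$ entering through the distinct roots of $f$. Second, by Weil's theorem every reciprocal root satisfies $|\omega_i|=\sqrt q$. Since the Grothendieck--Lefschetz trace formula and the shape of $L$ give $S=S_1=-\sum_i\omega_i$, the triangle inequality yields $|S|\le(d-1)\sqrt q$, as claimed.

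The main obstacle is genuinely the input from Weil's theorem: the equality $|\omega_i|=\sqrt q$ is the deep fact, and extracting the \emph{sharp} constant $d-1$ (rather than merely $O(\sqrt q)$) requires both the precise conductor/Euler-characteristic computation and the vanishing of $H^0_c$ and $H^2_c$, where the ``$f$ is not an $m$-th power'' hypothesis is indispensable. If one prefers to avoid \'etale cohomology altogether, the alternative is Stepanov's elementary method (in the Stepanov--Schmidt form), which establishes the same estimate by constructing auxiliary polynomials that vanish to high order at the rational points of $C$; that route is self-contained, but the combinatorial bookkeeping needed to recover the exact coefficient $d-1$ is the delicate step.
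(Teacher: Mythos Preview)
The paper does not prove this proposition at all: it is quoted verbatim as Theorem~5.41 from Lidl--Niederreiter and used as a black box in the proofs of Lemma~\ref{lem9} and Theorem~\ref{thm10}. So there is no ``paper's approach'' to compare against.

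Your sketch via the Kummer cover and the Grothendieck--Lefschetz trace formula is the standard cohomological proof and is essentially correct. Two small refinements: first, what Deligne's Weil~II gives for the Frobenius eigenvalues on $H^1_c$ is $|\omega_i|\le\sqrt{q}$, not necessarily equality (purity can fail on compactly supported cohomology when the sheaf extends across $\infty$, i.e.\ when $m\mid\deg f$); but the inequality is all you need for the bound. Second, your justification that the ``not an $m$-th power'' hypothesis forces geometric nonconstancy of $\mathcal{L}_\psi(af)$ is correct, and it is worth noting that for \emph{monic} $f$ of positive degree, being an $m$-th power over $\mathbb{F}_q$ and over $\bar{\mathbb{F}}_q$ are equivalent, so the hypothesis as stated suffices.

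For context, the reference you are reproducing (Lidl--Niederreiter, Chapter~5) actually proves this bound by the elementary Stepanov--Schmidt method you mention at the end, not via \'etale cohomology. So your primary route is more conceptual than the cited source, while your ``alternative'' is in fact what the cited source does.
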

     
\begin{proposition}[{\cite[Lemma 6.24]{LN97}} ]\label{prop4}
     For odd $q$, let $b\in \mathbb{F}_q,a_1,a_2\in\mathbb{F}^*_q$, and $\eta$ be the quadratic character of $\mathbb{F}_q$. Then, ${\rm N}(a_1X^2+a_2Y^2-b)=q+v(b)\eta(-a_1a_2),$
     where $v(0)=q-1$ and $v(b)=-1$ for $b\in \mathbb{F}^*_q$.
\end{proposition}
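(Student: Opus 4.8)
The plan is to count the solutions by additive-character orthogonality and then reduce everything to the quadratic Gauss sum $G(\eta,\chi_1)$, where $\chi_1$ denotes the canonical additive character. Using the fact that $\frac1q\sum_{t\in\mathbb{F}_q}\chi_1(tz)$ equals $1$ when $z=0$ and $0$ otherwise, I would write
\begin{equation*}
{\rm N}(a_1X^2+a_2Y^2-b)=\frac1q\sum_{t\in\mathbb{F}_q}\Big(\sum_{X\in\mathbb{F}_q}\chi_1(ta_1X^2)\Big)\Big(\sum_{Y\in\mathbb{F}_q}\chi_1(ta_2Y^2)\Big)\chi_1(-tb).
\end{equation*}
The term $t=0$ contributes $\frac1q\cdot q^2=q$, so the task reduces to evaluating the contribution of the $q-1$ nonzero values of $t$.

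For each $t\neq 0$ I would apply Proposition \ref{prop2}(ii) with vanishing linear and constant terms, which gives $\sum_X\chi_1(ta_1X^2)=\eta(ta_1)G(\eta,\chi_1)$ and similarly for the $Y$-sum. Since $t\neq 0$ makes $\eta(t^2)=1$, the product of the two Gauss-sum factors equals $\eta(a_1a_2)\,G(\eta,\chi_1)^2$, which is independent of $t$. Thus the nonzero-$t$ part of ${\rm N}$ becomes $\frac1q\,\eta(a_1a_2)G(\eta,\chi_1)^2\sum_{t\in\mathbb{F}^*_q}\chi_1(-tb)$.

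The key identity I would then establish is $G(\eta,\chi_1)^2=\eta(-1)q$. This combines Proposition \ref{gauss}(ii), which yields $|G(\eta,\chi_1)|^2=q$, with the conjugation relation $\overline{G(\eta,\chi_1)}=\eta(-1)G(\eta,\chi_1)$, obtained by substituting $x\mapsto-x$ in the defining sum and using that the quadratic character $\eta$ is real. Substituting, the factor $\frac1q\eta(a_1a_2)G(\eta,\chi_1)^2$ collapses to $\eta(-a_1a_2)$. Finally, $\sum_{t\in\mathbb{F}^*_q}\chi_1(-tb)$ equals $q-1$ for $b=0$ and $-1$ for $b\neq0$ by one more orthogonality computation, which is precisely $v(b)$; adding back the $t=0$ term gives ${\rm N}=q+v(b)\eta(-a_1a_2)$.

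I expect the only delicate point to be pinning down the sign in $G(\eta,\chi_1)^2=\eta(-1)q$; everything else is bookkeeping with orthogonality relations. As a cross-check, one can avoid Gauss sums entirely by slicing: the number of $X$ with $a_1X^2=s$ is $1+\eta(a_1s)$, so ${\rm N}=\sum_{s\in\mathbb{F}_q}(1+\eta(a_1s))(1+\eta(a_2(b-s)))$; the linear terms vanish and the cross term reduces to $\eta(a_1a_2)\sum_{s}\eta\big(s(b-s)\big)$, whose evaluation via completing the square produces the same case split on $b$. I would present the Gauss-sum version as the main argument, since it directly reuses Propositions \ref{gauss} and \ref{prop2}.
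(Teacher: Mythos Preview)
Your argument is correct. The orthogonality setup, the evaluation of the quadratic sums via Proposition~\ref{prop2}(ii), the identity $G(\eta,\chi_1)^2=\eta(-1)q$ (which you justify cleanly from Proposition~\ref{gauss}(ii) together with the conjugation relation), and the final case split on $b$ all go through exactly as you describe. The alternative slicing argument you sketch is also valid and is in fact closer to how Lidl--Niederreiter present it.

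Note, however, that the paper does not supply its own proof of this proposition: it is quoted directly from \cite[Lemma~6.24]{LN97} and used as a black box in the proofs of Lemma~\ref{lem8} and Theorem~\ref{thm10}. So there is no in-paper argument to compare against; your write-up simply fills in a standard proof of a cited result.
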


\section{Covering Radius and Standard Deep Holes of ${\rm TRS}_k(\mathcal{A}, \theta)$}\label{sec3}
In this section, we determine the covering radius and a class of deep holes of ${\rm TRS}_k(\mathcal{A}, \theta)$ by using the redundancy bound and generator matrix of the code.

Suppose $\mathcal{A} \subseteq \mathbb{F}_q$ with $|\mathcal{A}|=n$ and $1<  k <n$. It is obvious that the dimension of ${\rm TRS}_k(\mathcal{A}, \theta)$ is $k$. Note that ${\rm TRS}_k(\mathcal{A}, \theta)$ is contained in the Reed-Solomon code ${\rm RS}_{k+1}(\mathcal{A})$, thus the minimum distance 
\[d({\rm TRS}_k(\mathcal{A}, \theta))\geq d({\rm RS}_{k+1}(\mathcal{A}))=n-k.\]
From the definition,  ${\rm TRS}_k(\mathcal{A}, \theta)$ has a generator matrix
\[G=\left(\begin{array}{cccc}
1 & 1 &\cdots & 1 \\
\alpha_1 & \alpha_2 &\cdots & \alpha_n \\
\vdots & \vdots &\ddots & \vdots \\
\alpha_1^{k-2} & \alpha_2^{k-2} &\cdots & \alpha_n^{k-2} \\
\alpha_1^{k-1}+\theta \alpha_1^k & \alpha_2^{k-1}+\theta \alpha_2^k &\cdots & \alpha_n^{k-1}+\theta \alpha_n^k \\
\end{array}\right).\]

The covering radius of ${\rm TRS}_k(\mathcal{A}, \theta)$ then can be easily determined.
\begin{theorem}\label{thm4}
  The covering radius $\rho({\rm TRS}_k(\mathcal{A}, \theta))$ of ${\rm TRS}_k(\mathcal{A}, \theta)$ is equal to $n-k$.
\end{theorem}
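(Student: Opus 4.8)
The plan is to establish the two inequalities $\rho({\rm TRS}_k(\mathcal{A},\theta)) \le n-k$ and $\rho({\rm TRS}_k(\mathcal{A},\theta)) \ge n-k$ separately. The upper bound is the standard redundancy bound: since the code $C = {\rm TRS}_k(\mathcal{A},\theta)$ has dimension $k$, any parity-check matrix has $n-k$ rows, so every syndrome in $\mathbb{F}_q^{n-k}$ is a linear combination of at most $n-k$ columns. Hence every coset of the code contains a leader of Hamming weight at most $n-k$, which means $d(\bm u, C) \le n-k$ for every $\bm u \in \mathbb{F}^n_q$, i.e. $\rho(C) \le n-k$.

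For the lower bound I would exhibit a single vector whose error distance already equals $n-k$. The natural candidate is $\bm u_{x^k}$, the evaluation vector of the monomial $x^k$ (this is the $a=1$, $f_{k,\theta}=0$ case of the standard deep hole in Theorem \ref{thm1}(ii)). Suppose for contradiction that some codeword agrees with $\bm u_{x^k}$ in at least $k+1$ coordinates, say at $\alpha_{i_1}, \dots, \alpha_{i_{k+1}}$. Writing $g(x) = \sum_{i=0}^{k-2} f_i x^i + f_{k-1}(x^{k-1} + \theta x^k) \in \mathcal{S}_{k,\theta}$ for its generating polynomial, the difference $h(x) = x^k - g(x)$ has degree at most $k$ and vanishes at the $k+1$ distinct points $\alpha_{i_1},\dots,\alpha_{i_{k+1}}$, forcing $h \equiv 0$. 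Comparing coefficients, this requires simultaneously $1 - \theta f_{k-1} = 0$ (coefficient of $x^k$) and $f_{k-1} = 0$ (coefficient of $x^{k-1}$), which is impossible because $\theta \neq 0$. Hence no codeword agrees with $\bm u_{x^k}$ in more than $k$ positions, so $d(\bm u_{x^k}, C) \ge n-k$ and therefore $\rho(C) \ge n-k$. Combining the two bounds gives $\rho({\rm TRS}_k(\mathcal{A},\theta)) = n-k$.

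The crux of the argument — and the only place the twist $\theta \neq 0$ enters — is the coefficient comparison, which amounts to the observation that $x^k \notin \mathcal{S}_{k,\theta}$: the monomial $x^k$ can appear in a codeword's generating polynomial only tied to $x^{k-1}$ through the factor $(x^{k-1} + \theta x^k)$, so it can never be matched in $k+1$ positions by a degree-$\le k$ polynomial from $\mathcal{S}_{k,\theta}$. I do not expect a genuine obstacle here: the redundancy bound is immediate and the degree argument is elementary. One could alternatively phrase the lower bound through the generator matrix $G$ and the determinant computation of Lemma \ref{det3} (agreement on $k+1$ positions would force a vanishing $(k+1)$-minor of an augmented matrix), but the polynomial-degree formulation is cleaner and avoids any case analysis on which $k$-subsets of the columns of $G$ are linearly independent.
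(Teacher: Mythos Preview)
Your proof is correct. The upper bound matches the paper's use of the redundancy bound exactly. For the lower bound, the paper instead invokes the supercode lemma: since ${\rm TRS}_k(\mathcal{A},\theta)$ is a proper subcode of ${\rm RS}_{k+1}(\mathcal{A})$, one has $\rho({\rm TRS}_k(\mathcal{A},\theta)) \ge d({\rm RS}_{k+1}(\mathcal{A})) = n-k$. Your argument is really the same idea unpacked: you pick the specific witness $\bm u_{x^k} \in {\rm RS}_{k+1}(\mathcal{A}) \setminus {\rm TRS}_k(\mathcal{A},\theta)$ and show directly that it has distance $\ge n-k$ from every codeword via a degree count, which is exactly the proof of the supercode lemma specialized to this vector. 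The paper's route is a one-line citation; yours is self-contained and has the bonus of already establishing the $a=1$ case of Theorem~\ref{thm5}, so neither is strictly preferable.
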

	
\begin{proof}
Since $\dim_{\mathbb{F}_q}({\rm TRS}_k(\mathcal{A}, \theta))=k$, by the redundancy bound \cite[Corollary 11.1.3]{HP03}, we have $\rho({\rm TRS}_k(\mathcal{A}, \theta)) \leq n-k$. Note that ${\rm TRS}_k(\mathcal{A}, \theta)$ is a subcode of the Reed-Solomon code ${\rm RS}_{k+1}(\mathcal{A})$, then by the supercode lemma \cite[Lemma 11.1.5]{HP03}, we have $\rho({\rm TRS}_k(\mathcal{A}, \theta)) \geq d({\rm RS}_{k+1}(\mathcal{A})) =n-k$. The conclusion then follows.
\end{proof}
	
An equivalent condition under which a vector is a deep hole of ${\rm TRS}_k(\mathcal{A}, \theta)$ is given as follows.
	
\begin{proposition}\label{prop5}
Suppose $G$ is a generator matrix of ${\rm TRS}_k(\mathcal{A}, \theta)$ and $\bm u \in \mathbb{F}^n_q$, then $\bm u$ is a deep hole of ${\rm TRS}_k(\mathcal{A}, \theta)$ if and only if the matrix $G'=\left(\begin{array}{c}
			G\\
			\bm u
		\end{array}\right)$ generates an $[n, k+1]$-MDS code  over $\mathbb{F}_q$.
\end{proposition}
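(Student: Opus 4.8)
The plan is to translate both sides of the claimed equivalence into statements about minimum distances, and then to show that the minimum distance of the augmented code generated by $G'$ coincides with the error distance $d(\bm u, C)$, where $C := {\rm TRS}_k(\mathcal{A},\theta)$. On the left, by Theorem \ref{thm4} we have $\rho(C) = n-k$, so $d(\bm u, C) \le n-k$ for every $\bm u$, with equality precisely when $\bm u$ is a deep hole. On the right, the code $C'$ generated by $G'$ is an $[n,k+1]$ code exactly when $\bm u \notin C$ (so that $\operatorname{rank}(G') = k+1$), and by the Singleton bound such a code is MDS if and only if its minimum distance equals $n-(k+1)+1 = n-k$. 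I will also use the elementary fact, already recorded in Section \ref{sec3}, that $C \subseteq {\rm RS}_{k+1}(\mathcal{A})$ forces $d(C) \ge d({\rm RS}_{k+1}(\mathcal{A})) = n-k$.

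The key step is the identity $d(C') = \min\{d(C), d(\bm u, C)\}$, valid whenever $\bm u \notin C$. To prove it, I would write a generic codeword of $C'$ as $\bm c + \lambda \bm u$ with $\bm c \in C$ and $\lambda \in \mathbb{F}_q$, and split according to the value of $\lambda$. For $\lambda = 0$ the nonzero such words lie in $C \setminus \{\bm 0\}$, contributing minimum weight $d(C)$. For $\lambda \ne 0$, scaling by $\lambda^{-1}$ preserves Hamming weight, so $\operatorname{wt}(\bm c + \lambda\bm u) = \operatorname{wt}(\bm u - (-\lambda^{-1}\bm c))$; as $\bm c$ runs over $C$ the vector $-\lambda^{-1}\bm c$ runs over all of $C$, and since $\bm u \notin C$ none of these words is zero, so the minimum weight over this part is exactly $d(\bm u, C)$. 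Taking the overall minimum yields the identity. Because $d(\bm u, C) \le n-k \le d(C)$, the minimum is always attained by the first argument, whence $d(C') = d(\bm u, C)$.

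With this identity in hand the equivalence is immediate in both directions. If $\bm u$ is a deep hole, then $d(\bm u, C) = n-k > 0$, so $\bm u \notin C$, the matrix $G'$ has rank $k+1$, and $d(C') = d(\bm u, C) = n-k$, i.e. $C'$ is MDS. Conversely, if $G'$ generates an $[n,k+1]$-MDS code, then $\operatorname{rank}(G') = k+1$ forces $\bm u \notin C$, the identity gives $d(\bm u, C) = d(C') = n-k = \rho(C)$, and hence $\bm u$ is a deep hole.

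I expect the only genuine content to be the coset-weight identity $d(C') = \min\{d(C), d(\bm u, C)\}$; the subtlety there is to keep track of the hypothesis $\bm u \notin C$, so that the zero codeword is correctly excluded from the $\lambda \neq 0$ part, and to invoke both $d(\bm u, C) \le n-k$ (from the covering radius) and $d(C) \ge n-k$ (from the supercode ${\rm RS}_{k+1}(\mathcal{A})$) in order to collapse the minimum to $d(\bm u,C)$. Everything else is routine bookkeeping with the Singleton bound and the rank of $G'$.
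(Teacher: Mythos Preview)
Your proposal is correct and follows essentially the same approach as the paper: both establish the identity $d(C')=\min\{d(C),d(\bm u,C)\}$ and then use $d(C)\ge n-k=\rho(C)\ge d(\bm u,C)$ to reduce this to $d(C')=d(\bm u,C)$, from which the equivalence with MDS is immediate. Your write-up is in fact more explicit than the paper's, since you prove the coset-weight identity and handle the $\bm u\in C$ issue carefully, whereas the paper simply states the identity without justification.
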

	
\begin{proof} For convenience, denote $C={\rm TRS}_k(\mathcal{A}, \theta)$ and $C'$ as the  code generated by $G'$.
Then, $d(C')=\min\{d(C),d(\bm u,C)\}$.
By Theorem \ref{thm4}, it has $d(C)\geq n-k=\rho(C)\geq d(\bm u,C)$, hence $d(C')= d(\bm u,C)$.
So, $C'$ is an $[n,k+1]$-MDS code over $\mathbb{F}_q$ if and only if $d(\bm u,C)=d(C')=n-k$, which is equivalent to ${\bm u}$ is a deep hole of $C$. 
\end{proof}
	
By Proposition \ref{prop5}, we can obtain a class of deep holes of ${\rm TRS}_k(\mathcal{A}, \theta)$, which we call standard deep holes.
	
\begin{theorem}[Standard Deep Holes]\label{thm5}
Suppose $a \in \mathbb{F}^*_q$ , then the vector $\bm u_f$ with generating polynomial $f(x)=ax^{k}+f_{k,\theta}(x)$, where $f_{k,\theta}(x) \in \mathcal{S}_{k,\theta}$ given by Eq. \eqref{S}, is a deep hole of ${\rm TRS}_k(\mathcal{A}, \theta)$.
\end{theorem}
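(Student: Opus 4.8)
The plan is to invoke Proposition \ref{prop5}: the vector $\bm u_f$ is a deep hole of ${\rm TRS}_k(\mathcal{A},\theta)$ if and only if the augmented matrix $G'=\binom{G}{\bm u_f}$ generates an $[n,k+1]$-MDS code. So the whole task reduces to identifying the code $C'$ generated by $G'$ and checking that it is MDS.

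First I would observe that the added row is redundant modulo the code itself. Writing $f(x)=ax^k+f_{k,\theta}(x)$ with $f_{k,\theta}\in\mathcal{S}_{k,\theta}$, we have $\bm u_f=a\,\bm u_{x^k}+\bm u_{f_{k,\theta}}$, and $\bm u_{f_{k,\theta}}$ is by definition a codeword of ${\rm TRS}_k(\mathcal{A},\theta)$, i.e.\ it lies in the row space of $G$. Hence the row space of $G'$ is unchanged if we replace the last row by $a\,\bm u_{x^k}$, and since $a\in\mathbb{F}_q^*$ we may rescale it to $\bm u_{x^k}=(\alpha_1^k,\dots,\alpha_n^k)$ without altering $C'$.

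Next I would reduce the resulting matrix to a Reed--Solomon generator matrix. Its rows are the evaluation vectors of $1,x,\dots,x^{k-2}$, of $x^{k-1}+\theta x^k$, and of $x^k$. Subtracting $\theta$ times the $x^k$-row from the $(x^{k-1}+\theta x^k)$-row turns the latter into $\bm u_{x^{k-1}}$, so $C'$ is spanned by the evaluation vectors of $1,x,\dots,x^{k-1},x^k$. This is exactly ${\rm RS}_{k+1}(\mathcal{A})$. Since $k+1\le n$ and the $\alpha_i$ are distinct, these $k+1$ rows form a Vandermonde-type system and are linearly independent, so $\dim C'=k+1$ and $C'={\rm RS}_{k+1}(\mathcal{A})$ is an $[n,k+1,n-k]$-MDS code. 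Proposition \ref{prop5} then yields the claim.

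There is no serious obstacle here; the argument is essentially a change of basis of the row space. The only points requiring care are (a) verifying that $\bm u_{f_{k,\theta}}$ genuinely lies in the row space of $G$ so that it can be cleared, which is immediate from the definitions of $\mathcal{S}_{k,\theta}$ and $G$, and (b) confirming that the reduced matrix spans the full-dimensional RS code rather than degenerating. Alternatively, one could argue directly that every $(k+1)\times(k+1)$ minor of the reduced matrix is a nonzero Vandermonde determinant $V(\alpha_{i_1},\dots,\alpha_{i_{k+1}})$; this is the determinant-theoretic restatement of the same fact and would connect naturally with Lemma \ref{det3}, but the row-reduction route is cleaner.
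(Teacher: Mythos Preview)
Your proposal is correct and follows essentially the same approach as the paper: reduce to the case $f(x)=ax^{k}$ by subtracting the codeword $\bm u_{f_{k,\theta}}$, then observe that the augmented matrix is row equivalent to the generator matrix of ${\rm RS}_{k+1}(\mathcal{A})$, which is MDS, and conclude via Proposition~\ref{prop5}.
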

	
\begin{proof}
By the definition of ${\rm TRS}_k(\mathcal{A}, \theta)$,  the vector with generating polynomial $f_{k,\theta}(x)$ is a codeword of  ${\rm TRS}_k(\mathcal{A}, \theta)$. Thus we only need to prove the theorem for the case of $f(x)=ax^{k}$.
At this moment, it is easy to see that  $\left(\begin{array}{c}
			G\\
			\bm u_f
		\end{array}\right)$  is row equivalent to 
\[\left(\begin{array}{cccc}
			1  &1  &\cdots & 1 \\
			\alpha_1  &\alpha_2  &\cdots & \alpha_n \\
			\vdots & \vdots &\ddots & \vdots \\
			
			\alpha_1^{k-1}  &\alpha_2^{k-1}  &\cdots & \alpha_n^{k-1} \\
			\alpha_1^k  &\alpha_2^k  &\cdots &  \alpha_n^k \\
	\end{array}\right),\]
which generates the RS code ${\rm RS}_{k+1}(\mathcal{A})$. The conclusion then follows from Proposition \ref{prop5}.

\end{proof}

\section{On the completeness of Deep Holes of ${\rm TRS}_k(\mathbb{F}_{q}, \theta)$}\label{sec4}
In this section, we devote to presenting our main theorems on the completeness of deep holes of ${\rm TRS}k(\mathbb{F}{q}, \theta)$.
The results will be divided into two cases: when $q$ is even and when $q$ is odd. Before that, we demonstrate a criterion for a vector to be a deep hole of a linear code.

\subsection{The relation between deep holes and their syndromes} 
 Firstly, we give some results on the deep holes of a linear code and its syndrome.  If $\bm u$ is a deep hole of a linear code $C$, then for any $a \in \mathbb{F}^*_q$ and $\bm c \in C$,  both of $a \bm u$ and $\bm u+ \bm c$ are deep holes. Suppose $H$ is a parity-check matrix of $C$. Then $H\cdot(a \bm u)^\top=aH\cdot \bm u^\top$ and $H\cdot(\bm u+ \bm c)=H\cdot \bm u^\top$. 
	
\begin{proposition}[{\cite[Theorem 11.1.2]{HP03}}]\label{prop6}
Let $C$ be an $[n,k]$-linear code with a parity-check matrix $H$ and covering radius $\rho(C)$. Suppose $\bm u \in  \mathbb{F}^n_q$, then $\bm u$ is a deep hole of $C$ if and only if $H\cdot \bm u^\top$ can not be expressed as a linear combination of any $\rho(C)-1$ columns of $H$ over  $\mathbb{F}_q$.
\end{proposition}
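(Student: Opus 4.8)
The plan is to reduce the deep-hole condition to a statement about cosets and their syndromes. First I would record the standard fact that the error distance depends only on the coset of $\bm u$: since $C$ is linear, as $\bm c$ ranges over $C$ the difference $\bm u - \bm c$ ranges over the entire coset $\bm u + C$, so
\[ d(\bm u, C) = \min_{\bm c \in C} d(\bm u, \bm c) = \min_{\bm e \in \bm u + C} \mathrm{wt}(\bm e), \]
where $\mathrm{wt}(\bm e)$ denotes the number of nonzero coordinates of $\bm e$. In other words, $d(\bm u, C)$ is the minimum weight occurring in the coset $\bm u + C$.

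Next I would translate this into the language of syndromes. Writing $\bm s \triangleq H \bm u^\top$, every $\bm e \in \bm u + C$ satisfies $H \bm e^\top = \bm s$, because $H\bm c^\top = \bm 0$ for $\bm c \in C$; conversely any $\bm e$ with $H\bm e^\top = \bm s$ lies in this coset. Writing $\bm h_1, \dots, \bm h_n$ for the columns of $H$, a vector $\bm e = (e_1, \dots, e_n)$ produces the syndrome $H\bm e^\top = \sum_{i=1}^n e_i \bm h_i$, a linear combination of exactly $\mathrm{wt}(\bm e)$ columns, namely those indexed by the support of $\bm e$. Hence the minimum weight in the coset equals the least number of columns of $H$ whose linear combination can equal $\bm s$; equivalently,
\[ d(\bm u, C) \le w \iff \bm s \text{ is a linear combination of at most } w \text{ columns of } H. \]

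Finally I would assemble the equivalence. By the definition of the covering radius as a maximum, $d(\bm u, C) \le \rho(C)$ for every $\bm u$, so $\bm u$ is a deep hole precisely when $d(\bm u, C) \ge \rho(C)$, i.e. when $d(\bm u, C)$ fails to be $\le \rho(C) - 1$. By the previous step this failure happens exactly when $\bm s = H\bm u^\top$ cannot be expressed as a linear combination of any $\rho(C)-1$ columns of $H$, which is the claimed criterion.

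I expect no serious obstacle here, since this is essentially a textbook characterization; the only care needed is the bookkeeping between ``exactly $w$'' and ``at most $w$'' columns (a combination of fewer than $w$ columns is also a combination of $w$ columns, obtained by adjoining columns with zero coefficient), together with the observation that $d(\bm u,C)\le\rho(C)$ holds by the very definition of $\rho(C)$ as a maximum, which is precisely what converts the deep-hole equality $d(\bm u,C)=\rho(C)$ into the stated non-representability condition.
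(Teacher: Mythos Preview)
Your argument is correct and is exactly the standard coset-leader/syndrome reasoning behind this result. Note that the paper does not actually supply a proof of this proposition: it simply quotes it from the reference \cite[Theorem 11.1.2]{HP03}, so there is no paper-proof to compare against beyond observing that your write-up is the textbook derivation that reference would contain.
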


Throughout this paper, we denote $r=q-k$. Suppose $\mathbb{F}_q=\{\alpha_1, \alpha_2, \cdots, \alpha_q \}$. It is easy to verify that ${\rm TRS}_k(\mathbb{F}_q, \theta)$ has a parity-check matrix 
\begin{equation}\label{pc}
		H=\left(\begin{array}{ccc}
			1 & \cdots & 1 \\
			\alpha_1 & \cdots & \alpha_q \\
			\vdots & \ddots & \vdots \\
			\alpha_1^{r-2} & \cdots & \alpha_q^{r-2} \\
			\alpha_1^{r-1}-\theta \alpha_1^{r} & \cdots & \alpha_q^{r-1}-\theta\alpha_q^{r} \\
		\end{array}\right).
\end{equation}
	
The following proposition provides the relationship between the generating polynomial of a vector and its syndrome.

\begin{proposition}\label{prop7}
Suppose ${\bm w}=(w_0,w_1,\cdots,w_{r-1})\in\mathbb{F}_q^r$  and $H$ is given by Eq. (\ref{pc}). Then, the solutions of the linear system $H \cdot \bm u^\top=\bm w^\top$  are the vectors $\bm u_f$ generated by $f(x)=-\sum_{i=0}^{r-1}w_ix^{q-1-i}+f_{k,\theta}(x)$, where $f_{k,\theta}(x)\in \mathcal{S}_{k,\theta}$.   
\end{proposition}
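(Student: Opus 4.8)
The plan is to turn the linear system $H\cdot\bm u^\top=\bm w^\top$ into a set of conditions on the coefficients of the generating polynomial, exploiting the fact that over the \emph{full} field each syndrome entry is a power sum. Write the generating polynomial of $\bm u$ as $f(x)=\sum_{t=0}^{q-1}f_tx^t$; this is legitimate since $\mathcal{A}=\mathbb{F}_q$ forces $\deg f\le q-1$ and $f\mapsto\bm u_f$ is a bijection. For $0\le i\le r-2$ the $i$-th entry of $H\cdot\bm u_f^\top$ is $\sum_{\alpha\in\mathbb{F}_q}\alpha^i f(\alpha)=\sum_{t=0}^{q-1}f_t\sum_{\alpha\in\mathbb{F}_q}\alpha^{i+t}$, and the last ($i=r-1$) entry splits as $\sum_{\alpha}(\alpha^{r-1}-\theta\alpha^r)f(\alpha)=\sum_{\alpha}\alpha^{r-1}f(\alpha)-\theta\sum_{\alpha}\alpha^{r}f(\alpha)$. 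Thus everything reduces to evaluating $\sum_{\alpha\in\mathbb{F}_q}\alpha^{s}$.

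The first key step is the classical power-sum identity: $\sum_{\alpha\in\mathbb{F}_q}\alpha^s=-1$ when $s$ is a positive multiple of $q-1$, and $\sum_{\alpha}\alpha^s=0$ otherwise (including $s=0$, where the sum is $q=0$ in $\mathbb{F}_q$). I would then check which exponents $s=i+t$ (respectively $r-1+t$ and $r+t$), with $0\le t\le q-1$, are positive multiples of $q-1$. The crucial bookkeeping is that, because $k\ge2$ forces $r\le q-2$, the only relevant multiple is $q-1$ itself: the next multiple $2(q-1)$ always exceeds the maximal exponent that occurs in each row. Hence in every top row exactly one monomial survives, giving $i$-th entry $=-f_{q-1-i}$ for $0\le i\le r-2$; and in the last row the exponent $q-1$ is attained at $t=k$ and $t=k-1$ respectively, yielding $-f_k$ and $-f_{k-1}$, so the last entry equals $-f_k+\theta f_{k-1}$.

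Setting $H\cdot\bm u_f^\top=\bm w^\top$ now reads off as coefficient conditions: $f_{q-1-i}=-w_i$ for $0\le i\le r-2$ (so $f_{q-1},\dots,f_{k+1}$ are forced to $-w_0,\dots,-w_{r-2}$), together with $-f_k+\theta f_{k-1}=w_{r-1}$, while $f_0,\dots,f_{k-1}$ stay free. I would then repackage these terms. The forced high-degree part is $-\sum_{i=0}^{r-2}w_ix^{q-1-i}$, and the $x^k$-coefficient is $\theta f_{k-1}-w_{r-1}$; since $x^{q-1-(r-1)}=x^{k}$, the piece $-w_{r-1}x^k$ merges with the previous sum to give $-\sum_{i=0}^{r-1}w_ix^{q-1-i}$. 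The remaining terms $\theta f_{k-1}x^k+f_{k-1}x^{k-1}+\sum_{i=0}^{k-2}f_ix^i$ are exactly $\sum_{i=0}^{k-2}f_ix^i+f_{k-1}(x^{k-1}+\theta x^k)=f_{k,\theta}(x)\in\mathcal{S}_{k,\theta}$, with free parameters $f_0,\dots,f_{k-1}$. Combining, $f(x)=-\sum_{i=0}^{r-1}w_ix^{q-1-i}+f_{k,\theta}(x)$, as claimed.

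As a consistency check I would confirm the count: $H$ has full row rank $r$, so the solution set is an affine space of dimension $q-r=k$, matching the $k$ free parameters $f_0,\dots,f_{k-1}$ in $\mathcal{S}_{k,\theta}$; this guarantees the described family is the \emph{entire} solution set and not a proper subfamily. The only genuinely delicate point is the exponent bookkeeping in the second step---verifying that $2(q-1)$ never lies in the attained exponent ranges and that $q-1$ is hit exactly once per row---which is precisely where the hypothesis $k\ge2$ (equivalently $r\le q-2$) enters; everything else is a bijective relabeling of coefficients.
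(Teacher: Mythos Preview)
Your proof is correct and rests on the same key ingredient as the paper's, namely the power-sum identity $\sum_{\alpha\in\mathbb{F}_q}\alpha^s=-1$ if $(q-1)\mid s$, $s>0$, and $0$ otherwise. The only cosmetic difference is that the paper shortcuts the argument by noting that the kernel of $H$ is precisely the code ${\rm TRS}_k(\mathbb{F}_q,\theta)$ (so it suffices to verify the single particular solution $f(x)=-\sum_{i=0}^{r-1}w_ix^{q-1-i}$), whereas you parameterize all solutions directly and recover the $\mathcal{S}_{k,\theta}$-coset via an explicit coefficient computation and a dimension count.
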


\begin{proof}
By the definition of TRS codes, it is sufficient to show that $H\bm u^\top_f=\bm w^\top$ for the case of $f(x)=-\sum_{i=0}^{r-1}w_ix^{q-1-i}$. The conclusion then follows from the fact that for all positive integers $j$, we have 
$$\sum^q_{i=1}\alpha_i^j=\begin{cases}
    -1, \text{ if } q-1 \mid j \text{ and } j\neq 0;\\
0, ~~~\text{ otherwise }.\end{cases}$$
%$$\mathcolorbox{yellow}%{\sum^q_{i=1}\alpha_i^j=\begin{cases}
 %   -1, \text{ if } q-1 \mid j \text{ and } j\neq 0;\\
%0, ~~~\text{ otherwise }.\end{cases}}$$
\end{proof}

From Theorem \ref{thm5} and Proposition \ref{prop7}, we can obtain the following result, which characterizes the syndromes of standard deep holes of ${\rm TRS}_k(\mathbb{F}_q, \theta)$.
\begin{corollary}\label{cor1}
    Suppose $\bm u \in \mathbb{F}^q_q$. Then, $\bm u$ is the standard deep hole of ${\rm TRS}_k(\mathbb{F}_q, \theta)$ if and only if its syndrome $\bm w^\top=H \cdot \bm u^\top$ is of the form $(0, 0, \cdots, 0, w_{r-1})^\top$ with $w_{r-1} \neq 0$.
\end{corollary}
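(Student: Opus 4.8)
The plan is to combine Theorem \ref{thm5}, which characterizes the standard deep holes as exactly the vectors $\bm u_f$ with generating polynomial $f(x)=ax^k+f_{k,\theta}(x)$ for some $a\in\mathbb{F}_q^*$ and $f_{k,\theta}\in\mathcal{S}_{k,\theta}$, with Proposition \ref{prop7}, which describes the solution set of $H\bm u^\top=\bm w^\top$ in terms of generating polynomials. The whole argument then reduces to matching the exponents appearing in these two descriptions, the only genuinely delicate point being the single degree where the two descriptions overlap.

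For the converse direction (the easier one), I would start from a syndrome of the form $\bm w=(0,\ldots,0,w_{r-1})$ with $w_{r-1}\neq 0$. By Proposition \ref{prop7}, every solution $\bm u$ of $H\bm u^\top=\bm w^\top$ has generating polynomial $f(x)=-\sum_{i=0}^{r-1}w_ix^{q-1-i}+f_{k,\theta}(x)=-w_{r-1}x^{k}+f_{k,\theta}(x)$, using that $q-1-(r-1)=q-r=k$ and that all other $w_i$ vanish. Setting $a=-w_{r-1}\neq 0$, this is precisely the standard form of Theorem \ref{thm5}, so $\bm u$ is a standard deep hole.

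For the forward direction, I would take a standard deep hole with generating polynomial $f(x)=ax^{k}+f_{k,\theta}(x)$, $a\neq 0$, and recover its syndrome $\bm w$ by rewriting $f$ in the canonical form of Proposition \ref{prop7}. The crucial observation is the distribution of exponents: the terms $-\sum_{i=0}^{r-1}w_ix^{q-1-i}$ carry the degrees $q-1,q-2,\ldots,k$, whereas every element of $\mathcal{S}_{k,\theta}$ reaches at most degree $k$; hence the degrees $k+1,\ldots,q-1$ are governed solely by $w_0,\ldots,w_{r-2}$. Since $\deg f\le k$, all these coefficients must vanish, forcing $w_0=\cdots=w_{r-2}=0$, and it only remains to analyze the overlapping degree $k$.

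The main obstacle — indeed the only point requiring care — is degree $k$, where the particular-solution term $-w_{r-1}x^{k}$ competes with the $\theta f_{k-1}x^{k}$ contributed by an element of $\mathcal{S}_{k,\theta}$. I would resolve this using the defining constraint of $\mathcal{S}_{k,\theta}$ in Eq.~\eqref{S}: any $f_{k,\theta}\in\mathcal{S}_{k,\theta}$ has its $x^{k}$-coefficient equal to $\theta$ times its $x^{k-1}$-coefficient. Comparing the $x^{k-1}$ and $x^{k}$ coefficients of $ax^{k}+f_{k,\theta}(x)$ with those of $-w_{r-1}x^{k}+g_{k,\theta}(x)$ for $g_{k,\theta}\in\mathcal{S}_{k,\theta}$, and imposing this twist relation on $g_{k,\theta}$, pins down $w_{r-1}=-a$, which is nonzero. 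Thus the syndrome of a standard deep hole is exactly $(0,\ldots,0,-a)^\top$ with $-a\neq 0$, and together with the converse this establishes the claimed equivalence.
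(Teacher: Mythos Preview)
Your proposal is correct and follows exactly the approach indicated in the paper, which simply states that the corollary follows from Theorem~\ref{thm5} and Proposition~\ref{prop7} without spelling out the details. Your careful handling of the overlap at degree $k$ via the twist relation in $\mathcal{S}_{k,\theta}$ is precisely what makes the decomposition in Proposition~\ref{prop7} unique, and your computation $w_{r-1}=-a$ is right.
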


  In the following, we provide an equivalent characterization of the deep hole of ${\rm TRS}_k(\mathbb{F}_q, \theta)$ from the perspective of its syndrome, which will be useful in our main theorems.
\begin{proposition}\label{prop8}
Suppose $\bm u \in \mathbb{F}^q_q$, then $\bm u$ is a deep hole of ${\rm TRS}_k(\mathbb{F}_q, \theta)$ if and only if $\bm c_{r, \theta}(\alpha_1), \bm c_{r, \theta}(\alpha_2), \cdots, \bm c_{r, \theta}(\alpha_{r-1})$ and $H\cdot \bm u^\top$ are $\mathbb{F}_q$-linearly independent for any distinct elements $\alpha_1, \alpha_2, \cdots, \alpha_{r-1} \in \mathbb{F}_q$. 
\end{proposition}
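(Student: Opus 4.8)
The plan is to deduce the statement directly from the syndrome characterization of deep holes in Proposition \ref{prop6}, after identifying the columns of $H$. The crucial observation is that the $i$-th column of the parity-check matrix $H$ in Eq. \eqref{pc} is exactly $\bm c_{r,\theta}(\alpha_i)$, so the columns of $H$ are precisely the family $\{\bm c_{r,\theta}(\alpha) : \alpha \in \mathbb{F}_q\}$ of $q$ vectors in $\mathbb{F}_q^r$. Since the code has full length $n=q$, Theorem \ref{thm4} gives $\rho({\rm TRS}_k(\mathbb{F}_q, \theta))=q-k=r$, so Proposition \ref{prop6} reads: $\bm u$ is a deep hole if and only if $H\bm u^\top$ lies in the span of no $r-1$ of the columns $\bm c_{r,\theta}(\alpha)$.

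The one auxiliary fact I would establish first is that any $r-1$ of these columns are already $\mathbb{F}_q$-linearly independent. I would prove this via Lemma \ref{det3}: given distinct $\alpha_1, \dots, \alpha_{r-1}$, I choose an $r$-th element $\alpha_r \in \mathbb{F}_q$ avoiding both $\{\alpha_1, \dots, \alpha_{r-1}\}$ and the single value $\theta^{-1}-\sum_{j=1}^{r-1}\alpha_j$; this is possible because at most $r$ elements are excluded while $|\mathbb{F}_q|=q>r$ (as $k\geq 2$). For such a choice, Lemma \ref{det3} gives $\det\big(\bm c_{r,\theta}(\alpha_1)|\cdots|\bm c_{r,\theta}(\alpha_r)\big)=V(\alpha_1,\dots,\alpha_r)\big(1-\theta\sum_{i=1}^{r}\alpha_i\big)\neq 0$, so these $r$ vectors are independent, and in particular the original $r-1$ are.

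With these in hand the equivalence is a short two-way argument. For the reverse direction, I assume the independence condition holds and suppose toward a contradiction that $H\bm u^\top=\sum_{j=1}^{r-1}c_j\,\bm c_{r,\theta}(\beta_j)$ for some distinct $\beta_j$; then $\bm c_{r,\theta}(\beta_1),\dots,\bm c_{r,\theta}(\beta_{r-1}),H\bm u^\top$ are dependent, contradicting the hypothesis, so by Proposition \ref{prop6} $\bm u$ is a deep hole. For the forward direction, I assume $\bm u$ is a deep hole and suppose $\sum_{j=1}^{r-1}c_j\,\bm c_{r,\theta}(\alpha_j)+d\,H\bm u^\top=0$ with the coefficients not all zero. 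If $d\neq 0$, then $H\bm u^\top$ is a combination of the $r-1$ columns $\bm c_{r,\theta}(\alpha_j)$, contradicting Proposition \ref{prop6}; if $d=0$, then the $r-1$ columns are dependent, contradicting the auxiliary fact. Hence the $r$ vectors are independent, as required.

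I expect the only real content to be the auxiliary independence fact, i.e.\ the determinant-plus-counting step; once the columns of $H$ are identified with the $\bm c_{r,\theta}(\alpha)$ and $\rho=r$ is recorded, both directions of the equivalence follow immediately from Proposition \ref{prop6}. The point to watch is the degenerate case $d=0$ in the forward direction, which is exactly what the auxiliary fact is there to exclude.
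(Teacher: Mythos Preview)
Your proof is correct and follows the same overall strategy as the paper: identify the columns of $H$ with the vectors $\bm c_{r,\theta}(\alpha)$, establish that any $r-1$ of them are linearly independent, and then invoke Proposition~\ref{prop6}. The one place where you differ is in how you obtain the auxiliary independence fact. You prove it directly via Lemma~\ref{det3} and a counting argument (choosing an $r$-th evaluation point to make the full $r\times r$ determinant nonvanishing). The paper instead observes that $d({\rm TRS}_k(\mathbb{F}_q,\theta))\geq q-k=r$, and then uses the standard fact that a parity-check matrix of a code with minimum distance at least $r$ has any $r-1$ columns linearly independent. The paper's route is shorter because it recycles the minimum-distance lower bound already recorded in Section~\ref{sec3}; your route is more self-contained and does not appeal to that coding-theoretic fact. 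Both are perfectly valid.
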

	
\begin{proof}
Note that the minimum distance $d({\rm TRS}_k(\mathbb{F}_q)) \geq q-k=\rho({\rm TRS}_k(\mathbb{F}_q))$. Thus any $\rho({\rm TRS}_k(\mathbb{F}_q))-1=r-1$ columns of $H$ are linearly independent. Then the conclusion immediately follows from  Proposition \ref{prop6}.
\end{proof}

\subsection{Deep Holes of ${\rm TRS}_k(\mathbb{F}_q, \theta)$ for Even $q$}\label{sec4.2}
Inspired by the polynomial method proposed in \cite{H19}, in this subsection, for $q$ is even, we determine all deep holes of ${\rm TRS}_k(\mathbb{F}_q, \theta)$ when $ \frac{3q-4}{4} \leq k \leq q-1$.

For any $x_1, x_2, \cdots, x_{n} \in \mathbb{F}_q$, denote 
\[\prod\limits_{i=1}^{n}(X-x_i)=\sum\limits_{i=0}^{n}S_{i}(x_1, x_2, \cdots, x_{n})X^i,\]
where $S_{i}(x_1, \cdots, x_{n})=(-1)^{n-i}\sum\limits_{1\leq j_1<j_2<\cdots<j_{n-i}\leq n}\prod\limits_{\ell=1}^{n-i}x_{j_\ell}$ for all $0\leq i\leq n$.

Let $\bm w=(w_0, w_1, \cdots, w_{r-1}) \in  \mathbb{F}^r_q$. Denote 
\[f=f(x_1, \cdots, x_{r-2})=\sum\limits_{i=0}^{r-2}w_i S_{i}(x_1, \cdots, x_{r-2}), \]
	\[g=g(x_1,  \cdots, x_{r-2})=\sum\limits_{i=1}^{r-2}w_{i}S_{i-1}(x_1, \cdots, x_{r-2}),\]
	and 
 \[h=h(x_1,\dots,x_{r-2})=\sum_{i=2}^{r-2}w_iS_{i-2}(x_1,  \cdots, x_{r-2})+\theta^{-1}w_{r-1}.\]
 
From Proposition \ref{prop8}, $\bm u$ is a deep hole of ${\rm TRS}_k(\mathbb{F}_q, \theta)$ if and only if 
\[\det\big(\bm c_{r, \theta}(\alpha_1)|\bm c_{r, \theta}(\alpha_2)|\cdots|\bm c_{r, \theta}(\alpha_{r-1})|\bm w^\top\big) \neq 0\] 
for any pairwise distinct elements $\alpha_1, \alpha_2, \cdots, \alpha_{r-1} \in \mathbb{F}_q$, where $\bm w^{\top}= H \cdot \bm u^{\top}$. We will regard these elements $\alpha_1, \alpha_2, \cdots, \alpha_{r-1}$ as variables $x_1, x_2, \cdots, x_{r-1}$, and give a formula of $\det\big(\bm c_{r, \theta}(x_1)|\bm c_{r, \theta}(x_2)|\cdots|\bm c_{r, \theta}(x_{r-1})|\bm w^\top\big)$ as follows.
\begin{lemma}\label{lem4}
With the notations above, we have
\begin{eqnarray*}
   && \frac{\det\big(\bm c_{r, \theta}(x_1)|\bm c_{r, \theta}(x_2)|\cdots|\bm c_{r, \theta}(x_{r-1})|\bm w^\top\big)}{\theta V(x_1, x_2, \cdots, x_{r-1})}\\
   &=&f\cdot x^2_{r-1}+\big(\theta^{-1}+\sum_{i=1}^{r-2}x_i\big)f\cdot x_{r-1}+\big(\theta^{-1}+\sum_{i=1}^{r-2}x_i\big)g+ h.
\end{eqnarray*}

\end{lemma}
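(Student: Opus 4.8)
The plan is to compute the determinant explicitly by isolating the distinguished bottom row, reading off the cofactors of the last column through a formal-variable substitution, and then specializing to characteristic two (which is in force throughout this subsection, since $q$ is even). The first step is to split the bottom row: because $\bm c_{r,\theta}(x)=\bm c_r(x)-\theta x^r\bm c_r(\infty)$, multilinearity of the determinant in its last row gives $\det\big(\bm c_{r,\theta}(x_1)|\cdots|\bm c_{r,\theta}(x_{r-1})|\bm w^\top\big)=A-\theta B$, where $A=\det\big(\bm c_r(x_1)|\cdots|\bm c_r(x_{r-1})|\bm w^\top\big)$ is the ordinary Vandermonde-type determinant with row-exponents $0,1,\dots,r-1$, and $B$ is the determinant with row-exponents $0,1,\dots,r-2,r$ whose last column is $(w_0,\dots,w_{r-2},0)^\top$ (the bottom entry $w_{r-1}$ stays entirely with $A$).

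Next I would evaluate $A$ and $B$ by the cofactor-reading trick. Replacing $\bm w^\top$ in $A$ by $\bm c_r(X)$ for a formal variable $X$ turns $A$ into $V(x_1,\dots,x_{r-1},X)=V(x_1,\dots,x_{r-1})\prod_{i=1}^{r-1}(X-x_i)$; since the $l$-th entry of $\bm c_r(X)$ is $X^l$ and the determinant is linear in the last column, the coefficient of $X^l$ is exactly the cofactor of that entry, whence
\[
A = V(x_1,\dots,x_{r-1})\sum_{l=0}^{r-1} w_l\, S_l(x_1,\dots,x_{r-1}).
\]
For $B$, replacing its last column by $(1,X,\dots,X^{r-2},X^r)^\top$ produces the determinant with exponents $0,\dots,r-2,r$ in the $r$ variables $x_1,\dots,x_{r-1},X$, which is precisely the first identity of Lemma~\ref{det2} with $n=r$; it equals $V(x_1,\dots,x_{r-1})\prod_{i=1}^{r-1}(X-x_i)\big(X+\sigma\big)$ with $\sigma=\sum_{i=1}^{r-1}x_i$. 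Reading off the coefficients of $X^0,\dots,X^{r-2}$ (the only ones needed, as the bottom entry of $B$'s last column is $0$) gives
\[
B = V(x_1,\dots,x_{r-1})\sum_{l=0}^{r-2} w_l\big(S_{l-1}(x_1,\dots,x_{r-1})+\sigma\, S_l(x_1,\dots,x_{r-1})\big).
\]

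Then I would separate the distinguished variable $x_{r-1}$, since the target expression is a quadratic in $x_{r-1}$ with coefficients symmetric in $x_1,\dots,x_{r-2}$. From $\prod_{i=1}^{r-1}(X-x_i)=(X-x_{r-1})\prod_{i=1}^{r-2}(X-x_i)$ one obtains $S_m(x_1,\dots,x_{r-1})=S_{m-1}(x_1,\dots,x_{r-2})-x_{r-1}S_m(x_1,\dots,x_{r-2})$ and $\sigma=x_{r-1}+\sum_{i=1}^{r-2}x_i$. Substituting these and collecting powers of $x_{r-1}$, the $w_l$-sums reorganize into $f,g,h$: one uses $\sum_l w_l S_l(x_1,\dots,x_{r-2})=f$, the sum $\sum_l w_l S_{l-1}(x_1,\dots,x_{r-2})$ equal to $g+w_{r-1}$ over the range occurring in $A$ (the extra $w_{r-1}$ coming from $S_{r-2}=1$) but equal to $g$ over the shorter range occurring in $B$, and $\sum_l w_l S_{l-2}(x_1,\dots,x_{r-2})=h-\theta^{-1}w_{r-1}$ — the summand $\theta^{-1}w_{r-1}$ in the definition of $h$ being exactly the bookkeeping device that makes this identification clean.

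Finally I would assemble $D=A-\theta B$ and divide by $\theta V(x_1,\dots,x_{r-1})$. A direct comparison with the claimed quadratic shows that the two agree up to terms that are integer multiples of $2$ (namely $2w_{r-1}$ and a doubled linear-in-$x_{r-1}$ term) together with some sign differences; all of these vanish because $q$ is even, and the surviving terms reproduce the asserted formula exactly. The hard part will be the symmetric-function bookkeeping in the peeling step — keeping straight which summation range yields $g$ versus $g+w_{r-1}$, and how the $\theta^{-1}w_{r-1}$ inside $h$ cancels the stray $w_{r-1}$ produced by $A$ — together with verifying that every discrepancy between the general-characteristic computation and the target is a multiple of $2$ (so that the identity is genuinely a characteristic-two phenomenon). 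Checking the case $r=3$ by hand is a useful safeguard before committing to the general manipulation; one may alternatively bypass the splitting by substituting $\bm c_{r,\theta}(X)$ directly and invoking Lemma~\ref{det3}, but the split route above isolates the two pure-exponent determinants for which Lemma~\ref{det2} is tailor-made.
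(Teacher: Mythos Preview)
Your proposal is correct and closely parallels the paper's proof, with one organizational difference. The paper does exactly what you mention as the alternative in your last sentence: it substitutes $\bm c_{r,\theta}(X)$ directly into the last column and applies Lemma~\ref{det3}, which in characteristic two yields the single product $V(x_1,\dots,x_{r-1})\prod_{j=1}^{r-1}(X+x_j)\big(1+\theta(X+\sum_{i=1}^{r-1}x_i)\big)$. The coefficient $D_i$ of $X^i$ in $\prod_j(X+x_j)(X+\sum_t x_t+\theta^{-1})$ is then expanded via the same recursion $S_{\ell,r-1}=x_{r-1}S_{\ell,r-2}+S_{\ell-1,r-2}$ that you use, and the $w_i$-weighted sums collapse to $f,g,h$ exactly as in your bookkeeping. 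By contrast, you first split off the $\theta$-twist in the bottom row to obtain two pure power-row determinants $A$ and $B$, handle each via Vandermonde and Lemma~\ref{det2} respectively, and only at the end invoke characteristic two to kill the residual multiples of $2$. The paper's route is shorter (one substitution, one product, char~$2$ assumed from the outset), while your route is more explicit about precisely which terms rely on char~$2$; the underlying mechanism---cofactor extraction by formal substitution followed by the symmetric-function peeling of $x_{r-1}$---is identical in both.
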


\begin{proof}
By Lemma \ref{det3} and ${\rm Char}(\mathbb{F}_{q})=2$, we have
\begin{eqnarray*}
    &&\det\big(\bm c_{r, \theta}(x_1)|\bm c_{r, \theta}(x_2)|\cdots|\bm c_{r, \theta}(x_{r-1})|\bm c_{r, \theta}(X)\big)\\
    &=&V(x_1, x_2, \cdots, x_{r-1})\prod_{j=1}^{r-1}(X+x_j)\big(1+\theta(X+\sum_{i=1}^{r-1}x_i)\big).
\end{eqnarray*}
For convenience, denote $S_{i, n}=S_i(x_1, \cdots, x_{n})$.	Then we have
\begin{eqnarray*}
    &&\det\big(\bm c_{r, \theta}(x_1)|\bm c_{r, \theta}(x_2),|\cdots|\bm c_{r, \theta}(x_{r-1})|\bm w^\top\big)\\
    &=&\theta V(x_1, x_2, \cdots, x_{r-1})\big(\sum_{i=0}^{r-2}w_iD_i+\theta^{-1}w_{r-1}\big),
\end{eqnarray*}
where $D_i$ is the coefficient of $X^i$ in $\prod\limits_{j=1}^{r-1}(X+x_j)\big(X+\sum\limits_{t=1}^{r-1}x_t+\theta^{-1}\big)$, i.e., $D_i=S_{i, r-1}\big(\theta^{-1}+\sum\limits_{t=1}^{r-1}x_t\big)+ S_{i-1,r-1}.$
		Note that $S_{\ell, r-1}=x_{r-1}S_{\ell, r-2}+S_{\ell-1, r-2}.$
		Thus 
		\begin{eqnarray*}
			D_i&=&(x_{r-1}S_{i,r-2}+S_{i-1,r-2})\big(x_{r-1}+\sum_{t=1}^{r-2}x_t +\theta^{-1}\big)+ (x_{r-1}S_{i-1,r-2}+S_{i-2,r-2})\\
			&=&S_{i,r-2}x^2_{r-1}+(\theta^{-1}+\sum_{t=1}^{r-2}x_t)S_{i,r-2}x_{r-1}+S_{i-1,r-2}(\theta^{-1}+\sum_{t=1}^{r-2}x_t)+ S_{i-2,r-2}.
		\end{eqnarray*}
		Therefore,
		\begin{eqnarray*}
			&&\frac{\det\big(\bm c_{r, \theta}(x_1)|\bm c_{r, \theta}(x_2)|\cdots|\bm c_{r, \theta}(x_{r-1})|\bm w^\top\big)}{\theta V(x_1, x_2, \cdots, x_{r-1})}\\
			&=&\sum_{i=0}^{r-2}w_i\Big(S_{i,r-2}x^2_{r-1}+(\theta^{-1}+\sum_{t=1}^{r-2}x_t)S_{i,r-2}x_{r-1} +S_{i-1,r-2}(\theta^{-1}+\sum_{t=1}^{r-2}x_t)\\
   &&+ S_{i-2,r-2}\Big)+\theta^{-1}w_{r-1}\\
			&=&fx^2_{r-1}+(\theta^{-1}+\sum_{i=1}^{r-2}x_i)(f x_{r-1} +g)+ h.
		\end{eqnarray*}\end{proof}
From Lemma \ref{lem4} and Proposition \ref{prop8}, we provide a necessary condition under which $\bm u$ is a deep hole of ${\rm TRS}_k(\mathbb{F}_q, \theta)$. 

	Denote $\tilde{f}=f(x_1, \cdots,x_{r-3},\sum\limits_{j=1}^{r-3}x_j+\theta^{-1}), \tilde{h}= h(x_1, \cdots,x_{r-3}, \sum\limits_{j=1}^{r-3}x_j+\theta^{-1})$.
\begin{lemma}\label{lem5}
Suppose $r-3 \geq 1$, i.e., $k \leq q-4$. Let $\bm u \in \mathbb{F}^n_q$ and $\bm w^{\top}=(w_0, w_1, \cdots, w_{r-1})^{\top}=H\cdot \bm u^{\top} \in  \mathbb{F}^r_q$. If $\bm u$ is a deep hole of ${\rm TRS}_k(\mathbb{F}_q, \theta)$, then the polynomial 
\begin{eqnarray*}
    P(x_1, \cdots, x_{r-3})&=&V(x_1, \cdots, x_{r-3})\cdot\prod_{t=1}^{r-3}\big(\sum_{j=1}^{r-3}x_j+\theta^{-1}+x_t\big)\cdot\prod_{i=1}^{r-3}\big(\tilde{f}x_i^2+\tilde{h}\big)\\
    &&\cdot\tilde{f}\cdot\big(\tilde{f}(\sum_{j=1}^{r-3}x_j+\theta^{-1})^2+\tilde{h}\big) \in \mathbb{F}_q[x_1, x_2, \cdots, x_{r-3}]
\end{eqnarray*}
vanishes on $\underbrace{\mathbb{F}_q\times \mathbb{F}_q \times \cdots \times \mathbb{F}_q}_{r-3}.$
\end{lemma}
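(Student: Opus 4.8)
The plan is to exploit the fact that, by Lemma \ref{lem4}, the (normalized) determinant is a \emph{quadratic} polynomial in the last variable $x_{r-1}$, and to combine this with the special features of characteristic $2$. Write $T=\theta^{-1}+\sum_{i=1}^{r-2}x_i$, so that Lemma \ref{lem4} reads
\[
\frac{\det\big(\bm c_{r,\theta}(x_1)|\cdots|\bm c_{r,\theta}(x_{r-1})|\bm w^\top\big)}{\theta V(x_1,\dots,x_{r-1})}=f\,x_{r-1}^2+Tf\,x_{r-1}+Tg+h .
\]
By Proposition \ref{prop8}, $\bm u$ being a deep hole is equivalent to this expression being nonzero whenever $x_1,\dots,x_{r-1}$ are pairwise distinct (as then $V\neq 0$). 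The crucial observation is that the substitution $x_{r-2}=\beta:=\sum_{j=1}^{r-3}x_j+\theta^{-1}$ makes $T=\theta^{-1}+\sum_{j=1}^{r-3}x_j+\beta=2\big(\theta^{-1}+\sum_{j=1}^{r-3}x_j\big)=0$, killing the linear term. After this specialization $f\mapsto\tilde f$, $h\mapsto\tilde h$, and the determinant collapses to $\theta V\cdot(\tilde f\,x_{r-1}^2+\tilde h)$, where $V$ is now the Vandermonde determinant of $x_1,\dots,x_{r-3},\beta,x_{r-1}$. (The hypothesis $r-3\ge 1$ guarantees there is at least one free variable and that $\beta$ and the displayed products are meaningful.)

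Next I would record the char-$2$ behaviour of the binomial $\tilde f X^2+\tilde h$, viewed as a polynomial in $X=x_{r-1}$. Since $X\mapsto X^2$ is a bijection of $\mathbb{F}_q$, when $\tilde f\neq 0$ this binomial has a unique root $\gamma$, characterized by $\gamma^2=\tilde h/\tilde f$, and in fact $\tilde f\,c^2+\tilde h=\tilde f\,(c+\gamma)^2$ for every $c\in\mathbb{F}_q$; thus $\tilde f\,c^2+\tilde h=0$ holds exactly when $c=\gamma$.

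Then I would fix an arbitrary point $(x_1,\dots,x_{r-3})\in\mathbb{F}_q^{r-3}$ and argue by cases to show $P$ vanishes there. If the $x_i$ are not pairwise distinct, the factor $V(x_1,\dots,x_{r-3})$ is $0$; if $\beta=x_t$ for some $t$, the factor $\prod_{t}(\beta+x_t)$ is $0$; in either case $P=0$. Otherwise $x_1,\dots,x_{r-3},\beta$ are $r-2$ distinct elements, so the deep-hole condition applies to any admissible $x_{r-1}$. If $\tilde f=0$ the explicit factor $\tilde f$ in $P$ forces $P=0$. If $\tilde f\neq 0$, let $\gamma$ be its unique root; were $\gamma\notin\{x_1,\dots,x_{r-3},\beta\}$, then taking $x_{r-1}=\gamma$ would give $r-1$ pairwise distinct elements on which $\tilde f x_{r-1}^2+\tilde h=0$ and hence the determinant vanishes, contradicting Proposition \ref{prop8}. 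Therefore $\gamma$ equals some $x_i$ or equals $\beta$, and by the factorization $\tilde f\,c^2+\tilde h=\tilde f(c+\gamma)^2$ the corresponding factor $\tilde f x_i^2+\tilde h$ (respectively $\tilde f\beta^2+\tilde h$) of $P$ vanishes. In every case $P(x_1,\dots,x_{r-3})=0$, proving the lemma.

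The main obstacle is really the setup rather than any estimate: spotting the specialization $x_{r-2}=\beta$ that annihilates $T$ in characteristic $2$, and then checking that the forbidden set $\{x_1,\dots,x_{r-3},\beta\}$ of the deep-hole condition is matched term-by-term by the factors $V(x_1,\dots,x_{r-3})$, $\prod_t(\beta+x_t)$, $\prod_i(\tilde f x_i^2+\tilde h)$, $\tilde f$ and $\tilde f\beta^2+\tilde h$ assembled in $P$. Once the single-root phenomenon of $\tilde f X^2+\tilde h$ is available, the passage from the ``for all distinct tuples'' deep-hole criterion to the vanishing of this one product is forced, and no counting or character-sum input is needed at this stage; the subsequent Schwartz--Zippel argument, which upgrades pointwise vanishing to $P\equiv 0$, is what will require the bound $\frac{3q-4}{4}\le k$.
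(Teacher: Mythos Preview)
Your proposal is correct and follows essentially the same approach as the paper: specialize $x_{r-2}=\sum_{j=1}^{r-3}x_j+\theta^{-1}$ to kill the linear term in characteristic $2$, then use the unique-root property of $\tilde f X^2+\tilde h$ to force the root into $\{x_1,\dots,x_{r-3},\beta\}$, matching each case to a factor of $P$. Your write-up is slightly more explicit about the factorization $\tilde f\,c^2+\tilde h=\tilde f(c+\gamma)^2$, but the logical structure is identical to the paper's argument.
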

	
\begin{proof}
For any $x_1, \cdots, x_{r-3} \in \mathbb{F}_q$, if $x_i=x_j$ for some $1 \leq i < j \leq r-3$, then $V(x_1, x_2, \cdots, x_{r-3})=0$. If $\sum\limits_{j=1}^{r-3}x_j+\theta^{-1}+x_t=0$ for some $1 \leq t \leq r-3$, then $\prod\limits_{t=1}^{r-3}(\sum\limits_{j=1}^{r-3}x_j+\theta^{-1}+x_t)=0$. So we set
$x_{r-2}=\sum\limits_{j=1}^{r-3}x_j+\theta^{-1},$
and suppose $x_1, x_2, \cdots, x_{r-2}$ are pairwise distinct. If $\tilde{f}=0$, we are done. So we assume that $\tilde{f}\neq 0$.  
Since $\bm u$ is a deep hole of ${\rm TRS}_k(\mathbb{F}_q, \theta)$, by Proposition \ref{prop8}, we have $\det\big(\bm c_{r, \theta}(x_1)|\bm c_{r, \theta}(x_2)|\cdots|\bm c_{r, \theta}(x_{r-1})|\bm w^\top\big) \neq 0$ for any distinct elements $x_1, x_2, \cdots, x_{r-1} \in \mathbb{F}_q$. From $\sum\limits_{j=1}^{r-2}x_j+\theta^{-1} =0$ and Lemma \ref{lem4}, we obtain that
\[\frac{\det\big(\bm c_{r, \theta}(x_1)|\cdots|\bm c_{r, \theta}(x_{r-1})|\bm w^\top\big)}{\theta V(x_1, x_2, \cdots, x_{r-1})}=\tilde{f}x^2_{r-1}+\tilde{h}.\] 
Thus $\tilde{f}x^2_{r-1}+\tilde{h} \neq 0$ for any  $x_{r-1} \in \mathbb{F}_q\backslash \{x_1, x_2, \cdots, x_{r-2}\}$. Since $\mathbb{F}_q$ has characteristic 2, the equation $\tilde{f}X^2+\tilde{h} = 0$ has a unique solution $X \in \mathbb{F}_q$. Thus we can deduce that the solution can only be one of $x_1, \cdots, x_{r-2}$, i.e., $\tilde{f}x_i^2+\tilde{h} = 0$ for some $1 \leq i \leq r-2$, that is, 
\[\prod_{i=1}^{r-3}(\tilde{f} x_i^2+\tilde{h})\big(\tilde{f}(\sum_{j=1}^{r-3}x_j+\theta^{-1})^2+\tilde{h}\big)=0.\]\end{proof}

\begin{remark} We explain here the difference between our approach and the method proposed in \cite{H19}, which considers the deep hole problem of RS codes.  In \cite{H19}, the authors presented an expression for the term $\frac{\det\big(\bm c_{r-1}(x_1)|\bm c_{r-1}(x_2)|\cdots|\bm c_{r-1}(x_{r-1})|\bm w^\top\big)}{V(x_1, x_2, \cdots, x_{r-1})}$ as a linear polynomial in $x_{r-1}$ for given $x_1, \cdots, x_{r-2}$. The key to their method is the fact that a linear equation always has a solution over $\mathbb{F}_q$. However, this approach does not extend to our setting. In our case, the corresponding term, $\frac{\det\big(\bm c_{r, \theta}(x_1)|\bm c_{r, \theta}(x_2)|\cdots|\bm c_{r, \theta}(x_{r-1})|\bm w^\top\big)}{\theta V(x_1, x_2, \cdots, x_{r-1})}$ ,  becomes a quadratic polynomial in $x_{r-1}$ \textbf{},  which does not necessarily have a solution over 
$\mathbb{F}_q$ for arbitrary fixed $x_1, \cdots, x_{r-2}$. To overcome this challenge, we employ a subtle technique to eliminate the degree-one term of the quadratic polynomial, thereby ensuring that it always admits a solution over $\mathbb{F}_q$ when $q$ is even.
\end{remark}

\begin{lemma}[Schwartz-Zippel Lemma \cite{H06, S80,Z79}]\label{SZlem}
    Let $F(x_1, \cdots, x_{n})$ be a nonzero polynomial over $\mathbb{F}_q$. If $\deg_{x_i}(F)<q$, for any $1 \leq i \leq n$, then there exists $\alpha_1, \cdots, \alpha_n \in \mathbb{F}_q$  such that $F(\alpha_1, \cdots, \alpha_{n})\neq 0$.
\end{lemma}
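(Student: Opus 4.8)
The plan is to prove this by induction on the number of variables $n$, reducing everything to the one-variable case, where the decisive fact is that a nonzero univariate polynomial over $\mathbb{F}_q$ of degree less than $q$ cannot vanish at every element of $\mathbb{F}_q$ (since it has strictly fewer roots than $|\mathbb{F}_q|$).

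For the base case $n=1$, I would observe that $F(x_1)$ is a nonzero polynomial of degree at most $\deg_{x_1}(F) < q$, hence it has at most $\deg(F) < q = |\mathbb{F}_q|$ roots in $\mathbb{F}_q$. Consequently there must exist some $\alpha_1 \in \mathbb{F}_q$ that is not a root, i.e. $F(\alpha_1) \neq 0$.

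For the inductive step, assuming the claim for $n-1$ variables, I would write $F$ as a polynomial in the last variable $x_n$,
\[F(x_1, \ldots, x_n) = \sum_{j=0}^{d} c_j(x_1, \ldots, x_{n-1})\, x_n^j,\]
where $d = \deg_{x_n}(F) < q$ and each coefficient $c_j$ lies in $\mathbb{F}_q[x_1, \ldots, x_{n-1}]$. Since $F \neq 0$, at least one coefficient, say $c_{j_0}$, is a nonzero polynomial; moreover, because the factor $x_n^j$ does not involve $x_i$, we have $\deg_{x_i}(c_{j_0}) \le \deg_{x_i}(F) < q$ for each $1 \le i \le n-1$. By the induction hypothesis applied to $c_{j_0}$, there exist $\alpha_1, \ldots, \alpha_{n-1} \in \mathbb{F}_q$ with $c_{j_0}(\alpha_1, \ldots, \alpha_{n-1}) \neq 0$. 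Substituting these values leaves the univariate polynomial $F(\alpha_1, \ldots, \alpha_{n-1}, x_n)$, whose coefficient of $x_n^{j_0}$ is exactly $c_{j_0}(\alpha_1, \ldots, \alpha_{n-1}) \neq 0$; hence it is a nonzero polynomial in $x_n$ of degree at most $d < q$. The base case then yields $\alpha_n \in \mathbb{F}_q$ with $F(\alpha_1, \ldots, \alpha_{n-1}, \alpha_n) \neq 0$, which completes the induction.

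As for the main obstacle, there is no serious difficulty here, since the argument is a clean induction; the result is essentially folklore (a degree-restricted Combinatorial-Nullstellensatz-type statement). The only points requiring care are in the reduction step: one must first single out a coefficient $c_{j_0}$ that is genuinely nonzero \emph{as a polynomial} (not merely nonzero at some particular point) before invoking the induction hypothesis, and then check that the per-variable degree bounds $\deg_{x_i} < q$ are inherited by $c_{j_0}$ so that the hypothesis actually applies. Once this bookkeeping is in place, the conclusion is immediate from the cardinality comparison in the base case.
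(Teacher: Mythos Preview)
Your proof is correct and is the standard inductive argument for this degree-restricted version of the Schwartz--Zippel lemma. The paper does not actually prove this lemma; it simply states it with citations to \cite{H06,S80,Z79}, so there is nothing to compare against beyond noting that your argument is exactly the folklore proof one would expect.
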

 By Lemma \ref{lem5} and the Schwartz-Zippel Lemma, we obtain a characterization of the syndrome the deep hole of  ${\rm TRS}_k(\mathbb{F}_q, \theta)$ as follows.

\begin{proposition}\label{prop9}
  Let $q=2^m \geq 8$ and $\frac{3q-4}{4}\leq k \leq q-4$. Suppose $\bm u \in \mathbb{F}^q_q$ is a deep hole of ${\rm TRS}_k(\mathbb{F}_q, \theta)$ and its syndrome $\bm w^{\top}=H \cdot \bm u^{\top}$, then $\bm w=(0,0,\cdots, 0, w_{r-1})$ with $w_{r-1} \in \mathbb{F}^*_q$ or $\bm w=(0, \cdots, 0, w_{r-3}, w_{r-2}, w_{r-1})$ with $w_{r-3} \neq 0$ and $w_{r-3}\theta+w_{r-2}=0$
\end{proposition}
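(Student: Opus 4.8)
The plan is to upgrade the vanishing of $P$ in Lemma \ref{lem5} into a polynomial \emph{identity} via the Schwartz--Zippel Lemma, and then to read off the admissible syndromes by a coefficient comparison in a basis of symmetric functions. First I would bound the partial degrees of $P$. Each $S_i$ is multilinear in the $x_j$, and the substitution $x_{r-2}\mapsto s:=\sum_{j=1}^{r-3}x_j+\theta^{-1}$ has degree one in every $x_\ell$, so $\deg_{x_\ell}\tilde f\le 2$ and $\deg_{x_\ell}\tilde h\le 2$. Factor by factor this gives $\deg_{x_\ell}V=r-4$, $\deg_{x_\ell}\prod_{t}(s+x_t)=r-4$ (the factor $t=\ell$ loses its $x_\ell$ because $2x_\ell=0$ in characteristic $2$), $\deg_{x_\ell}\prod_{i}(\tilde f x_i^2+\tilde h)\le 2r-4$, $\deg_{x_\ell}\tilde f\le 2$, and $\deg_{x_\ell}(\tilde f s^2+\tilde h)\le 4$, whence $\deg_{x_\ell}P\le(r-4)+(r-4)+(2r-4)+2+4=4r-6$. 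The hypothesis $k\ge\frac{3q-4}{4}$ is precisely $r=q-k\le\frac{q+4}{4}$, so $\deg_{x_\ell}P\le 4r-6\le q-2<q$ for every $\ell$. Since $P$ vanishes on $\mathbb{F}_q^{\,r-3}$ by Lemma \ref{lem5}, Lemma \ref{SZlem} forces $P\equiv 0$.

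Next, working in the integral domain $\mathbb{F}_q[x_1,\dots,x_{r-3}]$ and cancelling the two manifestly nonzero factors $V$ and $\prod_{t}(s+x_t)$, I deduce that at least one of $\tilde f$, the factors $\tilde f x_i^2+\tilde h$ $(1\le i\le r-3)$, and $\tilde f s^2+\tilde h$ is the zero polynomial. Now $\tilde f$ and $\tilde h$ are symmetric in $x_1,\dots,x_{r-3}$ (they are assembled from the $S_i$, and $s$ is itself symmetric). For $r\ge 5$, applying the transposition $x_i\leftrightarrow x_{i'}$ (with $i'\ne i$) to the identity $\tilde f x_i^2+\tilde h\equiv 0$ yields $\tilde f x_{i'}^2+\tilde h\equiv 0$, and subtracting gives $\tilde f\,(x_i+x_{i'})^2\equiv 0$, hence $\tilde f\equiv 0$. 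Thus only two branches survive: $\tilde f\equiv 0$, or $\tilde f s^2+\tilde h\equiv 0$ with $\tilde f\not\equiv 0$. The one-variable case $r=4$ lacks a transposition, and there I would instead expand $\tilde f x_1^2+\tilde h$ and $\tilde f s^2+\tilde h$ explicitly as polynomials in $x_1$; a direct coefficient comparison shows each of these can vanish only when $\bm{w}=\bm{0}$, so again only $\tilde f\equiv 0$ remains.

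To finish I would expand in the basis adapted to the relation $e_1=\theta^{-1}$ created by the substitution. Writing $S_i=e_{r-2-i}$ (characteristic $2$) and $\tilde e_\ell:=e_\ell(x_1,\dots,x_{r-3},s)$, one has $\tilde f=(w_{r-2}+\theta^{-1}w_{r-3})+\sum_{\ell=2}^{r-2}w_{r-2-\ell}\,\tilde e_\ell$ and $\tilde h=\theta^{-1}w_{r-1}+\sum_{\ell=2}^{r-2}w_{r-\ell}\,\tilde e_\ell$; since $\tilde e_\ell$ has total degree $\ell$, the family $\{1,\tilde e_2,\dots,\tilde e_{r-2}\}$ is $\mathbb{F}_q$-linearly independent. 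Hence $\tilde f\equiv 0$ is equivalent to $w_0=\dots=w_{r-4}=0$ together with the vanishing of the constant term $w_{r-2}+\theta^{-1}w_{r-3}$, which is exactly the linear relation defining the second family. Recalling that a deep hole has nonzero syndrome (otherwise $\bm u\in C$), the split $w_{r-3}=0$ (then $w_{r-2}=0$ and $w_{r-1}\ne 0$, the first family) versus $w_{r-3}\ne 0$ (the second family) yields the two asserted forms. For the branch $\tilde f s^2+\tilde h\equiv 0$ I would re-express $\tilde e_\ell$ and $s$ through the algebraically independent $\hat e_j:=e_j(x_1,\dots,x_{r-3})$ and compare the coefficients of the top mixed monomials; this cascades to $w_2=w_3=\cdots=0$, i.e.\ $\tilde f\equiv 0$, contradicting $\tilde f\not\equiv 0$, so this branch contributes no syndromes.

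The main obstacle is this last branch. Once $w_0=w_1=0$ the top total degrees of $\tilde f s^2$ and $\tilde h$ agree automatically, so the contradiction cannot come from a leading-term count and must instead be squeezed out of the sub-leading monomials in $\hat e_1,\dots,\hat e_{r-3}$; organizing this descending elimination uniformly in $r$ (it is transparent for $r=4,5,6$) is the technically delicate point. A secondary nuisance is the degenerate one-variable case $r=4$, where the transposition trick is unavailable and the two non-$\tilde f$ factors have to be handled by explicit expansion.
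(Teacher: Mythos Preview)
Your plan is essentially the paper's proof: bound the partial degrees of $P$ by $4r-6$, invoke Schwartz--Zippel to get $P\equiv 0$, cancel the nonzero factors $V$ and $\prod_t(s+x_t)$ in the integral domain, and then case-split on which of the remaining factors vanishes. Your treatment of the branch $\tilde f x_i^2+\tilde h\equiv 0$ via the transposition (or, equivalently, via symmetry of $\tilde f,\tilde h$) matches the paper, as does the special handling of $r=4$ by explicit expansion. Your analysis of $\tilde f\equiv 0$ through the graded basis $\{1,\tilde e_2,\dots,\tilde e_{r-2}\}$ is a clean repackaging of the same coefficient comparison the paper carries out with explicit monomials.

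The one place where you leave a gap is exactly where you flag it: the branch $\tilde f s^2+\tilde h\equiv 0$. The paper does not set up a descending elimination in the $\hat e_j$'s; instead it reads off the $w_j$ directly from a single monomial. In characteristic $2$ one has $s^2=\sum_{i}x_i^2+\theta^{-2}$, and the coefficient of the monomial $x_1^{2}\cdot\bigl(x_1\prod_{\ell=2}^{\,r-3-j}x_\ell\bigr)\cdot x_1$ (i.e.\ $x_1^{4}x_2\cdots x_{r-3-j}$) in $\tilde f s^2+\tilde h$ is exactly $w_j$ for $0\le j\le r-4$: the extra $x_1^{2}$ pairs with the $x_1^2$-summand of $s^2$, and what remains isolates the $S_j$-term of $\tilde f$, while $\tilde h$ is of too low total degree to contribute. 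Once $w_0=\cdots=w_{r-4}=0$, the identity collapses to a short explicit relation in $w_{r-3},w_{r-2},w_{r-1}$ from which the coefficient of $x_1^2$ forces $w_{r-3}=0$ and then $w_{r-2}=w_{r-1}=0$. So your ``technically delicate'' step has a very short resolution once you pick the right monomial; the symmetric-function cascade you outline would also work but is unnecessary.
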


\begin{proof}
 By Lemma \ref{lem5}, the polynomial $P(x_1, \cdots, x_{r-3}) \in \mathbb{F}_q[x_1, x_2, \cdots, x_{r-3}]$ vanishes on $\underbrace{\mathbb{F}_q\times \mathbb{F}_q \times \cdots \times \mathbb{F}_q}_{r-3}.$ Note that $\deg_{x_i}(P)=(r-4)+2+(r-4)+(4+2(r-4))+4=4(q-k)-6<q$. By the Schwartz-Zippel Lemma, we have $P(x_1, \cdots, x_{r-3}) \equiv0$. We then divide our discussion into 3 cases.
		
\emph{Case 1}: $\tilde{f} \equiv 0$, i.e., $\tilde{f}=\sum\limits_{i=0}^{r-2}w_iS_i(x_1, \cdots,x_{r-3}, \sum\limits_{j=1}^{r-3}x_j+\theta^{-1})\equiv 0$. For each $0 \leq i \leq r-4$, the coefficient of the term $x_1\prod\limits_{j=1}^{r-3-i}x_{j}$ in $\sum\limits_{i=0}^{r-2}w_iS_i(x_1, \cdots,x_{r-3}, \sum\limits_{j=1}^{r-3}x_j+\theta^{-1})$ is equal to $w_i$,   which implies that $w_0=w_1=\cdots=w_{r-4}=0$ and $w_{r-3}\theta^{-1}+w_{r-2}=0$. If $w_{r-3} \neq 0,$ we are done.
If $w_{r-3}=0$, then $w_{r-2}=0$. We claim that $w_{r-1} \neq 0,$ otherwise, we have $\bm w= \bm 0$, which implies that $\bm u \in {\rm TRS}_k(\mathbb{F}_q, \theta)$, not a deep hole. Thus $\bm w=(0, 0, \cdots, 0, w_{r-1})$ with $w_{r-1} \in \mathbb{F}^*_q$.

\emph{Case 2}: $\tilde{f}x_i^2+\tilde{h}\equiv0$ for some $1 \leq i \leq r-3$. Note that $\tilde{f}$ and $\tilde{h}$ are symmetric polynomials. Thus $\tilde{f} =\tilde{h}\equiv 0$ or $r-3=1$. If $\tilde{f} =\tilde{h}\equiv 0$, then from Case 1, we have $w_0=w_1=\cdots=w_{r-4}=0$. Similarly, $\tilde{h}\equiv 0$ implies that $w_2=w_3=\cdots=w_{r-1}=0$, thus $\bm w=\bm 0$. If $r-3=1$, then
$$\begin{cases}
    \tilde{f}=w_0S_0(x_1,  x_{1}+\theta^{-1})+w_1 S_1(x_1,  x_{1}+\theta^{-1})+w_2S_2(x_1, x_{1}+\theta^{-1})\\
\hspace{0.35cm}=w_0x^2_1+\theta^{-1}w_0x_1+\theta^{-1}w_1+w_2,\\
\tilde{h}=w_2x^2_1+\theta^{-1}w_2x_1+w_3 \theta^{-1},\\ \tilde{f}x_1^2+\tilde{h} \equiv0.\end{cases}$$ It deduces that $w_0=w_1=w_2=w_3=0$. Therefore, in this case, we have $\bm w=\bm 0$. i.e., $\bm u \in {\rm TRS}_k(\mathbb{F}_q, \theta)$, which is not a deep hole.
		
\emph{Case 3}: {\small$\tilde{f}(\sum\limits_{j=1}^{r-3}x_j+\theta^{-1})^2+\tilde{h} \equiv 0.$ } Note that for $0 \leq j \leq r-4$, the coefficient of the term $x_1^2\cdot (\prod\limits_{\ell=1}^{r-3-j}x_{\ell})\cdot x_1$ in $\tilde{f} (\sum\limits_{j=1}^{r-3}x_j+\theta^{-1})^2+\tilde{h}$ is $w_{j}$,  thus $w_j=0$ for $j=0, 1, \cdots, r-4$. Combining with the definitions of $\tilde{f}$ and $\tilde{h}$, it holds that
\begin{eqnarray*}
   && (\sum\limits_{j=1}^{r-3}x^2_j+\theta^{-2})(w_{r-3}\theta^{-1}+w_{r-2})+w_{r-3}S_{r-5}(x_1, \cdots,x_{r-3},\sum\limits_{j=1}^{r-3}x_j+\theta^{-1})+\\
   &&w_{r-2}S_{r-4}(x_1, \cdots,x_{r-3},\sum\limits_{j=1}^{r-3}x_j+\theta^{-1})+\theta^{-1}w_{r-1}\equiv 0.
\end{eqnarray*} As a result, the coefficient of $x_1^2$ is $w_{r-3}\theta^{-1}$, which implies that $w_{r-3}=0$.  Then it can easily deduce that $w_{r-2}=w_{r-1}=0$. Hence we obtain that $\bm w=\bm 0$, i.e., $\bm u$ is a codeword, not a deep hole.
		
In summary, the conclusion follows. 
\end{proof} 
Furthermore, in the following proposition, we will prove that a vector with a syndrome of the form in the second case of Proposition \ref{prop9} actually cannot be a deep hole.  
\begin{proposition}\label{prop10}
    Suppose $q=2^m \geq 8$ and $\frac{3q-4}{4}\leq k \leq q-4$. Let $\bm u \in \mathbb{F}^q_q$ and 
    \[\bm w^{\top}=H \cdot \bm u^{\top}=(0, \cdots, 0, w_{r-3}, w_{r-2}, w_{r-1})\] with $w_{r-3} \neq 0$ and $w_{r-3}\theta+w_{r-2}=0$, then $\bm u$ is not a deep hole of ${\rm TRS}_k(\mathbb{F}_q, \theta)$.
\end{proposition}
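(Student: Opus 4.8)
The plan is to argue via Proposition \ref{prop8}, which says that $\bm u$ \emph{fails} to be a deep hole exactly when there exist pairwise distinct $x_1,\dots,x_{r-1}\in\mathbb{F}_q$ with
\[\det\big(\bm c_{r,\theta}(x_1)|\cdots|\bm c_{r,\theta}(x_{r-1})|\bm w^\top\big)=0,\]
so the entire task is to \emph{exhibit} one such tuple. First I would feed the prescribed syndrome $\bm w=(0,\dots,0,w_{r-3},w_{r-2},w_{r-1})$ with $w_{r-2}=\theta w_{r-3}$ into Lemma \ref{lem4}. Since $w_0=\cdots=w_{r-4}=0$, the polynomials $f,g,h$ collapse to a handful of terms; in particular the relation $w_{r-2}=\theta w_{r-3}$ gives $f=w_{r-3}\big(\sum_{i=1}^{r-2}x_i+\theta\big)$. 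Writing $\sigma=\sum_{i=1}^{r-2}x_i$, Lemma \ref{lem4} then displays the normalized determinant as a quadratic in the single variable $x_{r-1}$,
\[w_{r-3}(\sigma+\theta)\,x_{r-1}^2+w_{r-3}(\sigma+\theta)(\sigma+\theta^{-1})\,x_{r-1}+\big[(\theta^{-1}+\sigma)g+h\big],\]
whose coefficients are symmetric functions of the ``frozen'' variables $x_1,\dots,x_{r-2}$.

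The strategy is to freeze $x_1,\dots,x_{r-2}$ as suitable pairwise distinct constants and solve this quadratic for $x_{r-1}$. In even characteristic the solvability of $a_2t^2+a_1t+a_0=0$ with $a_2,a_1\neq0$ is controlled by a single trace condition, which is precisely the content of Proposition \ref{prop2}(iii): the inner sum $\sum_t\chi_b(\cdot)$ is nonzero only for the one value $b=a_2/a_1^2$, whence a root exists iff ${\rm Tr}(a_2a_0/a_1^2)=0$. In our case this simplifies to ${\rm Tr}\!\Big(\frac{(\theta^{-1}+\sigma)g+h}{w_{r-3}(\sigma+\theta)(\sigma+\theta^{-1})^2}\Big)=0$. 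I would first choose the frozen variables so that their sum satisfies $\sigma\notin\{\theta,\theta^{-1}\}$ (possible because $q\ge8$), which keeps the quadratic genuine and the denominator nonzero, and then exploit the remaining freedom in $x_1,\dots,x_{r-2}$ — which moves the constant term $(\theta^{-1}+\sigma)g+h$ through the elementary symmetric functions $S_{r-4},S_{r-5}$ while holding $\sigma$ fixed — to force the trace to vanish. Concretely, let one frozen variable $s$ slide while pairing it with a second so that $\sigma$ is unchanged; the constant term then becomes a nonconstant polynomial of degree $\le2$ in $s$, and by the exact evaluation in Proposition \ref{prop2}(iii) the argument of the trace takes the value $0$ for about half of the admissible $s$.

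Once a root $x_{r-1}=t$ is secured, the final step is to guarantee \emph{distinctness}: the $r-2$ frozen variables must be pairwise distinct and the produced root $t$ must avoid all of them. This is where the hypothesis enters quantitatively: since $r=q-k\le\frac{q+4}{4}$ under $k\ge\frac{3q-4}{4}$, the distinctness constraints and each coincidence $t\in\{x_1,\dots,x_{r-2}\}$ rule out only $O(r)$ values of the sliding parameter $s$, so that together with $q\ge8$ and the $q/2$-sized set of $s$ satisfying the trace condition, an admissible choice survives.

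I expect the main obstacle to be this last simultaneous bookkeeping, and in particular the degenerate subcase of Proposition \ref{prop2}(iii) in which the constant-term polynomial $a_0(s)$ has its quadratic coefficient equal to the square of its linear coefficient: there the character sum $\sum_s\chi_1\big(a_0(s)/\text{denom}\big)$ equals $\pm q$ rather than $0$, making ${\rm Tr}$ \emph{constant} in $s$, so that either every $s$ or no $s$ is admissible. Ruling out the ``no $s$'' possibility forces an extra adjustment of the frozen constants (e.g.\ arranging the $s^2$-coefficient of $a_0$ to vanish by prescribing the sum of the remaining frozen elements), and it is exactly the exact formula of Proposition \ref{prop2}(iii), not a mere Weil-type estimate, that makes this degenerate analysis tractable.
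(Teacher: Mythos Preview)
Your route is genuinely different from the paper's. Rather than staying with Lemma~\ref{lem4} and treating the determinant as a quadratic in $x_{r-1}$, the paper first simplifies the normalized determinant all the way down to the symmetric cubic
\[
Q(x_1,\dots,x_{r-1})=\sum_{i=1}^{r-1}x_i^3+(\textstyle\sum_i x_i+\theta^{-1})^3+\theta^{-3}+\theta^{-1}w_{r-1},
\]
observes the two--variable recursion $Q(\dots,x_{r-2},x_{r-1})=XY(X+Y)+Q(\dots)$ with $X,Y$ suitable shifts of $x_{r-2},x_{r-1}$, and then solves $XY(X+Y)=c$ using the \emph{cubic} Gauss sums of Lemma~A.1--A.2 (the paper's Lemmas~\ref{A10} and~\ref{A11}). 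An inductive Lemma~\ref{A13} first places $Q(\alpha_1,\dots,\alpha_{r-3})$ at a value $g(c)$ for which $N(XY(X+Y)+g(c))\ge q-2$ is guaranteed, and the bound $q-2>4(r-3)$ handles distinctness. So where you see a quadratic in one variable with a trace obstruction, the paper sees a cubic norm-type form in the last \emph{two} variables with no obstruction at all (once $g(c)$ is chosen correctly), at the price of a three-way case split on $m\bmod 4$.

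Your plan is plausible but has a real gap exactly where you flag it. With the sliding pair $x_1\mapsto x_1+s$, $x_2\mapsto x_2+s$ (so $\sigma$ is fixed and $\mu:=x_1+x_2$ stays fixed too), one computes that the change in the constant term is $w_{r-3}(s\mu+s^2)(\theta^{-1}+\theta+\mu)$: the $s^2$-- and $s$--coefficients are \emph{both} proportional to the single factor $\theta^{-1}+\theta+\mu$. Hence your suggested fix ``arrange the $s^2$--coefficient of $a_0$ to vanish'' kills the $s$--coefficient as well, making $a_0$ constant in $s$ and the sliding useless --- this is the opposite of what you want. To salvage the argument you must instead show that one can choose $\sigma,\mu$ (and then $r-4$ further distinct frozen points summing to $\sigma+\mu$) so that the nondegeneracy condition $\alpha\neq\beta^2$, i.e. $(\sigma+\theta)(\sigma+\theta^{-1})^2\neq \mu^2(\theta^{-1}+\theta+\mu)$, holds; this is one algebraic constraint and should be avoidable for $q\ge8$, but it is not automatic and the boundary case $r=4$ (only two frozen variables, no freedom in ``the remaining frozen elements'') needs its own treatment. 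Until that degeneracy analysis is written out, the proposal is a sketch rather than a proof; the paper's cubic route bypasses this issue entirely by never relying on a trace condition.
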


\begin{proof}
   The proof is rather complex and involves results related to certain cubic Gauss sums, please refer to \ref{A}.
\end{proof}
Now, combining Proposition \ref{prop9}, Proposition \ref{prop10} with Corollary \ref{cor1}, we obtain our main result about the deep holes of ${\rm TRS}_k(\mathbb{F}_q, \theta)$ when $q$ is even and $\frac{3q-4}{4}\leq k \leq q-4$.
\begin{theorem}\label{even}
Suppose $q=2^m \geq 8$. If $\frac{3q-4}{4}\leq k \leq q-4$, then Theorem \ref{thm5} provides all the deep holes of ${\rm TRS}_k(\mathbb{F}_q, \theta)$.
\end{theorem}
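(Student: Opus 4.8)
The plan is to synthesize the three preceding results into the desired ``if and only if'' characterization, working entirely at the level of syndromes $\bm w^\top = H\cdot \bm u^\top$. Since Theorem~\ref{thm5} already guarantees that every vector with generating polynomial $ax^k + f_{k,\theta}(x)$ (with $a\neq 0$) is a deep hole, the forward direction is free, and only the converse requires argument: I would show that \emph{every} deep hole of ${\rm TRS}_k(\mathbb{F}_q,\theta)$ in the stated dimension range is of this standard form. Passing to syndromes is legitimate precisely because Corollary~\ref{cor1} translates the standard-deep-hole property into the condition that $\bm w$ have the shape $(0,\dots,0,w_{r-1})$ with $w_{r-1}\neq 0$.

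Concretely I would fix an arbitrary deep hole $\bm u$ and proceed in three steps. First, invoke Proposition~\ref{prop9}: since $q=2^m\geq 8$ and $\tfrac{3q-4}{4}\leq k\leq q-4$, its syndrome $\bm w$ must lie in exactly one of two families, namely $\bm w=(0,\dots,0,w_{r-1})$ with $w_{r-1}\neq 0$, or $\bm w=(0,\dots,0,w_{r-3},w_{r-2},w_{r-1})$ with $w_{r-3}\neq 0$ and $w_{r-3}\theta+w_{r-2}=0$. Second, eliminate the second family by quoting Proposition~\ref{prop10}, which asserts that a vector whose syndrome has that second shape cannot be a deep hole; since $\bm u$ is a deep hole, its syndrome must therefore be of the first shape. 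Third, Corollary~\ref{cor1} identifies the first shape with exactly the standard deep holes of Theorem~\ref{thm5}. Chaining the three steps gives that $\bm u$ is a standard deep hole, and combined with the forward direction this yields the full characterization.

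I expect essentially no obstacle in the theorem itself, as it is a formal combination of Proposition~\ref{prop9}, Proposition~\ref{prop10}, and Corollary~\ref{cor1}; the genuine difficulty has already been absorbed into those statements. If I were instead building the chain from scratch, the two hard parts would be as follows. First, the Schwartz--Zippel step behind Proposition~\ref{prop9}: there the per-variable degree bound $\deg_{x_i}(P)=4(q-k)-6<q$ is exactly what permits the conclusion $P\equiv 0$, and rearranging $4(q-k)-6<q$ into $4k>3q-6$ is precisely what forces the lower bound $k\geq\tfrac{3q-4}{4}$. Second, Proposition~\ref{prop10}: ruling out the surviving second syndrome family reduces to showing that a certain cubic character/Gauss sum does not vanish, so that the augmented determinant $\det\bigl(\bm c_{r,\theta}(x_1)\,|\,\cdots\,|\,\bm c_{r,\theta}(x_{r-1})\,|\,\bm w^\top\bigr)$ is nonzero for some admissible choice of pairwise-distinct $\alpha_i$. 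The delicate bookkeeping of the symmetric-polynomial coefficients $S_i$ across the three cases of Proposition~\ref{prop9}, together with the estimation of that cubic Gauss sum in Proposition~\ref{prop10}, is where I would concentrate the real effort.
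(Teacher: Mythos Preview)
Your proposal is correct and mirrors the paper's own argument exactly: the theorem is stated there as an immediate consequence of combining Proposition~\ref{prop9}, Proposition~\ref{prop10}, and Corollary~\ref{cor1}, with no further work. One small slip in your final paragraph: to rule out the second syndrome family via Proposition~\ref{prop10} you need the augmented determinant to \emph{vanish} for some admissible choice of pairwise distinct $\alpha_i$ (so that Proposition~\ref{prop8} fails), not to be nonzero.
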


	Finally, we determine all deep holes of ${\rm TRS}_k(\mathbb{F}_q, \theta)$ for even $q$ with $q \geq 8$ and $q-3 \leq k \leq  q-1$.

	\begin{theorem}\label{thm7}
		Suppose $q=2^m \geq 8$. Let $\bm u \in \mathbb{F}^q_q$ and $\bm w^\top =H \cdot \bm u^\top=(w_0,  \cdots, w_{r-1})^\top$, where $H$ is the parity-check matrix of ${\rm TRS}_k(\mathbb{F}_q, \theta)$ given in Eq. \eqref{pc}.
		
		\textnormal{\textbf{(i)}} For $k=q-1$, $\bm u$ is a deep hole of ${\rm TRS}_k(\mathbb{F}_q, \theta)$ if and only if $\bm u$ is generated by Theorem \ref{thm5}.
		
		\textnormal{\textbf{(ii)}} For $k=q-2$, $\bm u$ is a deep hole of ${\rm TRS}_k(\mathbb{F}_q, \theta)$ if and only if $\bm u$ is generated by  $w_0x^{q-1}+w_1x^{q-2}+f_{q-2,\theta}(x)$ with $w_0=0, w_1 \neq 0$ or $w_0 \neq 0$ and ${\rm Tr}(\frac{w_1\theta}{w_0})=1$, where $f_{q-2,\theta}(x) \in \mathcal{S}_{q-2, \theta}$ and ${\rm Tr}(x)$ is the trace function from $\mathbb{F}_q$ to $\mathbb{F}_2$.
		
		\textnormal{\textbf{(iii)}} For $k=q-3$, $\bm u$ is a deep hole of ${\rm TRS}_k(\mathbb{F}_q, \theta)$ if and only if $\bm u$ is given by  Theorem \ref{thm5} or generated by $a(x^{q-2}+\theta^{-1}x^{q-3})+f_{q-3,\theta}(x)$ with $a \neq 0$, $f_{q-3,\theta}(x) \in \mathcal{S}_{q-3, \theta}$ and $2 \nmid m$.
	\end{theorem}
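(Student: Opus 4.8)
The plan is to handle the three cases $k=q-1,q-2,q-3$, i.e.\ $r=q-k=1,2,3$, uniformly through Proposition~\ref{prop8}: $\bm u$ is a deep hole of ${\rm TRS}_k(\mathbb{F}_q,\theta)$ if and only if $\det\big(\bm c_{r,\theta}(\alpha_1)|\cdots|\bm c_{r,\theta}(\alpha_{r-1})|\bm w^\top\big)\neq0$ for all pairwise distinct $\alpha_1,\dots,\alpha_{r-1}\in\mathbb{F}_q$, where $\bm w^\top=H\cdot\bm u^\top$. By Lemma~\ref{det3} and Lemma~\ref{lem4} this determinant, divided by $\theta V(\cdots)$, is an explicit polynomial $D_r$ in the $\alpha_i$, so the deep-hole property becomes the non-vanishing of $D_r$ off the diagonals $\alpha_i=\alpha_j$. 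Throughout I will read conditions on the syndrome $\bm w=(w_0,\dots,w_{r-1})$ back to generating polynomials via Proposition~\ref{prop7}, and identify standard deep holes through Corollary~\ref{cor1} (syndrome $(0,\dots,0,w_{r-1})$ with $w_{r-1}\neq0$).

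For $r=1$ there are no $\alpha_i$, so the criterion reads simply $w_0\neq0$, which is exactly the standard deep hole; this settles (i). For $r=2$ the criterion is $w_1+w_0\alpha+w_0\theta\alpha^2\neq0$ for all $\alpha\in\mathbb{F}_q$. If $w_0=0$ this forces $w_1\neq0$ (a standard deep hole), while if $w_0\neq0$ it asks that the quadratic $w_0\theta\alpha^2+w_0\alpha+w_1$ have no root in $\mathbb{F}_q$, which by the standard characteristic-$2$ solvability criterion (equivalently Proposition~\ref{prop2}(iii)) holds if and only if ${\rm Tr}(w_1\theta/w_0)=1$. Translating by Proposition~\ref{prop7} gives (ii).

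For $r=3$, Lemma~\ref{lem4} gives $D_3=(w_0x_1+w_1)x_2^2+(w_0x_1+w_1)(\theta^{-1}+x_1)x_2+(\theta^{-1}+x_1)w_1x_1+\theta^{-1}w_2$. When $w_0=0$, fixing $x_1$ makes $D_3$ a quadratic in $x_2$ with leading coefficient $w_1$: for $w_1=0$ one reads off the standard family, while for $w_1\neq0$ the characteristic-$2$ root criterion, after the substitution $w=x_1+\theta^{-1}$ and simplification, reduces the requirement ``no root $x_2$ for every $x_1$'' to ${\rm Tr}\big((\theta^{-1}+\sqrt\gamma)\,t\big)=1+{\rm Tr}(1)$ for all $t\in\mathbb{F}_q^*$, where $\gamma=\theta^{-1}w_2/w_1$. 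Since ${\rm Tr}$ is a nontrivial linear form, this forces $\theta^{-1}+\sqrt\gamma=0$, i.e.\ $w_2=w_1\theta^{-1}$, together with ${\rm Tr}(1)=1$, i.e.\ $m$ odd; the degenerate value $x_1=\theta^{-1}$ is consistent. This produces exactly the extra family in (iii).

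I expect the main obstacle to be showing that no syndrome with $w_0\neq0$ gives a deep hole when $r=3$; equivalently, that the cubic curve $D_3(\alpha,\beta)=0$ always has an $\mathbb{F}_q$-point off the diagonal $\alpha=\beta$. After normalizing $w_0=1$, for each $\alpha\notin\{w_1,\theta^{-1}\}$ the number of $\beta$ with $D_3(\alpha,\beta)=0$ equals $1+\chi_1\!\big(R(\alpha)\big)$ by Proposition~\ref{prop2}(iii), where $\chi_1$ is the canonical additive character and $R(\alpha)=\big((\theta^{-1}+\alpha)w_1\alpha+\theta^{-1}w_2\big)\big/\big((\alpha+w_1)(\theta^{-1}+\alpha)^2\big)$. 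A Weil-type bound on the character sum $\sum_\alpha\chi_1(R(\alpha))$, of size $O(\sqrt q)$ controlled by the low pole-degree of $R$, then shows that about $q/2$ values of $\alpha$ yield two roots $\beta$, at least one of which differs from $\alpha$; since the diagonal satisfies $D_3(\alpha,\alpha)=(\theta^{-1}+w_1)\alpha^2+\theta^{-1}w_2$ and so has at most one zero (the subcase $w_1=\theta^{-1}$ being checked directly), an off-diagonal solution persists and $\bm u$ is not a deep hole. As this positivity is tightest for the smallest fields, I would verify $q=8$ (and possibly $q=16$) by hand to cover the full range $q\geq8$. Assembling all cases and translating back with Proposition~\ref{prop7} completes the proof.
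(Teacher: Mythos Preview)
Your treatment of (i), (ii), and the $w_0=0$ half of (iii) is correct and essentially the paper's argument; your handling of $w_0=0$, $w_1\neq 0$ via the substitution $w=x_1+\theta^{-1}$ and the identity ${\rm Tr}(\gamma/w^2)={\rm Tr}(\sqrt\gamma/w)$ is in fact a little cleaner than the paper's Case~2. (One small slip: the root criterion for a char-$2$ quadratic is \cite[Corollary 3.79]{LN97}, which the paper cites, not Proposition~\ref{prop2}(iii).)

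The gap is in (iii) with $w_0\neq 0$. A bound $\big|\sum_\alpha\chi_1(R(\alpha))\big|=O(\sqrt q)$ for a rational function $R$ requires that $R$ not be Artin--Schreier equivalent to a constant, i.e.\ not of the form $g^2+g+c$; you never verify this, and it can genuinely fail for particular $(w_1,w_2)$. For instance, when $w_2=\theta w_1^2(1+\theta w_1)$ the residue of $R$ at $\alpha=w_1$ vanishes, so after the char-$2$ reduction $R$ is equivalent to a function with a single simple pole and the genus of the Artin--Schreier cover drops to $0$; when $w_1=\theta^{-1}$ the two excluded points coalesce and $R$ acquires a triple pole, again changing the analysis. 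In addition, the Weil estimate for additive characters of \emph{rational} functions is not among the paper's stated tools (Propositions~\ref{prop2}--\ref{prop3} cover only polynomials), and even with the optimal constant $2\sqrt q$ the inequality for $q=8$ leaves no margin, so the hand check you propose over all $(w_1,w_2)\in\mathbb{F}_8^2$ is not optional. The paper avoids all of this: instead of a global character-sum estimate it splits into five cases on which of $w_1,w_2$ vanish and, in each, \emph{fixes a convenient value} of one variable (typically $\alpha=\theta^{-1}$) so that the trace condition collapses to an explicit expression such as ${\rm Tr}\!\big(\tfrac{1+w_1\theta}{\theta(\alpha+w_1)}\big)$ or ${\rm Tr}(w_2\theta^2\gamma^3)$, which is then shown to vanish for some admissible choice using only elementary balancing of ${\rm Tr}$ and, in one subcase, the cubic bound of Proposition~\ref{prop2}(i). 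Closing your argument would require either verifying the Artin--Schreier non-degeneracy of $R$ for every $(w_1,w_2)$---which in practice forces a comparable case split---or reverting to the paper's elementary route.
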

	\begin{proof}
		\textbf{(i):} For $k=q-1$,  then $\rho({\rm TRS}_k(\mathbb{F}_q, \theta))=q-k=1$. Thus every non-codeword is a deep hole. The conclusion can be easily verified.
		
		\textbf{(ii): }For $k=q-2$, then by Proposition \ref{prop8}, $\bm u$ is a deep hole of ${\rm TRS}_k(\mathbb{F}_q, \theta)$ if and only if 	for any $\alpha \in \mathbb{F}_q$. 
		\begin{equation}\label{3}
			\det\left(\begin{array}{cc}
				1  & w_0 \\
				\alpha+\theta \alpha^{2} &w_1\\
			\end{array}\right)=\theta w_0\alpha^2+w_0 \alpha +w_1 \neq0,
		\end{equation}
		where $(w_0,w_1)^\top=H{\bm u}^\top$.
		Hence, Eq. \eqref{3} holds $\Longleftrightarrow 
		w_0 =0,w_1 \neq 0$, or,  $w_0 \neq 0$ and 
		$\theta w_0x^2+w_0x +w_1=0$ has no roots in $\mathbb{F}_q$. By \cite[Corollary 3.79]{LN97}, the latter one is equivalent to ${\rm Tr}(\frac{w_1\theta}{w_0})=1$.  
		
		Furthermore, given $w_0, w_1$ with $w_0=0, w_1 \neq 0$ or $w_0 \neq 0$ and ${\rm Tr}(\frac{w_1\theta}{w_0})=1$,  it can be verified that the vector $\bm u_f$ generated by $f(x)=w_0x^{q-1}+w_1x^{q-2}+f_{q-2,\theta}(x)$ satisfies $H\cdot \bm u^\top_f=(w_0, w_1)^\top$. Thus such vectors are all deep holes of ${\rm TRS}_{q-2}(\mathbb{F}_q, \theta)$.
		
		\textbf{(iii): } For $k=q-3$, then by Proposition \ref{prop8}, $\bm u$ is a deep hole of ${\rm TRS}_k(\mathbb{F}_q, \theta)$ if and only if 
		\[D(\alpha,\beta)=\frac{\det\left(\begin{array}{ccc}
				1  & 1 & w_0 \\
				\alpha & \beta &w_1\\
				\alpha^2+\theta \alpha^{3} &\beta^2+\theta \beta^{3} &w_2\\
			\end{array}\right)}{\alpha+\beta}\neq0,\]
		for any $\alpha \neq \beta \in \mathbb{F}_q$,  where $\bm w^\top=(w_0,w_1,w_2)^\top=H{\bm u}^\top$. W.l.o.g., we suppose $\alpha \neq 0$ and $\beta=\lambda \alpha$, then 
		\begin{eqnarray}\label{4}
		    \tilde{D}(\alpha,\lambda)&\triangleq &D(\alpha,\lambda\alpha)\nonumber\\&=&(w_0\alpha+w_1)\theta\alpha^2\lambda^2 +(w_0\alpha+w_1)\cdot \alpha(1+\theta\alpha)\lambda+w_2\nonumber\\
       &&+w_1\alpha(1+\theta\alpha) \neq 0, \text{ for any }\alpha \neq 0 \in \mathbb{F}_q \text{ and } \lambda \neq 1 \in \mathbb{F}_q.
		\end{eqnarray}
Next, we determine all vectors $\bm w^\top\in\mathbb{F}_q^3\setminus\{\bm 0\}$  that satisfy (\ref{4}).
		
		\emph{Case 1}: $w_0=0$ and $w_1=0$. Then $\tilde{D}(\alpha,\lambda)=w_2 \neq 0$, i.e.,  (\ref{4}) holds.
		
		\emph{Case 2}:  $w_0=0$ and $w_1 \neq 0$. Then (\ref{4})  holds if and only if $w_2=\theta^{-1}w_1$ with $2\nmid m$. 
  
		When $w_2=0$, it has $w_2=\tilde{D}(\theta^{-1},0)=0$. When  $w_2\neq 0$ and $2\mid m$, then $\tilde{D}(\frac{w_2\omega}{w_1},\omega)=0$,
		where $\omega$ is the primitive cubic root of unity. When $w_2\neq 0$ and $2\nmid m$, (\ref{4}) holds if and only if  $\frac{w_2}{w_1} \neq \theta\alpha^2\lambda^2+\alpha(1+\theta \alpha)(\lambda+1)$ for any $\alpha \neq 0$ and $\lambda \neq 1$. Let $\alpha=\theta^{-1}$, then $\frac{w_2}{w_1} \neq \frac{\lambda^2}{\theta}$ for any $\lambda\neq 1$. Thus (\ref{4}) holds implies that $\frac{w_2}{w_1}=\theta^{-1}$. Conversely,  let $\frac{w_2}{w_1}=\theta^{-1}$, we show that for any $\alpha \neq 0$ and  $\lambda \neq 1$, $\theta\alpha^2\lambda^2+\alpha(1+\theta \alpha)(\lambda+1) \neq \theta^{-1}$. By contradiction, if $\theta\alpha^2\lambda^2+\alpha(1+\theta \alpha)(\lambda+1) = \theta^{-1}$ for some $\alpha \neq 0$ and  $\lambda \neq 1 $. Then $1+\theta \alpha \neq 0$, otherwise, we can deduce from the equation that $\lambda=1$. Then we obtain that $$ \theta^2\alpha^2\lambda^2+\theta\alpha(1+\theta \alpha)\lambda+\theta\alpha(1+\theta \alpha) +1=0,$$ i.e., $$(\frac{\theta \alpha \lambda}{1+\theta\alpha})^2+\frac{\theta \alpha \lambda}{1+\theta\alpha}+\frac{1+\theta \alpha +\theta^2\alpha^2}{1+\theta^2\alpha^2}=0.$$  Hence, we have ${\rm Tr}(\frac{1+\theta \alpha +\theta^2\alpha^2}{1+\theta^2\alpha^2})={\rm Tr}(1+\frac{\theta \alpha }{1+\theta^2\alpha^2})=0.$  Note that  ${\rm Tr}(\frac{\theta \alpha }{1+\theta^2\alpha^2})=$ ${\rm Tr}(\frac{1 }{1+\theta\alpha}+\frac{1 }{(1+\theta\alpha)^2})=0$, thus, $0={\rm Tr}(1+\frac{\theta \alpha }{1+\theta^2\alpha^2})={\rm Tr}(1)$. However, ${\rm Tr}(1)=1$ since $2 \nmid m$, which leads to a contradiction.
		
		\emph{Case 3}: $w_0 \neq 0$ and $w_1=0$. Then (\ref{4}) does not hold.
  
  Otherwise,  $\frac{w_2}{w_0} \neq \lambda \alpha^2(1+\theta\alpha(1+\lambda))$ for any $\alpha \neq 0$ and $\lambda \neq 1$. Choose $\alpha=\theta^{-1}$, then $\frac{w_2}{w_0} \neq \frac{\lambda^2}{\theta^2}$ for any $\lambda \neq 1$. Thus $\frac{w_2}{w_0} =\frac{1}{\theta^2}$. Next we will show that there exist $\alpha \neq 0$ and $\lambda \neq 1$ such that $
		\lambda \alpha^2(1+\theta\alpha(1+\lambda))=\frac{1}{\theta^2}$
		which leads to a contradiction. The above equation is equivalent to 
		$(\frac{\theta \alpha \lambda}{1+\theta\alpha})^2+\frac{\theta \alpha \lambda}{1+\theta\alpha}+\frac{1}{(1+\theta \alpha)^2\theta\alpha}=0.$
		By \cite[Corollary 3.79]{LN97}, it is equivalent to ${\rm Tr}(\frac{1}{(1+\theta \alpha)^2\theta\alpha})=0$. Note that $${\rm Tr}(\frac{1}{(1+\theta \alpha)^2\theta\alpha})={\rm Tr}(\frac{1}{\theta\alpha}+\frac{1}{1+\theta\alpha}+(\frac{1}{1+\theta\alpha})^2)={\rm Tr}(\frac{1}{\theta\alpha}).$$ Thus when $q \geq 8$, we can always choose $\alpha \in \mathbb{F}_q\backslash\{0, \theta^{-1}\}$ such that ${\rm Tr}(\frac{1}{\theta\alpha})=0$.
		
		\emph{Case 4}:  $w_0 \neq 0$ and $w_2= 0$. Then we can find that $\tilde{D}(\theta^{-1},0)=0$, which implies that (\ref{4}) does not hold.

		\emph{Case 5}: $w_0, w_1, w_2 \neq 0$. Then (\ref{4}) does not hold. 
  
  W.l.o.g., we suppose $w_0=1$. We will show that there always exists $\alpha \neq 0$ and $\lambda \neq 1$ such that $\tilde{D}(\alpha, \lambda)=0$. 
		
		If $w_1 \neq \theta^{-1}$ and $w_2 \neq \theta^{-1}(w_1+\theta^{-1})$, we choose $\alpha=\theta^{-1}$, then $\tilde{D}(\alpha, \lambda)=\theta^{-1}(w_1+\theta^{-1})\lambda^2+w_2$. Obviously, there exists $\lambda \neq 1 \in \mathbb{F}_q$ such that $\tilde{D}(\alpha, \lambda)=0$.
		
		If $w_1 \neq \theta^{-1}$ and $w_2 = \theta^{-1}(w_1+\theta^{-1})$, then $\tilde{D}(\alpha, \lambda)=(\alpha+w_1)\theta\alpha^2\lambda^2+(\alpha+w_1)\alpha(1+\theta\alpha)\lambda +w_1\alpha(1+\theta\alpha)+\theta^{-1}(w_1+\theta^{-1}).$ We suppose $\alpha \neq 0, w_1, \theta^{-1}$, then by \cite[Corollary 3.79]{LN97}, $\tilde{D}(\alpha, \lambda)=0$ has a solution $\lambda \neq 1 \in \mathbb{F}_q$ if and only if 
		\begin{eqnarray*}
			&&{\rm Tr}(\frac{w_1\theta\alpha(1+\theta\alpha)+(w_1+\theta^{-1})}{(\alpha+w_1)(1+\theta\alpha)^2})\\
			&=&{\rm Tr}(\frac{\theta^{-1}(1+w_1\theta)(1+\theta^2\alpha^2)+\theta\alpha(w_1+\alpha)}{(\alpha+w_1)(1+\theta\alpha)^2})\\
			&=& {\rm Tr}(\frac{1+w_1\theta}{\theta(\alpha+w_1)})+{\rm Tr}(\frac{1}{1+\theta \alpha})+{\rm Tr}(\frac{1}{(1+\theta\alpha)^2}) \\
			&=&{\rm Tr}(\frac{1+w_1\theta}{\theta(\alpha+w_1)})=0.
		\end{eqnarray*}
		When $q \geq 8$, we can choose $\alpha \in \mathbb{F}_q\backslash\{0, w_1, \theta^{-1}\}$ such that ${\rm Tr}(\frac{1+w_1\theta}{\theta(\alpha+w_1)})=0$. 
		
		If $w_1 = \theta^{-1}$, then $\tilde{D}(\alpha, \lambda)=\alpha^2(1+\theta \alpha)\lambda^2+\frac{\alpha(1+\theta\alpha)^2}{\theta}\lambda+\frac{\alpha(1+\theta\alpha)}{\theta}+w_2$. When ${\rm Tr}(w_2\theta^2)=0$, then we can choose $\alpha \in \mathbb{F}_q$ such that $\frac{\alpha(1+\theta\alpha)}{\theta}+w_2=0$ and $\tilde{D}(\alpha, 0)=0$. When ${\rm Tr}(w_2\theta^2)=1$, we suppose $\alpha \neq 0, \theta^{-1}$, then $\tilde{D}(\alpha, \lambda)=0$ has a solution $\lambda \neq 1 \in \mathbb{F}_q$ if and only if 
		\begin{align*}    
		&{\rm Tr}(\frac{\alpha\theta(1+\theta\alpha)+w_2\theta^{2}}{(1+\theta\alpha)^3})\\
  =&{\rm Tr}(\frac{\theta \alpha}{(1+\theta\alpha)^2})+{\rm Tr}(\frac{w_2\theta^{2}}{(1+\theta\alpha)^3})\\ =&{\rm Tr}(\frac{w_2\theta^{2}}{(1+\theta\alpha)^3}) =0.\end{align*}
		Let $\gamma=\frac{1}{1+\theta \alpha}$, then $\gamma \in \mathbb{F}_q\backslash\{0,1\}$. If $3 \nmid q-1$, then  $x^3$ is a permutation polynomial of $\mathbb{F}_q$. Thus we can always choose $\gamma \in \mathbb{F}_q\backslash\{0,1\}$ such that  ${\rm Tr}(w_2\theta^2 \gamma^3)=0$ for $q\geq 8$. If $3 \mid q-1$, to obtain a  contradiction, we suppose ${\rm Tr}(w_2\theta^2 \gamma^3)=1$ for any $\gamma \in \mathbb{F}_q\backslash\{0,1\}$.  Then 
		\begin{eqnarray*}
			-(q-2)&=&\sum_{\gamma \in \mathbb{F}_q\backslash\{0,1\}}(-1)^{{\rm Tr}(w_2\theta^2 \gamma^3)}\\
			& =&\sum_{\gamma \in \mathbb{F}_q}(-1)^{{\rm Tr}(w_2\theta^2 \gamma^3)}-1-(-1)^{{\rm Tr}(w_2\theta^2)}\\
			&=&\sum_{\gamma \in \mathbb{F}_q}(-1)^{{\rm Tr}(w_2\theta^2 \gamma^3)}.
		\end{eqnarray*}
		By Proposition \ref{prop2},  we have $|\sum\limits_{\gamma \in \mathbb{F}_q}(-1)^{{\rm Tr}(w_2\theta^2\gamma^3)}| \leq 2\sqrt{q}$, which leads to $q-2 \leq 2\sqrt{q} \Longrightarrow q <8$, contradiction.

		In short, $\bm u$ is a deep hole of ${\rm TRS}_k(\mathbb{F}_q, \theta)$ if and only if $\bm w=a(0, 0, 1)$ or $\bm w=a(0, 1, \theta^{-1})$ when $2 \nmid m$, where $a \in \mathbb{F}^*_q$. The conclusion then follows from Proposition \ref{prop7}.  
	\end{proof}
	
\subsection{Deep Holes of ${\rm TRS}_k(\mathbb{F}_q, \theta)$ for Odd $q$}
In Subsection \ref{sec4.2}, a key point of our proof was that the determinant 
\[\frac{\det\big(\bm c_{r, \theta}(x_1)|\bm c_{r, \theta}(x_2)|\cdots|\bm c_{r, \theta}(x_{r-1})|\bm w^\top\big)}{\theta V(x_1, x_2, \cdots, x_{r-1})}\] can be viewed as a quadratic polynomial in $x_{r-1}$, which has a solution under certain conditions when $q$ is even. However, this approach does not apply to the case where $q$ is odd. Therefore, in this subsection, we will adopt a different method to handle it. For odd $q$, we will determine all deep holes of ${\rm TRS}_k(\mathbb{F}_q, \theta)$ when  $\frac{3q+3\sqrt{q}-7}{4}\leq k\leq q-1$.

Firstly, we treat the case of $\frac{3q+3\sqrt{q}-7}{4}\leq k\leq q-4$ and prove that the standard deep holes are all the deep holes of ${\rm TRS}_k(\mathbb{F}_q, \theta)$ by using Corollary \ref{cor1} and Proposition \ref{prop8}. We start with two lemmas which will be used in the proof of our Theorem \ref{thm9}.

The following lemma shows that any vector whose syndrome is of the form $(0, 0, \cdots, 0, 1, \lambda) \in \mathbb{F}^r_q$ is not a deep hole of ${\rm TRS}_k(\mathbb{F}_q, \theta)$. 
\begin{lemma}\label{lem8}
Suppose $q$ is odd and $4\leq r \leq \frac{q+8}{4}$. For any $\lambda \in \mathbb{F}_q$, denote $\bm v=(0, 0, \cdots, 0, 1, \lambda) \in \mathbb{F}^r_q$. Then there exist pairwise distinct $r-1$ elements $\alpha_1, \alpha_2, \cdots, \alpha_{r-1} \in \mathbb{F}_q$ such that 
\[\det\big(\bm c_{r, \theta}(\alpha_1)|\bm c_{r, \theta}(\alpha_2)|\cdots|\bm c_{r, \theta}(\alpha_{r-1})|\bm v^\top\big)=0.\]
\end{lemma}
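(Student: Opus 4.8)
The plan is to first convert the determinant into an explicit symmetric-function expression, and then reduce the whole statement to the solvability of a single conic over $\mathbb{F}_q$ whose number of points I can control by Proposition \ref{prop4}. First I would expand $\det\big(\bm c_{r, \theta}(\alpha_1)|\cdots|\bm c_{r, \theta}(\alpha_{r-1})|\bm v^\top\big)$ along its last column $\bm v^\top=(0,\dots,0,1,\lambda)^\top$. Only the two minors attached to the entries $1$ and $\lambda$ survive: the one attached to $\lambda$ is the $(r-1)\times(r-1)$ Vandermonde determinant $V(\alpha_1,\dots,\alpha_{r-1})$, while the one attached to $1$ becomes, after splitting the twisted row $\alpha^{r-1}-\theta\alpha^r$ by linearity, a combination of the two generalized Vandermonde determinants in Lemma \ref{det2} with $n=r-1$. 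Tracking the cofactor signs, I expect the identity
\[
\det\big(\bm c_{r, \theta}(\alpha_1)|\cdots|\bm c_{r, \theta}(\alpha_{r-1})|\bm v^\top\big)=V(\alpha_1,\dots,\alpha_{r-1})\Big(\theta\sum_{1\le i\le j\le r-1}\alpha_i\alpha_j-\sum_{i=1}^{r-1}\alpha_i+\lambda\Big).
\]
Since the $\alpha_i$ are to be chosen pairwise distinct, $V(\alpha_1,\dots,\alpha_{r-1})\neq0$, so the vanishing of the determinant is equivalent to the single scalar equation $\theta h_2-e_1+\lambda=0$, where $e_1=\sum_i\alpha_i$ and $h_2=\sum_{i\le j}\alpha_i\alpha_j$.

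Next I would fix $r-3$ pairwise distinct elements $\alpha_1,\dots,\alpha_{r-3}$ (possible since $r\ge4$) and regard the remaining two, $u=\alpha_{r-2}$ and $v=\alpha_{r-1}$, as unknowns. Writing $A=\sum_{i=1}^{r-3}\alpha_i$ and $B=\sum_{1\le i\le j\le r-3}\alpha_i\alpha_j$, the equation $\theta h_2-e_1+\lambda=0$ turns into the conic
\[
\theta(u^2+uv+v^2)+(\theta A-1)(u+v)+(\theta B-A+\lambda)=0.
\]
Producing the required $r-1$ distinct elements then amounts to exhibiting a point $(u,v)$ on this conic with $u\neq v$ and with $u,v\notin\{\alpha_1,\dots,\alpha_{r-3}\}$.

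To count the conic points, since $q$ is odd I would substitute $s=u+v$ and $d=u-v$, so that $u^2+uv+v^2=(3s^2+d^2)/4$ and the conic diagonalizes. When $\mathrm{Char}(\mathbb{F}_q)\neq3$ I complete the square in $s$ to reach the form $3\theta S^2+\theta d^2=b$, and apply Proposition \ref{prop4}: this gives $q-\eta(-3)$ points when $b\neq0$, and at least $q-1$ points whenever $\eta(-3)=1$. The only dangerous subcase, $\eta(-3)=-1$ together with $b=0$ (which collapses the solution set to the single forbidden point $d=0$), I would exclude at the outset by choosing the fixed elements so that $b\neq0$; viewing $b$ as a quadratic polynomial in $\alpha_1$ with nonzero leading coefficient, it has at most two roots, so such a choice exists. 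When $\mathrm{Char}(\mathbb{F}_q)=3$ the quadratic part degenerates to $(u-v)^2$, and after arranging $\theta A\neq1$ I can solve linearly for $s$ in terms of $d$, again obtaining $q$ points. In every case the conic carries at least $q-1$ points.

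Finally I would discard the forbidden points: those with $u=v$ (at most $2$), those with $u=\alpha_i$ (a quadratic in $v$ for each fixed $i$, hence at most $2(r-3)$), and those with $v=\alpha_i$ (at most $2(r-3)$), for a total of at most $4r-10$. The hypothesis $r\le\frac{q+8}{4}$ gives $4r-10\le q-2$, so at least $(q-1)-(q-2)=1$ admissible point remains, yielding distinct $\alpha_1,\dots,\alpha_{r-1}$ with vanishing determinant. I expect the main obstacle to lie precisely in this counting step: the conic can be elliptic, hyperbolic, or degenerate according to $\eta(-3)$ and the characteristic, and one must show it still supports more points than the $O(r)$ forbidden ones. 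This is exactly where the upper bound $r\le\frac{q+8}{4}$ is consumed, and where care is needed to rule out the anisotropic degenerate case through a suitable initial choice of the fixed elements.
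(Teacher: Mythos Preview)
Your proposal is correct and follows essentially the same route as the paper: both reduce the determinant (via Lemma~\ref{det2}) to the vanishing of $\theta\sum_{i\le j}\alpha_i\alpha_j-\sum_i\alpha_i+\lambda$, fix all but two of the $\alpha_i$, pass to the diagonalized conic $3S^2+D^2=b$ via the sum/difference substitution, count its points with Proposition~\ref{prop4}, tune one of the fixed elements to avoid the degenerate case ($b=0$ with $\eta(-3)=-1$, resp.\ $\theta A=1$ when $\mathrm{Char}(\mathbb{F}_q)=3$), and discard at most $4r-10$ forbidden points. The only cosmetic difference is that the paper singles out $\alpha_{r-3}$ as the tuning parameter while you use $\alpha_1$.
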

\begin{proof}
 By Lemmas \ref{det2} and \ref{det3}, we have \begin{eqnarray*}
     &&\det\big(\bm c_{r, \theta}(\alpha_1)|\bm c_{r, \theta}(\alpha_2)|\cdots|\bm c_{r, \theta}(\alpha_{r-1})|\bm v^\top\big)\\
  &=& \lambda  \det \left(\begin{array}{ccc}
		1 & \cdots & 1 \\
		\alpha_1 & \cdots & \alpha_{r-1} \\
		\vdots & \ddots & \vdots \\
		\alpha_1^{r-2} & \cdots & \alpha_{r-1}^{r-2} \\
		\end{array}\right)- \det \left(\begin{array}{ccc}
		1 & \cdots & 1 \\
		\alpha_1 & \cdots & \alpha_{r-1} \\
		\vdots & \ddots & \vdots \\
		\alpha_1^{r-3} & \cdots & \alpha_{r-1}^{r-3} \\
		\alpha_1^{r-1} & \cdots &\alpha_{r-1}^{r-1} \\
		\end{array}\right) \\
  &&+\theta \cdot \det \left(\begin{array}{ccc}
		1 & \cdots & 1 \\
		\alpha_1 & \cdots & \alpha_{r-1} \\
		\vdots & \ddots & \vdots \\
		\alpha_1^{r-3} & \cdots & \alpha_{r-1}^{r-3} \\
		\alpha_1^{r} & \cdots &\alpha_{r-1}^{r} \\
		\end{array}\right)\\
  &=&\theta V(\alpha_1, \cdots, \alpha_{r-1})\big(\theta^{-1}\lambda+ \sum\limits_{1 \leq i \leq j \leq r-1}\alpha_i \alpha_j-\theta^{-1}\sum\limits_{i=1}^{r-1}\alpha_i \big).\\
 \end{eqnarray*}
Denote $\beta_0=\sum\limits_{1 \leq i \leq j \leq r-4}\alpha_i \alpha_j,\beta_1=\sum\limits_{i=1}^{r-4}\alpha_i, b=-\alpha^2_{r-3}+(\theta^{-1}-\beta_1)\alpha_{r-3}+(\theta^{-1}\beta_1-\theta^{-1}\lambda -\beta_0),X=\frac{\alpha_{r-2}+\alpha_{r-1}}{2},$ and $Y=\frac{\alpha_{r-2}-\alpha_{r-1}}{2}$. Then 
\[\theta^{-1}\lambda+ \sum\limits_{1 \leq i \leq j \leq r-1}\alpha_i \alpha_j-\theta^{-1}\sum\limits_{i=1}^{r-1}\alpha_i =3X^2+Y^2-2(\theta^{-1}-\beta_1-\alpha_{r-3})X-b.\]
Thus it is sufficient to show that there exists $\alpha_{r-3}\in \mathbb{F}_q\setminus\{\alpha_1,\cdots,\alpha_{r-4}\}$ such that  the equation  \begin{equation}\label{r1}
 F(X,Y)=3X^2+Y^2-2(\theta^{-1}-\beta_1-\alpha_{r-3})X-b=0   
 \end{equation} has  a solution $(X,Y)\in\mathbb{F}_q^2$ and  $\alpha_1,\cdots,\alpha_{r-4},\alpha_{r-3},X+Y,X-Y$ are all distinct.

 Note that ${\rm N}(F(X,X-\alpha_i))\leq 2, {\rm N}(F(X,\alpha_i-X))\leq 2$ for all $1\leq i\leq r-3 $ and ${\rm N}(F(X,0))\leq 2$, so we only need to show that there exists  $\alpha_{r-3}\in \mathbb{F}_q\setminus\{\alpha_1,\cdots,\alpha_{r-4}\}$ such that 
${\rm N}(F(X,Y))>4(r-3)+2=4r-10$. 

When $\rm Char({\mathbb{F}_q})=3$, we choose $\alpha_{r-3}\in \mathbb{F}_q\setminus\{\alpha_1,\cdots,\alpha_{r-4},\theta^{-1}-\beta_1\}$. Combining with (\ref{r1}), it has ${\rm N}(F(X,Y))=q\geq 4r-8>4r-10$. 
When $\rm Char({\mathbb{F}_q})>3$, $|\mathbb{F}_q\setminus\{\alpha_1,\cdots,\alpha_{r-4}\}|=q-r+4>2$, so we may choose $\alpha_{r-3}\in \mathbb{F}_q\setminus\{\alpha_1,\cdots,\alpha_{r-4}\}$   such that $\tilde{b}\triangleq b+\frac{(\theta^{-1}-\beta_1-\alpha_{r-3})^2}{3}\neq 0$. Moreover, $F(X,Y)$ in equation (\ref{r1}) can be rewritten as $$F(X,Y)=3(X-\frac{\theta^{-1}-\beta_1-\alpha_{r-3}}{3})^2+Y^2-\tilde{b}.$$ By 
Proposition \ref{prop4}, it has that ${\rm N}(F(X,Y))\geq q-1\geq 4r-9>4r-10$.\end{proof}

The following lemma will be used in the second case of the proof of Theorem \ref{thm9}, which considers the vector whose syndrome is of the form $(\bm w', \lambda) \in \mathbb{F}^r_q$ with $\bm w'^{\top} \neq a\bm c_{r-1}(\infty)$ for any $a \in \mathbb{F}_q$.
\begin{lemma}\label{lem9}
 Suppose $3 \leq r \leq \frac{q-3\sqrt{q}+7}{4}$. Given any  subset $\{\alpha_1, \alpha_2, \cdots, \alpha_{r-2}\}  \subseteq \mathbb{F}_q$ and $b \in \mathbb{F}_q$, there exist $\alpha_{r-1} \neq \alpha_{r} \in \mathbb{F}_q\backslash \{\alpha_1, \alpha_2, \cdots, \alpha_{r-2}\}$ such that 
	\[\det\big(\bm c_{r, \theta}(\alpha_1)+b \bm c_{r}(\infty)|\bm c_{r, \theta}(\alpha_2)| \cdots|\bm c_{r, \theta}(\alpha_r)\big)=0.\]
\end{lemma}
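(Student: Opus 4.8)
The plan is to evaluate the determinant explicitly as a function of the two free evaluation points, and then recast the vanishing condition as a point-counting problem on a plane cubic, controlled by a character-sum estimate. Writing $X=\alpha_{r-1}$ and $Y=\alpha_r$ and using multilinearity in the first column, I would split $\det(\bm c_{r,\theta}(\alpha_1)+b\bm c_r(\infty)\mid \bm c_{r,\theta}(\alpha_2)\mid\cdots\mid\bm c_{r,\theta}(\alpha_r))$ into the pieces coming from $\bm c_{r,\theta}(\alpha_1)$ and from $b\,\bm c_r(\infty)$. The first is $V(\alpha_1,\dots,\alpha_r)\,(1-\theta\sum_{i=1}^r\alpha_i)$ by Lemma~\ref{det3}; the second, after expanding along the column $\bm c_r(\infty)=(0,\dots,0,1)^\top$, is a Vandermonde minor equal to $(-1)^{r+1}V(\alpha_2,\dots,\alpha_r)$. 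Factoring $V(\alpha_1,\dots,\alpha_r)=\prod_{j=2}^r(\alpha_j-\alpha_1)\cdot V(\alpha_2,\dots,\alpha_r)$ and setting $c=\prod_{j=2}^{r-2}(\alpha_j-\alpha_1)\neq0$, $s=\sum_{i=1}^{r-2}\alpha_i$, the determinant becomes $V(\alpha_2,\dots,\alpha_r)\,E(X,Y)$ with $E(X,Y)=c\,(X-\alpha_1)(Y-\alpha_1)\bigl(1-\theta(s+X+Y)\bigr)+(-1)^{r+1}b$. Since $\alpha_1,\dots,\alpha_{r-2}$ are distinct, it suffices to produce $X\neq Y$ in $\mathbb{F}_q\setminus S$, $S:=\{\alpha_1,\dots,\alpha_{r-2}\}$, with $E(X,Y)=0$.

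When $b=0$ this is immediate, since the linear factor $1-\theta(s+X+Y)=0$ gives a whole line of zeros of $E$, from which one discards the at most $1+2(r-2)$ points with $X=Y$ or $X,Y\in S$, leaving a valid pair because $q>2r-3$ in our range. Assume henceforth $b\neq0$. After the substitution $u=X-\alpha_1$, $v=Y-\alpha_1$, the equation $E=0$ reads $uv\bigl(A_0-\theta(u+v)\bigr)=B$ with $A_0=1-\theta s-2\theta\alpha_1$ and $B=-(-1)^{r+1}b/c\neq0$. The key step is to parametrize solutions by the symmetric functions $w=u+v$ and $p=uv$: for $w\neq A_0/\theta$ the equation forces $p=B/(A_0-\theta w)$, and then $(u,v)$ exist in $\mathbb{F}_q$ precisely when $w^2-4p$ is a square, contributing $1+\eta(w^2-4p)$ ordered pairs, while $w=A_0/\theta$ contributes none as $B\neq0$. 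Hence the number $N$ of ordered solutions $(X,Y)$ satisfies $N=(q-1)+\sum_{w\in\mathbb{F}_q}\eta\bigl(g(w)\bigr)$, where $g(w)=(A_0-\theta w)\bigl(-\theta w^3+A_0 w^2-4B\bigr)$ has degree $4$ with leading coefficient $\theta^2$; the excluded value $w=A_0/\theta$ may be re-included since $g(A_0/\theta)=0$.

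To bound the character sum I would invoke Proposition~\ref{prop3} with the quadratic character $\eta$ (order $m=2$). The linear factor $A_0-\theta w$ has unique root $w=A_0/\theta$, at which the cubic factor equals $-4B\neq0$; thus the two factors are coprime and the linear factor occurs in $g$ with multiplicity exactly one, so the monic normalization $\theta^{-2}g$ is not the square of a polynomial. Proposition~\ref{prop3} then applies, and as $g$ has at most $4$ distinct roots we obtain $\bigl|\sum_{w}\eta(g(w))\bigr|\le 3\sqrt q$, whence $N\ge q-1-3\sqrt q$. Finally I would bound the ``bad'' ordered solutions: at most $3$ with $X=Y$ (a cubic in $X$), and for $X\in S$ (resp. $Y\in S$) the value $\alpha_1$ gives none since $E(\alpha_1,Y)=(-1)^{r+1}b\neq0$, while each of the remaining $r-3$ elements yields a quadratic with at most $2$ roots, for a total of at most $3+2(r-3)+2(r-3)=4r-9$. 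The hypothesis $r\le\frac{q-3\sqrt q+7}{4}$ is exactly $q-1-3\sqrt q\ge 4r-8$, so $N>4r-9$ and a valid pair $(\alpha_{r-1},\alpha_r)$ exists.

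The main obstacle, and the step that pins down the precise range of $r$, is the passage to the degree-$4$ character sum together with the verification that $g$ is not a perfect square: it is this quartic (rather than a quadratic, as in the reduction behind Lemma~\ref{lem8}) that forces the bound $3\sqrt q$ and hence the $3\sqrt q$ in the hypothesis. Care is also needed in the symmetric-function parametrization to count ordered pairs correctly via $1+\eta(\cdot)$ and to re-include the excluded value of $w$ using $g(A_0/\theta)=0$.
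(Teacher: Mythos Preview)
Your argument is correct and reaches the same endpoint as the paper---a quartic quadratic-character sum bounded by $3\sqrt{q}$ via Proposition~\ref{prop3}, giving $N\ge q-1-3\sqrt q\ge 4r-8>4r-9$---but the route you take to that character sum is different. The paper counts $N(F(X,Y))$ by the standard additive-character expansion $\frac{1}{q}\sum_{X,Y,z}\chi(zF(X,Y))$, evaluates the inner sum over $X$ as a quadratic Gauss sum (Proposition~\ref{prop2}(ii)), and after a further substitution $W=-\beta_0\theta Y$ arrives at $N=q-1+\frac{G(\eta,\chi)^2}{q}\sum_{W}\eta(Wf(W))$ with a specific quartic $Wf(W)$; Proposition~\ref{prop3} is then applied to this sum. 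You instead parametrize the cubic $uv(A_0-\theta(u+v))=B$ by the symmetric functions $w=u+v$, $p=uv$, solve for $p$ in terms of $w$, and use the quadratic formula to count ordered pairs $(u,v)$ as $1+\eta(w^2-4p)$, which yields the quartic $g(w)=(A_0-\theta w)(-\theta w^3+A_0 w^2-4B)$ directly. Your check that $g$ is not a square (the linear factor is coprime to the cubic since its value at $w=A_0/\theta$ is $-4B\neq 0$) is cleaner than the paper's analogous verification. The two approaches are equivalent in strength, but yours is more elementary in that it bypasses the additive-character/Gauss-sum computation entirely and needs only the multiplicative Weil bound; the paper's approach is more mechanical and would generalize more readily to settings where the symmetric-function trick is unavailable.
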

\begin{proof}
By Lemma \ref{det3}, we have
	\begin{eqnarray*}
	&&\det\big(\bm c_{r, \theta}(\alpha_1)+b \bm c_{r}(\infty)|\bm c_{r, \theta}(\alpha_2)| \cdots|\bm c_{r, \theta}(\alpha_r)\big)\\
	&=&\det\big(\bm c_{r, \theta}(\alpha_1)|\bm c_{r, \theta}(\alpha_2)| \cdots| \bm c_{r, \theta}(\alpha_r)\big)+b\det\big(\bm c_{r}(\infty)|\bm c_{r, \theta}(\alpha_2)| \cdots|\bm c_{r, \theta}(\alpha_r)\big)\\
	&=&V(\alpha_1, \alpha_2, \cdots, \alpha_r)(1-\theta\sum_{i=1}^{r}\alpha_i)+bV(\alpha_2, \alpha_3, \cdots, \alpha_r)\\
	&=&V(\alpha_2, \alpha_3, \cdots, \alpha_{r})((1-\theta\sum_{i=1}^{r}\alpha_i)\prod_{j=2}^{r}(\alpha_j-\alpha_1)+b).\\
	\end{eqnarray*}
	Denote $\beta_0=\prod\limits_{i=2}^{r-2}(\alpha_i-\alpha_1)$,  $\xi=1-\theta\sum\limits_{i=2}^{r-2}\alpha_i-2\theta\alpha_1$, $X=\alpha_{r-1}-\alpha_1$, and   $Y=\alpha_r-\alpha_1$. Then we only need to show that the equation
	\begin{equation}\label{5}
	F(X, Y)\triangleq\beta_0XY(\xi-\theta(X+Y))+b=0,
	\end{equation}
	which has a solution $(X, Y) \in \mathbb{F}^2_q$ with $X \neq Y\in \mathbb{F}_q \backslash \mathcal{S}$, where $\mathcal{S}=\{0, \alpha_2-\alpha_1, \cdots, \alpha_{r-2}-\alpha_1\}$. 
	
	If $b=0$, then it is equivalent to show that $F_0(X,Y)\triangleq\theta(X+Y)-\xi=0$ 	has a solution $(X, Y) \in \mathbb{F}^2_q$ with $X \neq Y\in \mathbb{F}_q \backslash \mathcal{S}$. Note that ${\rm N}(F_0(X,Y))=|\{(X, Y) \in \mathbb{F}^2_q: F_0(X,Y)=0\}|=q$. Given $X \in \mathcal{S}$, then $|\{Y \in \mathbb{F}_q: F_0(X,Y)=0\}| =1$ and $|\{(X, Y) \in \mathbb{F}^2_q: F_0(X,Y)=0\textnormal{ and } X=Y\}| =1$. The conclusion then follows from $N_0=q >2\times |\mathcal{S}|+1=2r-1$.
	
For $b \neq 0$, then the solution of Eq. \eqref{5} satisfies that $(X, Y) \in (\mathbb{F}^*_q)^2$.  For each $\beta \in \mathcal{S} \backslash \{0\}$,
 ${\rm N}(F(X,\beta))={\rm N}(F(\beta,X))\leq 2$. 
 %Denote $N=|\{(X, Y) \in (\mathbb{F}^*_q)^2: F(X,Y)=0\}|$. Given , then $|\{Y \in \mathbb{F}_q: F(X,Y)=0\}| \leq 2$ since $F(X, Y)$ is a quadratic equation in terms of $Y$.
 In addition, ${\rm N}(F(X,X))\leq 3$. Thus, we only need to show that ${\rm N}(F(X,Y))>2\times 2 \times (|\mathcal{S}|-1)+3=4r-9$, i.e., ${\rm N}(F(X,Y)) \geq 4r-8$. Next, we use character sums to estimate the value of ${\rm N}(F(X,Y))$.

Let $\chi(x)$ be the canonical additive character of $\mathbb{F}_q$ and  $\eta$ be the quadratic character of $\mathbb{F}_q$. Then \begin{eqnarray*}
	{\rm N}(F(X,Y))&=& \frac{1}{q}\sum_{X, Y \in \mathbb{F}_q}\sum_{z \in \mathbb{F}_q}\chi(zF(X,Y)) \\
	&=&q-1+\frac{1}{q}\sum_{Y,z \in \mathbb{F}^*_q}\sum_{X \in \mathbb{F}_q}\chi(z\beta_0\theta Y X^2+z\beta_0 Y(\theta Y-\xi)X-zb) \\
	&\stackrel{\rm(a)}{=}&q-1+\frac{1}{q}\sum_{Y, z \in \mathbb{F}^*_q}\chi(zb+\frac{z\beta_0Y(\theta Y-\xi)^2}{4\theta})\eta(-z\beta_0\theta Y)G(\eta, \chi) \\
&\stackrel{\rm(b)}{=}& q-1+\frac{G(\eta, \chi)}{q}\sum_{W \in \mathbb{F}^*_q}\eta(W)\sum_{z \in \mathbb{F}^*_q}\chi(zf(W))\eta(z),\\
\end{eqnarray*}
where (a) follows from Proposition \ref{prop2} (ii), (b) is because that $\beta_0\theta\neq 0$ and  $f(W)=b-\frac{W(\beta_0\xi+W)^2}{4\theta^2\beta^2_0}$ 
with  $W=-\beta_0 \theta Y$.  

Denote $\mathcal{Z}_0=\{W\in \mathbb{F}_q^*:f(W)=0\}$, then by Proposition \ref{gauss} and $\sum\limits_{z\in\mathbb{F}^*_q}\eta(z)=0$, we have 
\begin{eqnarray*}
	&&{\rm N}(F(X,Y)) \\
 &=& q-1+\frac{G(\eta, \chi)}{q}\big(\sum_{W \in \mathcal{Z}_0}\eta(W)\sum_{z \in \mathbb{F}^*_q}\eta(z)+\sum_{W \in \mathbb{F}^*_q\backslash \mathcal{Z}_0}\eta(W)\sum_{z \in \mathbb{F}^*_q}\chi(zf(W))\eta(z)\big)\\
 &=&q-1+\frac{G^2(\eta, \chi)}{q}\sum_{W\in \mathbb{F}^*_q\setminus\mathcal{Z}_0}\eta(Wf(W))\\
	&=&q-1+\frac{G^2(\eta, \chi)}{q}\sum_{W \in \mathbb{F}^*_q}\eta(Wf(W)).
\end{eqnarray*}
If $Wf(W)=g^2(W)$ for some non-zero polynomial $g(W) \in \mathbb{F}_q[W]$, then $g(0)=0$, write $g(W)=W\cdot h(W)$ for some nonzero polynomial $h(W) \in \mathbb{F}_q[W]$. Then $f(W)=Wh^2(W)$, which implies that  $b=f(0)=0$, a contradiction.  Thus, we have that $Wf(W)$ can not be a square of some polynomial in $\mathbb{F}_q[W]$. By Proposition \ref{prop3},	 it has 
$$\left|\sum_{W \in \mathbb{F}^*_q}\eta(Wf(W))\right|\leq 3\sqrt{q}.$$
Therefore, ${\rm N}(F(X,Y)) \geq q-1-\frac{|G(\eta, \chi)|^2}{q}|\sum\limits_{W \in \mathbb{F}^*_q}\eta(Wf(W))| \geq q-1-3\sqrt{q}$. 
Since $r \leq \frac{q-3\sqrt{q}+7}{4}$, we can deduce that ${\rm N}(F(X,Y)) \geq q-1-3\sqrt{q} \geq 4r-8$. The proof is completed.\end{proof}

 Seroussi and Roth \cite{SR86} proved the following nice result for the MDS extensions of RS codes.

\begin{theorem}[{\cite[Theorem 1]{SR86}}] \label{roth}
Suppose $\mathcal{A}=\{\alpha_1, \alpha_2, \cdots, \alpha_n\} \subseteq \mathbb{F}_q$ and $2 \leq s \leq n-\frac{q-1}{2}$. Let $G_s=\big(\bm c_{s}(\alpha_1)|\bm c_{s}(\alpha_2)|\cdots |\bm c_s(\alpha_n)\big)$ be the generator matrix of the Reed-Solomon code ${\rm RS}_s(\mathcal{A})$. Suppose $\bm w  \in \mathbb{F}_q^{s}$ be a vector. Then the augmented matrix $\big(G_s | \bm w^{\top}\big)$ generates an $[n+1, s]_q$ MDS code if and only if:
	
1) (For $q$ odd ) $\bm w^{\top}=a \bm c_{s}(\delta)$ for some $\delta \in (\mathbb{F}_q \cup \infty) \backslash \mathcal{A}$ and $a \in \mathbb{F}^*_q$;
	
2) (For $q$ even) $\bm w$ is either as above, or additionally in case $s=3, \bm w=(0, 0, a)$ for some $a \in \mathbb{F}^*_q$.
\end{theorem}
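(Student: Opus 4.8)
The plan is to recast the MDS condition as a non-vanishing statement for a single linear functional attached to $\bm w$, and then to classify all such functionals by a descent on $s$. Since $G_s$ generates the MDS code ${\rm RS}_s(\mathcal{A})$, any $s$ of its columns are already independent, so $(G_s\mid\bm w^\top)$ is MDS if and only if for every $(s-1)$-subset $B=\{\beta_1,\dots,\beta_{s-1}\}\subseteq\mathcal{A}$ the columns $\bm c_s(\beta_1),\dots,\bm c_s(\beta_{s-1}),\bm w^\top$ are independent. The orthogonal complement of $\mathrm{span}\{\bm c_s(\beta):\beta\in B\}$ is one-dimensional and is spanned by the coefficient vector of $P_B(x)=\prod_{\beta\in B}(x-\beta)$, since a polynomial of degree $<s$ vanishing on the $s-1$ points of $B$ is a scalar multiple of $P_B$. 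Hence, writing $\bm w=(w_0,\dots,w_{s-1})$ and defining $L$ on $\mathbb{F}_q[x]_{<s}$ by $L(x^j)=w_j$, the extension is MDS if and only if $L(P_B)\neq 0$ for every monic squarefree $P_B$ of degree $s-1$ with all roots in $\mathcal{A}$. First I would verify the ``if'' direction: for $\bm w=a\,\bm c_s(\delta)$ with $\delta\in\mathbb{F}_q$ one has $L=a\,\mathrm{ev}_\delta$ and $L(P_B)=a\prod_{\beta\in B}(\delta-\beta)\neq 0$ because $\delta\notin\mathcal{A}$, while $\bm w=a\,\bm c_s(\infty)$ makes $L$ the leading-coefficient functional, so $L(P_B)=a\neq 0$; the even exceptional vector is checked directly.

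For the converse I would induct downward on $s$. Fixing $\beta_0\in\mathcal{A}$ and setting $L'(Q)=L((x-\beta_0)Q)$ on $\mathbb{F}_q[x]_{<s-1}$, one gets $L'(P_{B'})=L(P_{B'\cup\{\beta_0\}})\neq 0$ for every $(s-2)$-subset $B'\subseteq\mathcal{A}\setminus\{\beta_0\}$, so $L'$ obeys the same hypothesis with parameters $(s-1,n-1,\mathcal{A}\setminus\{\beta_0\})$; the bound $s\le n-\tfrac{q-1}{2}$ is preserved because it is equivalent to $s-1\le (n-1)-\tfrac{q-1}{2}$. By induction $L'$ is a nonzero multiple of some evaluation $\mathrm{ev}_\delta$ with $\delta\in(\mathbb{F}_q\cup\infty)\setminus(\mathcal{A}\setminus\{\beta_0\})$; pulling this back through $L'(Q)=L((x-\beta_0)Q)$ shows that $L$ agrees with a multiple of $\mathrm{ev}_\delta$ on the hyperplane $\{R:R(\beta_0)=0\}$, whence $L=c\,\mathrm{ev}_\delta+c'\,\mathrm{ev}_{\beta_0}$ for scalars $c,c'$ (with a derivative functional replacing $\mathrm{ev}_{\beta_0}$ in the collision $\delta=\beta_0$, a degenerate tangent direction to be excluded separately, and with the independence of $\delta$ from the choice of $\beta_0$ to be confirmed). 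The base cases are $s=2$, which is immediate since $L(x-\beta)=w_1-\beta w_0$ forces $\bm w$ to be an evaluation at a point off $\mathcal{A}$, and $q$ even with $s=3$, where the exception must be produced.

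It then remains to force $c'=0$, so that $L$ is a single evaluation and $\bm w=a\,\bm c_s(\delta)$. If $c'\neq0$, then for an $(s-1)$-subset $B\not\ni\beta_0$ the value $L(P_B)=c\,P_B(\delta)+c'\,P_B(\beta_0)$ vanishes exactly when $\prod_{\beta\in B}\frac{\delta-\beta}{\beta_0-\beta}=-c'/c$, and the whole point is that when $\mathcal{A}$ is large enough ($n\ge s+\tfrac{q-1}{2}$) some admissible $B$ realizes this value, contradicting the hypothesis. This is the step I expect to be the main obstacle: controlling, uniformly over all $(s-1)$-subsets, whether this product can avoid a prescribed value, which is precisely what simultaneously fixes the exact threshold $n\ge s+\tfrac{q-1}{2}$ and isolates the unique place where the forcing fails, namely $q$ even and $s=3$. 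Geometrically this is a rigidity (Segre-type) statement: an arc of $\mathrm{PG}(s-1,q)$ meeting a normal rational curve in this many points must lie on an extension of that curve, the one extra possibility being the nucleus that every conic of $\mathrm{PG}(2,q)$ carries in even characteristic, which is exactly the additional vector recorded in part 2).
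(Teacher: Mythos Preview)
The paper does not prove this theorem; it is quoted from Seroussi and Roth \cite{SR86} and used as a black box in the proof of Theorem~\ref{thm9}. There is therefore no proof in the paper to compare against.

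As to your proposal itself: the reformulation via the linear functional $L$ and the verification of the ``if'' direction are correct, and the inductive descent producing $L = c\,\mathrm{ev}_\delta + c'\,\mathrm{ev}_{\beta_0}$ is a reasonable opening. But the step you yourself flag as the main obstacle is a genuine gap, not a routine detail. Showing that some admissible $(s-1)$-subset $B\subseteq\mathcal{A}\setminus\{\beta_0\}$ makes the product $\prod_{\beta\in B}(\delta-\beta)/(\beta_0-\beta)$ hit the prescribed value $-c'/c$ is essentially as hard as the theorem; naive counting of the range of such products over subsets does not suffice, and the threshold $n\ge s+\tfrac{q-1}{2}$ must be used sharply. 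Moreover, the independence of $\delta$ from the choice of $\beta_0$ that you defer is itself nontrivial and is entangled with the $c'=0$ step: varying $\beta_0$ yields a family of two-term decompositions of $L$, and extracting a single evaluation from that family is exactly where the work lies. The Segre-type rigidity you invoke at the end is the correct geometric picture, but it is the \emph{content} of the theorem rather than a tool available to prove it; the original argument in \cite{SR86} (and, in the arc-theoretic literature, Segre's lemma of tangents) supplies the missing mechanism that your outline does not yet provide.
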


Applying the above lemmas and Seroussi and Roth's result, we determine the deep holes of  ${\rm TRS}_k(\mathbb{F}_q, \theta)$ for odd $q$.
\begin{theorem}\label{thm9}
	Suppose $q$ is odd. If $\frac{3q+3\sqrt{q}-7}{4} \leq k \leq q-4$, then  Theorem \ref{thm5} provides all deep holes of ${\rm TRS}_k(\mathbb{F}_q, \theta)$.
\end{theorem}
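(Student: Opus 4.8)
The plan is to work entirely with syndromes. By Corollary \ref{cor1} it suffices to show that if $\bm u$ is a deep hole with syndrome $\bm w^\top=H\cdot\bm u^\top=(w_0,\dots,w_{r-1})^\top$, then $w_0=\cdots=w_{r-2}=0$; here $w_{r-1}\neq 0$ is automatic, since $\bm w=\bm 0$ would force $\bm u\in{\rm TRS}_k(\mathbb{F}_q,\theta)$. Writing $\bm w'=(w_0,\dots,w_{r-2})$ for the ``top part'', I would argue the contrapositive: assuming $\bm w'\neq\bm 0$, I produce pairwise distinct $\alpha_1,\dots,\alpha_{r-1}\in\mathbb{F}_q$ with $\det\big(\bm c_{r,\theta}(\alpha_1)|\cdots|\bm c_{r,\theta}(\alpha_{r-1})|\bm w^\top\big)=0$, contradicting Proposition \ref{prop8}. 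The key structural observation is that the first $r-1$ coordinates of $\bm c_{r,\theta}(\alpha)$ are exactly $\bm c_{r-1}(\alpha)$, so the top $r-1$ rows of $H$ form the generator matrix $G_{r-1}$ of ${\rm RS}_{r-1}(\mathbb{F}_q)$. This lets Seroussi--Roth's Theorem \ref{roth} (with $s=r-1$, $\mathcal{A}=\mathbb{F}_q$, legitimate since $2\le r-1\le\frac{q+1}{2}$) control $\bm w'$.

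I would split according to whether $\bm w'$ is a multiple of $\bm c_{r-1}(\infty)$. In the first case $\bm w'=a\,\bm c_{r-1}(\infty)$ with $a\neq0$, that is $\bm w=(0,\dots,0,a,w_{r-1})$; scaling to $a=1$ and invoking Lemma \ref{lem8} directly produces distinct $\alpha_1,\dots,\alpha_{r-1}$ with vanishing determinant, so $\bm u$ is not a deep hole. In the second case $\bm w'\neq a\,\bm c_{r-1}(\infty)$ for every $a$, Theorem \ref{roth} (odd $q$) tells me $\big(G_{r-1}\mid\bm w'^\top\big)$ is \emph{not} MDS; since any $r-1$ Vandermonde columns are independent, the only possible dependency involves $\bm w'$, so $\bm w'$ lies in the span of some $r-2$ columns. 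Writing $\bm w'=\sum_{i=1}^{t}c_i\,\bm c_{r-1}(\gamma_i)$ in reduced form with $c_i\neq0$ and $t\le r-2$, and subtracting the corresponding twisted columns, I obtain $\bm w^\top=\sum_{i=1}^{t}c_i\,\bm c_{r,\theta}(\gamma_i)+b_0\,\bm c_r(\infty)$ for a scalar $b_0$ pinned down by the last coordinate.

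To finish the second case I would absorb the leftover term $b_0\,\bm c_r(\infty)$ into a single twisted column: regroup as $\bm w^\top=c_1\big(\bm c_{r,\theta}(\gamma_1)+c_1^{-1}b_0\,\bm c_r(\infty)\big)+\sum_{i=2}^{t}c_i\,\bm c_{r,\theta}(\gamma_i)$, take the given set of Lemma \ref{lem9} to be $\{\gamma_1,\dots,\gamma_t\}$ padded with arbitrary distinct points up to size $r-2$ (possible as $t-1\le r-3$), and apply Lemma \ref{lem9} with $\alpha_1=\gamma_1$ and parameter $c_1^{-1}b_0$. The lemma supplies two further distinct points $\alpha_{r-1}\neq\alpha_r$ in the complement making $\bm c_{r,\theta}(\gamma_1)+c_1^{-1}b_0\,\bm c_r(\infty)$ lie in the span of the $r-1$ twisted columns indexed by $\alpha_2,\dots,\alpha_r$; since $\gamma_2,\dots,\gamma_t$ are among those indices, all of $\bm w^\top$ lands in that $(r-1)$-dimensional span. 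Thus $\bm w^\top$ is a linear combination of $r-1$ columns of $H$, so by Proposition \ref{prop6} (equivalently Proposition \ref{prop8}) $\bm u$ is not a deep hole. Combining both cases forces $\bm w'=\bm 0$, whence $\bm u$ is standard.

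I expect the main obstacle to be the bookkeeping in the second case: correctly transferring the RS non-MDS conclusion about $\bm w'$ into a genuine dependence of the full vector $\bm w^\top$ on $r-1$ \emph{twisted} columns, and checking that the point-count constraints (the $t-1$ forced indices fitting inside the $r-2$ allowed ones, together with the two free points from Lemma \ref{lem9}) are simultaneously satisfiable. The quantitative heart is hidden in Lemma \ref{lem9}, whose character-sum estimate ${\rm N}(F)\ge q-1-3\sqrt q\ge 4r-8$ is exactly what fixes the admissible range $r\le\frac{q-3\sqrt q+7}{4}$, equivalently $k\ge\frac{3q+3\sqrt q-7}{4}$.
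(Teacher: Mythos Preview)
Your proposal is correct and follows essentially the same route as the paper: split the syndrome $\bm w=(\bm w',\lambda)$ according to whether $\bm w'$ is a nonzero multiple of $\bm c_{r-1}(\infty)$, invoke Lemma~\ref{lem8} in the first case, and in the second case use Seroussi--Roth (Theorem~\ref{roth}) to place $\bm w'$ in the span of $r-2$ Vandermonde columns, then apply Lemma~\ref{lem9} to absorb the residual $b_0\,\bm c_r(\infty)$ term and express $\bm w^\top$ as a combination of $r-1$ twisted columns. The only cosmetic difference is that the paper works directly with $r-2$ evaluation points (some coefficients possibly zero, relabelling so $a_1=1$), whereas you take a reduced expression with $t\le r-2$ nonzero terms and pad; both lead to the same invocation of Lemma~\ref{lem9}.
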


\begin{proof}
Denote $r=q-k$, then $4 \leq r \leq \frac{q-3\sqrt{q}+7}{4}$.	Suppose $\bm u$ is a deep hole of ${\rm TRS}_k(\mathbb{F}_q, \theta)$ and $\bm w^{\top}=H \cdot \bm u^{\top}$. Write $\bm w=(\bm w', \lambda),$ where $\bm w' \in \mathbb{F}^{r-1}_q$. 
	
(i): If $\bm w'^{\top}=a\bm c_{r-1}(\infty)$ for some $a \in \mathbb{F}^*_q$, then by Lemma \ref{lem8}, there exist distinct elements $\alpha_1, \alpha_2, \cdots, \alpha_{r-1} \in \mathbb{F}_q$ such that $\bm c_{r, \theta}(\alpha_1), \bm c_{r, \theta}(\alpha_2), \cdots, \bm c_{r, \theta}(\alpha_{r-1})$ and $H\cdot \bm u^\top$ are $\mathbb{F}_q$-linearly dependent, which contradicts Proposition \ref{prop8}.
	
(ii): If $\bm w'^{\top}\neq a\bm c_{r-1}(\infty)$ for any $a \in \mathbb{F}_q$, by Theorem \ref{roth}, there exist distinct elements $\alpha_1, \alpha_2, \cdots, \alpha_{r-2} \in \mathbb{F}_q$ such that $\bm c_{r-1}(\alpha_1),\bm c_{r-1}(\alpha_2),\cdots, \bm c_{r-1}(\alpha_{r-2})$ and $\bm w'^{\top}$ are linearly dependent, i.e., there exist not all zero elements $a_1, \cdots, a_{r-2} \in \mathbb{F}_q$, such that
$\bm w'=\sum_{i=1}^{r-2}a_i\bm c_{r-1}(\alpha_i).$
 Thus
\begin{eqnarray*}
		\bm w ^{\top}&=& \left(\begin{array}{c}
	   \sum\limits_{i=1}^{r-2}a_i\bm c_{r-1}(\alpha_i) \\
		   0
		\end{array}\right)+\lambda \bm c_r(\infty)\\
  &=& \sum_{i=1}^{r-2}a_i\left(\begin{array}{c}
			\bm c_{r-1}(\alpha_i) \\
			0
		\end{array}\right)+\lambda \bm c_r(\infty)\\
		&=& \sum_{i=1}^{r-2}a_i\bm c_{r, \theta}(\alpha_i)+b \bm c_r(\infty),\\
\end{eqnarray*}
where $b=-\sum_{i=1}^{r-2}a_i(\alpha^{r-1}_i-\theta \alpha^r_i)+\lambda$.
Without loss of generality, we suppose $a_1=1$. Then by Lemma \ref{lem9}, there exist not all zero elements $b_1, \cdots, b_{r-2} \in \mathbb{F}_q$ and $\alpha_{r-1} \neq \alpha_{r} \in \mathbb{F}_q\backslash \{\alpha_1, \alpha_2, \cdots, \alpha_{r-2}\}$ , such that
\[\bm c_{r, \theta}(\alpha_1)+b\bm c_{r}(\infty)=\sum_{i=2}^{r}b_i\bm c_{r, \theta}(\alpha_i).\]
Hence $\bm w^{\top}=\sum\limits_{i=2}^{r}(a_i+b_i)\bm c_{r, \theta}(\alpha_i)$, which contradicts to Proposition \ref{prop8}.

Therefore, we deduce that $\bm w'=\bm 0$, i.e., $\bm w=(0, 0, \cdots, 0, \lambda)$. If $\lambda=0$, then $\bm w= \bm 0$, hence $\bm u \in {\rm TRS}_k(\mathbb{F}_q, \theta)$, which is not a deep hole. So we have $\lambda \neq 0.$ Then the conclusion follows from Corollary \ref{cor1}.  
\end{proof}

	Finally, we determine all deep holes of ${\rm TRS}_k(\mathbb{F}_q, \theta)$ for odd $q>16$ and  $k\in \{q-3,q-2,q-1\}$.

	\begin{theorem}\label{thm10}
		Suppose $q$ is odd with $q>16$. Let $\bm u \in \mathbb{F}^q_q$ and $\bm w^\top =H \cdot \bm u^\top=(w_0, \cdots, w_{r-1})^\top$, where $H$ is the parity-check matrix of ${\rm TRS}_k(\mathbb{F}_q, \theta)$ given in Eq. \eqref{pc}.
		\begin{itemize}
		    \item [\textbf{(i)}]  When $k=q-1$, $\bm u$ is a deep hole of ${\rm TRS}_k(\mathbb{F}_q, \theta)$ if and only if $\bm u$ is generated by Theorem \ref{thm5}.
		
		\item [\textbf{(ii)}] When $k=q-2$, $\bm u$ is a deep hole of ${\rm TRS}_k(\mathbb{F}_q, \theta)$ if and only if $\bm u$ is generated by  $w_0x^{q-1}+w_1x^{q-2}+f_{q-2,\theta}(x)$ with $w_0=0, w_1 \neq 0$ or $w_0 \neq 0$ and $\eta(w_0^2-4w_0w_1\theta)=-1$.
		
		\item [\textbf{(iii)}] When $k=q-3$, $\bm u$ is a deep hole of ${\rm TRS}_k(\mathbb{F}_q, \theta)$ if and only if $\bm u$ is given by  Theorem \ref{thm5} or generated by $a(x^{q-2}+\frac{1}{3\theta}x^{q-3})+f_{q-3,\theta}(x)$ with $a \neq 0$, $f_{q-3,\theta}(x) \in \mathcal{S}_{q-3, \theta}$ and $\eta(-3)=-1$.
  \end{itemize}
	\end{theorem}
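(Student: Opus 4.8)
The plan is to handle the three dimensions $k=q-1,q-2,q-3$ separately, in each case using Proposition~\ref{prop8} to rephrase ``$\bm u$ is a deep hole'' as the requirement that $\det\big(\bm c_{r,\theta}(\alpha_1)|\cdots|\bm c_{r,\theta}(\alpha_{r-1})|\bm w^\top\big)\neq 0$ for all pairwise distinct $\alpha_i\in\mathbb{F}_q$, where $\bm w^\top=H\cdot\bm u^\top=(w_0,\dots,w_{r-1})^\top$ and $r=q-k$; one then reads off the admissible syndromes $\bm w$ and converts them back to generating polynomials via Proposition~\ref{prop7} and Corollary~\ref{cor1}. For $k=q-1$ ($r=1$) the covering radius is $1$, so every non-codeword is a deep hole and these coincide with the standard deep holes. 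For $k=q-2$ ($r=2$), expanding the $2\times 2$ determinant with $\bm c_{2,\theta}(\alpha)=(1,\alpha-\theta\alpha^2)^\top$ produces the quadratic $w_0\theta\alpha^2-w_0\alpha+w_1$, and requiring it to have no root in $\mathbb{F}_q$ is, for odd $q$, exactly the statement that either $w_0=0,\,w_1\neq 0$, or $w_0\neq 0$ with non-square discriminant, i.e. $\eta(w_0^2-4w_0w_1\theta)=-1$. This settles parts (i) and (ii) with only elementary input.

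The real content is part (iii) ($r=3$). Here I expand the $3\times 3$ determinant with $\bm c_{3,\theta}(\alpha)=(1,\alpha,\alpha^2-\theta\alpha^3)^\top$, factor out $(\beta-\alpha)$, and rewrite the surviving symmetric factor in the power sums $s=\alpha+\beta$, $p=\alpha\beta$ as $\hat D=p\big(w_0(1-\theta s)-w_1\theta\big)+\big(w_1\theta s^2-w_1 s+w_2\big)$. A pair of \emph{distinct} $\alpha,\beta\in\mathbb{F}_q$ with $\hat D=0$ exists precisely when, for some $s$, the value of $p$ forced by $\hat D=0$ makes the discriminant $s^2-4p$ a non-zero square. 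The subcase $w_0=w_1=0$ gives $\hat D=w_2$, i.e. the standard deep holes $(0,0,w_2)$ with $w_2\neq0$. When $w_0=0,\,w_1\neq0$ (normalize $w_1=1$), solving $\hat D=0$ gives $s^2-4p=-3s^2+4\theta^{-1}s-4\theta^{-1}w_2$; completing the square rewrites this as $-3u^2+C$ with $C=\tfrac{4}{3\theta^2}(1-3w_2\theta)$, which vanishes exactly when $w_2=w_1/(3\theta)$. Counting solutions of $v^2+3u^2=C$ by Proposition~\ref{prop4} shows a non-zero square value is attained for \emph{some} $s$ whenever $C\neq0$ (since one gets at least $q-3$ solutions with $v\neq0$), and also when $C=0,\ \eta(-3)=1$; it is never attained only when $C=0$ and $\eta(-3)=-1$. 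This isolates the extra family $\bm w=a(0,1,\tfrac{1}{3\theta})$ with $\eta(-3)=-1$, which by Proposition~\ref{prop7} corresponds to the polynomial $a\big(x^{q-2}+\tfrac{1}{3\theta}x^{q-3}\big)$ in the statement.

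The hard part, which I expect to be the main obstacle, is the case $w_0\neq 0$, where I must show $\bm u$ is \emph{never} a deep hole. Since the coefficient $A(s)=-w_0\theta s+(w_0-w_1\theta)$ of $p$ is a genuine degree-one polynomial, $\hat D=0$ determines $p$, and $s^2-4p=N(s)/A(s)$ with $N$ cubic; thus $s^2-4p$ is a non-zero square for some $s$ iff the quartic $\Phi(s)=N(s)A(s)$ takes a non-zero square value. To guarantee such an $s$ exists I would apply the Weil-type bound of Proposition~\ref{prop3} to $\sum_{s}\eta(\Phi(s))$, where $\Phi$ has $d\le 4$ distinct roots, obtaining $\#\{s:\Phi(s)\text{ a non-zero square}\}\ge\tfrac12\big(q-4-3\sqrt q\big)$, which is positive exactly when $q>16$ (matching the hypothesis). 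The degenerate possibility that $\Phi$ is a perfect square as a polynomial must be dispatched separately, but there $\Phi(s)$ is a non-zero square at all but finitely many $s$, so the conclusion is immediate.

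Assembling the three dimensions and translating the admissible syndromes back through Proposition~\ref{prop7} and Corollary~\ref{cor1} then yields precisely the list in the statement: the standard deep holes in all three cases, together with the discriminant family $\eta(w_0^2-4w_0w_1\theta)=-1$ for $k=q-2$ and the family $a\big(x^{q-2}+\tfrac{1}{3\theta}x^{q-3}\big)$ under $\eta(-3)=-1$ for $k=q-3$.
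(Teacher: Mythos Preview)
Your proposal is correct and follows essentially the same strategy as the paper's proof: parts (i) and (ii) are handled identically, and for part (iii) you reduce, via symmetric coordinates, to a quartic taking a non-zero square value and finish with the Weil-type bound of Proposition~\ref{prop3}, exactly as the paper does. The only cosmetic difference is that you parametrize pairs $\{\alpha,\beta\}$ by $(s,p)=(\alpha+\beta,\alpha\beta)$ and study the discriminant $s^2-4p$, whereas the paper uses $(Z_1,Z_2)=\big(\tfrac{\alpha+\beta}{2},\tfrac{\alpha-\beta}{2}\big)$ and asks when $Z_2\neq 0$; your quartic $\Phi(s)=N(s)A(s)$ is precisely the paper's $b(Z_1)$ after the change of variable.
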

 \begin{proof}
     \textbf{(i):} For $k=q-1$,  we have $\rho({\rm TRS}_k(\mathbb{F}_q, \theta))=q-k=1$. Thus every non-codeword is a deep hole. The conclusion can be easily verified.
		
		\textbf{(ii): }For $k=q-2$, by Proposition \ref{prop8}, $\bm u$ is a deep hole of ${\rm TRS}_k(\mathbb{F}_q, \theta)$ if and only if, for any $\alpha \in \mathbb{F}_q$, 
		\begin{equation}\label{3odd}
			\det\left(\begin{array}{cc}
				1  & w_0 \\
				\alpha-\theta \alpha^{2} &w_1\\
			\end{array}\right)=\theta w_0\alpha^2-w_0 \alpha +w_1 \neq0.
		\end{equation}	
		Hence, Eq. \eqref{3odd} holds $\Longleftrightarrow 
		w_0 =0,w_1 \neq 0$, or,  $w_0 \neq 0$ and 
		$\theta w_0x^2+w_0x +w_1=0$ has no roots in $\mathbb{F}_q$. One can easily verify that the latter is equivalent to $\eta(w^2_0-4w_1w_0\theta)=-1$ and the vector $\bm u_f$ generated by $f(x)=w_0x^{q-1}+w_1x^{q-2}+f_{q-2,\theta}(x)$ satisfies $H\cdot \bm u^\top_f=(w_0, w_1)^\top$. Thus such vectors are all deep holes of ${\rm TRS}_{q-2}(\mathbb{F}_q, \theta)$.
		
		\textbf{(iii): } For $k=q-3$, by Proposition \ref{prop8}, $\bm u$ is a deep hole of ${\rm TRS}_k(\mathbb{F}_q, \theta)$ if and only if for any $\alpha \neq \beta \in \mathbb{F}_q$,
		\[D(\alpha,\beta)=\frac{\det\left(\begin{array}{ccc}
				1  & 1 & w_0 \\
				\alpha & \beta &w_1\\
				\alpha^2-\theta \alpha^{3} &\beta^2-\theta \beta^{3} &w_2\\
			\end{array}\right)}{\beta-\alpha}\neq0,\]
		 %Let $\alpha=\frac{Z_1+Z_2}{2},\beta=\frac{Z_1-Z_2}{2}$.
  which is  equivalent to that  for any $Z_2\in\mathbb{F}_q^*$ and $Z_1\in\mathbb{F}_q$, 
   \begin{eqnarray}\label{8}
       g(Z_1,Z_2)&\triangleq& D(Z_1+Z_2,Z_1-Z_2) \nonumber\\
       &=&(\theta w_1-w_0+2\theta w_0Z_1)Z_2^2-\Big(2\theta w_0Z_1^3-(3\theta w_1+w_0)Z_1^2 \nonumber\\
       &&+2w_1Z_1-w_2\Big)\neq 0.
   \end{eqnarray}
   In other words, $\bm u$ is a deep hole of ${\rm TRS}_k(\mathbb{F}_q, \theta)$ if and only if ${\rm N}(g(Z_1,Z_2))={\rm N}(g(Z_1,0))$. 
   
   Next, we determine all such vectors $\bm w \in\mathbb{F}_q^3\setminus\{\bm 0\}$  such that $\bm u$ is a deep hole of ${\rm TRS}_k(\mathbb{F}_q, \theta)$.
		
		\emph{Case 1}: $w_0=0$ and $w_1=0$. Then $g(Z_1,Z_2)=w_2 \neq 0$, i.e.,  Eq. (\ref{8}) holds.
  
  \emph{Case 2}: $w_0=0$ and $w_1\neq 0$. Then $g(Z_1,Z_2)=\theta w_1(Z_2^2+3(Z_1-\frac{1}{3\theta})^2-b),$ where $b=\frac{w_1-3\theta w_2}{3\theta^2w_1}$. By Proposition \ref{prop4}, it has that 
  \begin{equation*}
      {\rm N}(g(Z_1,Z_2))=q+v(b)\eta(-3)=\begin{cases}
          1 & {\rm if~} w_1=3\theta w_2 \text{~and~} \eta (-3)=-1,\\
         2q-1 & {\rm if~} w_1=3\theta w_2 \text{~and~} \eta (-3)=1,\\
        q-\eta(-3) & {\rm otherwise}. 
      \end{cases}
  \end{equation*}  
  %Note that $(g(Z_1,Z_2)\neq 0$ for $Z_2\in \mathbb{F}^*_q$. If $b\neq 0$, then  ${\rm N}(g(Z_1,Z_2))= q-\eta(-3)\geq q-1>2\geq {\rm N}(g(Z_1,0))$, which implies that (\ref{8}) doesn't hold. 
  %If  $b=0$ and $\eta(-3)=1$, then ${\rm N}(g(Z_1,Z_2))= q-\eta(-3)\geq q-1>2\geq {\rm N}(g(Z_1,0))$ 
 Moreover, it has that ${\rm N}(g(Z_1,0))=1$  when $w_1=3\theta w_2,$ and $ \eta (-3)=-1$ and ${\rm N}(g(Z_1,0))\leq 2$ in other cases.  Hence, $\bm w=w_1(0,1,\frac{1}{3\theta})$ for $w_1\neq 0$  is a desired vector.

 \emph{Case 3}: $w_0\neq 0$. We claim that $g(Z_1,Z_2)$ at least has a solution $(Z_1,Z_2)$ with $Z_2\neq 0$ in this case. 
 
 Suppose $b(Z_1)=(\theta w_1-w_0+2\theta w_0Z_1)(2\theta w_0Z_1^3-(3\theta w_1+w_0)Z_1^2+2w_1Z_1-w_2)$, $\mathcal{Z}_0=\{Z_1\in\mathbb{F}_q:b(Z_1)=0\}$, $\mathcal{Z}_+=\{Z_1\in\mathbb{F}_q\setminus \mathcal{Z}_0:\eta(b(Z_1))=1\}$, and $\mathcal{Z}_-=\{Z_1\in\mathbb{F}_q\setminus \mathcal{Z}_0:\eta(b(Z_1))=-1\}$. Note that ${\rm N}(b(Z_1))\leq \deg (b(Z_1))=4$, and $(\theta w_1-w_0+2\theta w_0Z_1)g(Z_1,Z_2)=((\theta w_1-w_0+2\theta w_0Z_1)Z_2)^2-b(Z_1)$, then it holds that \begin{itemize}
     \item [1)] $|\mathcal{Z}_0|={\rm N}(b(Z_1))\leq 4$;
     \item[2)] ${\rm N}(g(Z_1,Z_2))\geq 2|\mathcal{Z}_+|$, this is because that for any $Z_1\in \mathcal{Z}_+$ and $$g(Z_1,\frac{\mp\sqrt{b(Z_1)}}{\theta w_1-w_0+2\theta w_0Z_1})=0;$$
     \item [3)] $q=|\mathcal{Z}_0|+|\mathcal{Z}_+|+|\mathcal{Z}_-|$;
     \item[4)] $\sum_{Z_1\in\mathbb{F}_q}\eta(b(Z_1))=|\mathcal{Z}_+|-|\mathcal{Z}_-|=2|\mathcal{Z}_+|+|\mathcal{Z}_0|-q\leq 2|\mathcal{Z}_+|+4-q$.
 \end{itemize}
 If $b(Z_1)$ is a  square of some polynomial in $\mathbb{F}_q[Z_1]$, that is, $|\mathcal{Z}_-|=0$, then $${\rm N}(g(Z_1,Z_2))\geq 2|\mathcal{Z}_+|\geq 2(q-4)>4\geq {\rm N}(g(Z_1,0)).$$
If  $b(Z_1)$ isn't a  square of some polynomial in $\mathbb{F}_q[Z_1]$, then it follows from  Proposition \ref{prop3} and $q>16$ that $2|\mathcal{Z}_+|\geq q-4-3\sqrt{q}=(\sqrt{q}-4)(\sqrt{q}+1)>0$.  Hence, there exists $Z_1\in \mathcal{Z}_+$ such that $Z_2=\frac{\mp\sqrt{b(Z_1)}}{\theta w_1-w_0+2\theta w_0Z_1}\neq 0$ and $g(Z_1,Z_2)=0$. 

In a word, $\bm u$ is a deep hole of ${\rm TRS}_k(\mathbb{F}_q, \theta)$ if and only if $\bm w=a(0, 0, 1)$ or $\bm w=a(0, 1, \frac{1}{3\theta})$ when $\eta(-3)=-1$, where $a \in \mathbb{F}^*_q$. The conclusion then follows from Proposition \ref{prop7}. 
  
 \end{proof}

\section{Conclusion}\label{sec5}
	
Due to its crucial role in decoding, determining the deep holes of Reed-Solomon codes is of great significance. As a generalization of Reed-Solomon codes, twisted Reed-Solomon codes have received increasing attention from scholars in recent years and have important applications in coding theory and cryptography. In this paper, we focus on the deep hole problem of twisted RS codes. For a general evaluation set $\mathcal{A} \subseteq \mathbb{F}_q$, we determine the covering radius and obtain a standard class of deep holes of twisted RS codes ${\rm TRS}_k(\mathcal{A}, \theta)$. Furthermore, we investigate the classification of deep holes of the full-length twisted RS code ${\rm TRS}_k(\mathbb{F}_q, \theta)$. We prove that there are no other deep holes of ${\rm TRS}_k(\mathbb{F}_q, \theta)$ for $\frac{3q-4}{4} \leq k\leq q-4$ when $q\geq 8$ is even, and $\frac{3q+3\sqrt{q}-7}{4} \leq k\leq q-4$ when $q>16$ is odd. We also completely determine their deep holes for $q-3 \leq k \leq q-1$. 
For odd $q$, our results have a narrower range of values for the dimension $k$ compared to the case where $q$ is even. One reason for this is that in the proof of Lemma \ref{lem9}, we utilized the following estimate:
\[\left|\sum_{W \in \mathbb{F}^*_q}\eta(Wf(W))\right|\leq 3\sqrt{q}.\]
Improving the estimate of the absolute value of this character sum will yield better results for odd $q$. 

Although the deep hole problem for TRS codes is similar to that of RS codes, the former seems to be more challenging. In this paper, we not only utilized many results such as character sums (Gauss sums) and the number of solutions to equations over finite fields, but we also observed that even completely determining all deep holes for some boundary cases (see Theorems \ref{thm7} and \ref{thm10}) is far from straightforward. This paper is a preliminary exploration of the deep hole problem of TRS codes. The main results focus on the situation that the dimension $k$ is relatively large (close to $q$). Investigating the case of $k$ is relatively small might be an interesting problem for future work. Furthermore, based on our results and Conjecture \ref{conj1} for RS codes, we propose the following conjecture on the completeness of deep holes for TRS codes.
	
\begin{conjecture}
For $2 \leq k \leq q-4$, all deep holes of ${\rm TRS}_k(\mathbb{F}_q, \theta)$ have generating polynomials $f(x)=ax^{k}+f_{k,\theta}(x)$ with $a \neq 0$ and $f_{k,\theta}(x) \in \mathcal{S}_{k,\theta}$.
\end{conjecture}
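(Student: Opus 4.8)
The plan is to recast the conjecture as a statement about syndromes and then to push the two methods of the paper past their current thresholds. By Proposition \ref{prop8} together with Corollary \ref{cor1}, the conjecture is equivalent to the following assertion: for every syndrome $\bm w=(w_0,\dots,w_{r-1})\in\mathbb{F}_q^r$ (with $r=q-k$, so $4\le r\le q-2$) that is \emph{not} of the standard form $(0,\dots,0,w_{r-1})$ with $w_{r-1}\ne 0$, there exist pairwise distinct $\alpha_1,\dots,\alpha_{r-1}\in\mathbb{F}_q$ with
\[\det\big(\bm c_{r,\theta}(\alpha_1)|\cdots|\bm c_{r,\theta}(\alpha_{r-1})|\bm w^\top\big)=0.\]
Equivalently, such a $\bm w^\top$ must lie in the column span of some $r-1$ of the vectors $\{\bm c_{r,\theta}(\alpha):\alpha\in\mathbb{F}_q\}$. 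Thus the whole conjecture reduces to exhibiting, for each non-standard $\bm w$, a single ``bad'' evaluation set, and the work splits by the parity of $q$ exactly as in the even and odd analyses above.

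For even $q$ I would start from the setting of Lemmas \ref{lem4} and \ref{lem5}: after the substitution that kills the linear term the determinant reduces to $\tilde f\,x_{r-1}^2+\tilde h$, and a deep hole forces the polynomial $P(x_1,\dots,x_{r-3})$ of Lemma \ref{lem5} to vanish identically on $\mathbb{F}_q^{r-3}$. At present one concludes $P\equiv 0$ only through the Schwartz--Zippel Lemma \ref{SZlem}, which needs $\deg_{x_i}P=4r-6<q$, i.e. $r<\tfrac{q+6}{4}$; this is precisely the origin of the bound $\tfrac{3q-4}{4}\le k$ in Proposition \ref{prop9}. The plan is to replace Schwartz--Zippel by a \emph{structural} analysis of $P$. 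Since $P$ is an explicit product of the Vandermonde, of the linear forms $\sum_j x_j+\theta^{-1}+x_t$, and of the quadratics $\tilde f\,x_i^2+\tilde h$, one can try to isolate, via the Combinatorial Nullstellensatz (choosing in each variable a surviving monomial of degree below $q$ whenever one exists), the same relations $w_0=\dots=w_{r-4}=0$ and $w_{r-3}\theta+w_{r-2}=0$ of Proposition \ref{prop9}, and then invoke Proposition \ref{prop10} verbatim to discard the remaining family. The obstacle is that once $r$ is large the per-variable degree exceeds $q$, so reducing modulo $x_i^q-x_i$ mixes monomials and no single coefficient is manifestly isolated.

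For odd $q$ the bottleneck is quantitative. The reduction of Theorem \ref{thm9} rests on Lemma \ref{lem9}, whose proof yields
\[{\rm N}(F(X,Y))=q-1+\frac{G^2(\eta,\chi)}{q}\sum_{W\in\mathbb{F}_q^*}\eta\big(Wf(W)\big),\]
and then uses $\big|\sum_{W}\eta(Wf(W))\big|\le 3\sqrt q$ from Proposition \ref{prop3} to guarantee ${\rm N}(F)\ge q-1-3\sqrt q$, which must beat the count $4r-8$ of forbidden configurations, forcing $r\le\tfrac{q-3\sqrt q+7}{4}$. To enlarge the range I would attack both sides: sharpen the character-sum input (the cubic $Wf(W)$ has a very special shape, so one may hope for cancellation beyond the generic Weil bound, or replace the one-variable estimate by a genuine two-variable point count for the plane curve $F(X,Y)=0$), and simultaneously shrink the forbidden set by choosing the auxiliary parameters so that far fewer than $4r$ collisions must be avoided. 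A more speculative alternative is an inductive reduction: puncturing ${\rm TRS}_k(\mathbb{F}_q,\theta)$ at one coordinate yields a twisted code of length $q-1$ on $\mathbb{F}_q\setminus\{\alpha_0\}$ with the same $k$ and hence smaller $r$; if a non-standard deep hole of the full-length code could be shown to puncture to a non-standard deep hole of such a shorter code, the large-$r$ cases would descend to smaller ones, though this requires completeness results for proper evaluation sets that are themselves not yet available.

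The main obstacle, in either parity, is the genuinely \emph{intermediate} range. Both toolkits succeed precisely when $r\lesssim q/4$, i.e. when the $r-1$ required evaluation points occupy only a small fraction of $\mathbb{F}_q$ and there is ample room to avoid collisions; once $r$ approaches $q/2$ the $r-1$ distinct points nearly exhaust the field, the forbidden set is comparable to $q$, and crude counting can no longer guarantee a valid bad configuration. This is exactly the regime in which the analogous Cheng--Murray conjecture (Conjecture \ref{conj1}) for Reed--Solomon codes remains open, so I do not expect an elementary resolution; the most promising route is to import and adapt the exponential-sum and sieving machinery of \cite{ZCL16} (used for small $k$ in the RS case) to the twisted determinant, whose extra factor $1-\theta\sum_i\alpha_i$ from Lemma \ref{det3} is the single new feature distinguishing the twisted from the classical problem and the one that any such adaptation must control.
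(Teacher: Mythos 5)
You should note first that the statement you were asked to prove is precisely the open conjecture the paper itself poses in its conclusion: the paper contains no proof of it, only partial results (Theorems \ref{even} and \ref{thm9} for $\frac{3q-4}{4}\le k\le q-4$ resp. $\frac{3q+3\sqrt{q}-7}{4}\le k\le q-4$, plus the boundary cases $k\ge q-3$). Your opening reduction is correct and matches the paper's framework: by Proposition \ref{prop8} and Corollary \ref{cor1}, the conjecture is equivalent to showing that every nonzero, non-standard syndrome $\bm w$ admits distinct $\alpha_1,\dots,\alpha_{r-1}$ with $\det\big(\bm c_{r,\theta}(\alpha_1)|\cdots|\bm c_{r,\theta}(\alpha_{r-1})|\bm w^\top\big)=0$, and your diagnosis of where the bounds $4r-6<q$ (via Lemma \ref{SZlem}) and $q-1-3\sqrt{q}\ge 4r-8$ (via Proposition \ref{prop3} in Lemma \ref{lem9}) come from is accurate.

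However, what you have written is a research program, not a proof, and by your own account every step that would extend the known range is left open. Concretely: (1) the proposed Combinatorial Nullstellensatz replacement for Schwartz--Zippel is never executed --- you do not exhibit a surviving monomial of per-variable degree below $q$ in $P$, and for $r\ge\frac{q+6}{4}$ no such monomial need exist, since reduction modulo $x_i^q-x_i$ is then unavoidable and the identity $P\equiv 0$ on $\mathbb{F}_q^{r-3}$ no longer forces the coefficient relations $w_0=\cdots=w_{r-4}=0$, $w_{r-3}\theta+w_{r-2}=0$ of Proposition \ref{prop9}; without those relations Proposition \ref{prop10} cannot be invoked at all. (2) The hoped-for improvement of $\big|\sum_{W}\eta\big(Wf(W)\big)\big|\le 3\sqrt{q}$ is purely conjectural; the Weil bound is generically sharp for quartic arguments of $\eta$, and you identify no structural feature of $Wf(W)=W\big(b-\frac{W(\beta_0\xi+W)^2}{4\theta^2\beta_0^2}\big)$ that would produce extra cancellation, nor does a two-variable point count for $F(X,Y)=0$ obviously help, since the deficit is in the forbidden-configuration count $4r-8$ growing linearly in $r$, not in the main term. (3) The puncturing induction presupposes a completeness theorem for deep holes of ${\rm TRS}_k(\mathcal{A},\theta)$ on proper subsets $\mathcal{A}\subsetneq\mathbb{F}_q$, which the paper explicitly does not have (Theorem \ref{thm5} gives only one class of deep holes for general $\mathcal{A}$, and already for RS codes proper evaluation sets admit additional deep holes), so the induction has no base or transfer step. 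Finally, the conjecture's range starts at $k=2$, and nothing in your plan addresses small or intermediate $k$ beyond the observation that this regime is open even for Conjecture \ref{conj1}; importing the machinery of \cite{ZCL16} is named but not adapted. In short, the reduction is right, the survey of obstacles is right, but no part of the conjecture beyond what Theorems \ref{even}, \ref{thm7}, \ref{thm9} and \ref{thm10} already establish is actually proved.
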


It is also interesting to study the deep holes of ${\rm TRS}_k(\mathcal{A}, \theta)$ for more general evaluation set $\mathcal{A} \subseteq \mathbb{F}_q$ and other classes of twisted Reed-Solomon codes.

\section*{Acknowledgments}
The work of Weijun Fang was supported in part by the National Key Research and Development Program of China under Grant Nos. 2021YFA1001000 and 2022YFA1004900, the National Natural Science Foundation of China under Grant No. 62201322, and the Shandong Provincial Natural Science
Foundation under Grant No. ZR2022QA031, and the Taishan Scholar Program of Shandong Province. The work of Jingke Xu is supported in part by the National Natural Science Foundation of China under Grant No. 12201362,
and the Shandong Provincial Natural Science
Foundation under Grant No. ZR2021QA043.

\appendix

\section{ Proof of Proposition \ref{prop10}}\label{A}

Before proving Proposition \ref{prop10}, we need some notions and auxiliary lemmas. 
Suppose $q=2^m$, and  $\chi$ is the canonical additive character of $\mathbb{F}_q$. Let $\lambda$ be the multiplicative character of $\mathbb{F}_q$ of order 3 when $m$ is even,
and  $\pi$ be a primitive element of $\mathbb{F}_q$.

\begin{lemma}\label{A10}
  % Suppose $q=2^m$, and  $\chi$ is the canonical additive character of $\mathbb{F}_q$. Let $\lambda$ be the multiplicative character of $\mathbb{F}_q$ of order 3 if $m$ is even. Then 
For $a \in \mathbb{F}^*_q$, 
\[\sum_{X \in \mathbb{F}_q}\chi(aX^3)=
\begin{cases}
0  & \text { if } m \text { is odd},\\
(\Bar{\lambda}(a)+\Bar{\lambda}^2(a))\sqrt{q}  & \text { if } \frac{m}{2} \text { is odd},\\ 
-(\Bar{\lambda}(a)+\Bar{\lambda}^2(a))\sqrt{q} &\text { if } \frac{m}{2} \text { is even}.
\end{cases}\]
\end{lemma}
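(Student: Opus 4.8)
The plan is to split on the parity of $m$, since the cubing map $X \mapsto X^3$ is a bijection of $\mathbb{F}_q$ precisely when $m$ is odd. \emph{Case $m$ odd.} Here $2^m \equiv 2 \pmod 3$, so $\gcd(3, q-1) = 1$ and $X \mapsto X^3$ permutes $\mathbb{F}_q$. Hence $\sum_{X \in \mathbb{F}_q}\chi(aX^3) = \sum_{Y \in \mathbb{F}_q}\chi(aY) = 0$ for $a \neq 0$, because $\chi$ is a nontrivial additive character. This disposes of the first case immediately.

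\emph{Case $m$ even.} Now $3 \mid q-1$ and the order-$3$ character $\lambda$ exists. First I would isolate the term $X = 0$ and rewrite the remaining sum over $Y = X^3$ using the standard cube-root count $\#\{X : X^3 = Y\} = 1 + \lambda(Y) + \lambda^2(Y)$ for $Y \in \mathbb{F}_q^*$. This gives
\[\sum_{X \in \mathbb{F}_q}\chi(aX^3) = 1 + \sum_{Y \in \mathbb{F}_q^*}\bigl(1 + \lambda(Y) + \lambda^2(Y)\bigr)\chi(aY).\]
The constant part contributes $1 + \sum_{Y \in \mathbb{F}_q^*}\chi(aY) = 1 + (-1) = 0$, while the two character parts are Gauss sums twisted by $a$: by Proposition \ref{gauss}(i), $\sum_{Y \in \mathbb{F}_q^*}\lambda^j(Y)\chi(aY) = \overline{\lambda^j(a)}\,G(\lambda^j, \chi)$. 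Therefore $\sum_X \chi(aX^3) = \bar\lambda(a)G(\lambda,\chi) + \bar\lambda^2(a)G(\lambda^2,\chi)$, and it remains only to evaluate the two cubic Gauss sums.

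The crux, and the main obstacle, is to show $G(\lambda,\chi) = G(\lambda^2,\chi) = (-1)^{m/2-1}\sqrt q$; the parity of $m/2$ appearing in the statement enters exactly through the sign of this Gauss sum. This is the \emph{semiprimitive} case, since $2 \equiv -1 \pmod 3$: I would pass to the subfield $\mathbb{F}_4 = \mathbb{F}_{2^2}$, on which $\mathbb{F}_4^* = \langle \omega\rangle$ has order $3$, and compute the base Gauss sum directly as $G(\lambda_0, \chi_0) = 1 - \lambda_0(\omega) - \lambda_0(\omega^2) = 1 - (\zeta_3 + \zeta_3^2) = 2$, using that $\mathrm{Tr}_{\mathbb{F}_4/\mathbb{F}_2}(\omega) = \mathrm{Tr}_{\mathbb{F}_4/\mathbb{F}_2}(\omega^2) = 1$. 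Writing $q = 4^t$ with $t = m/2$ and taking $\lambda = \lambda_0 \circ N_{\mathbb{F}_q/\mathbb{F}_4}$, $\chi = \chi_0 \circ \mathrm{Tr}_{\mathbb{F}_q/\mathbb{F}_4}$, the Davenport--Hasse lifting relation \cite{LN97} gives $-G(\lambda,\chi) = (-G(\lambda_0,\chi_0))^t = (-2)^t$, whence $G(\lambda,\chi) = (-1)^{t-1}2^t = (-1)^{m/2-1}\sqrt q$. The identity $G(\lambda^2,\chi) = \overline{G(\lambda,\chi)} = G(\lambda,\chi)$ follows since $\lambda(-1)=1$ in characteristic $2$ and the value is real (or by applying the same lifting to $\lambda_0^2$). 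Substituting into the displayed formula yields $(-1)^{m/2-1}\sqrt q\,(\bar\lambda(a)+\bar\lambda^2(a))$, and reading off $(-1)^{m/2-1} = +1$ for $m/2$ odd and $-1$ for $m/2$ even completes the proof. The delicate point throughout is the sign of the semiprimitive Gauss sum; an alternative to Davenport--Hasse would be to observe that the whole sum is a rational integer (as $\chi$ takes values in $\{\pm 1\}$) and combine $|G(\lambda,\chi)|^2 = q$ from Proposition \ref{gauss}(ii) with a Stickelberger-type sign analysis, but the lifting argument is the most economical.
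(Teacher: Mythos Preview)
Your proposal is correct and follows essentially the same approach as the paper: reduce the cubic exponential sum to $\bar\lambda(a)G(\lambda,\chi)+\bar\lambda^2(a)G(\lambda^2,\chi)$ and then evaluate the cubic Gauss sum. The only difference is that the paper simply cites \cite[Theorem~5.30]{LN97} for the reduction and \cite{SE11} for the value $G(\lambda,\chi)=(-1)^{m/2-1}\sqrt{q}$, whereas you supply self-contained arguments for both (cube-root counting for the reduction, and the Davenport--Hasse lift from $\mathbb{F}_4$ for the Gauss-sum evaluation); your computations check out.
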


\begin{proof}
    By \cite{SE11}, it holds that $$G(\lambda,\chi)=\begin{cases}
     \sqrt{q} &\text { if } \frac{m}{2} \text { is odd},\\ 
-\sqrt{q} &\text { if } \frac{m}{2} \text { is even}. 
    \end{cases}
$$
By \cite[Theorem 5.30]{LN97}, we obtain that for any $a \in \mathbb{F}^*_q$,
\[\sum_{X \in \mathbb{F}_q}\chi(aX^3)=\begin{cases}
    0  & \text { if } m \text { is odd},\\ 
\Bar{\lambda}(a)G(\lambda,\chi)+\Bar{\lambda}^2(a)G(\lambda^2,\chi) &\text { if } m \text { is even}.
\end{cases}\]
Then the conclusion follows from the above two identities.   
 \end{proof}

\begin{lemma}\label{A11}
   Suppose $q=2^{m}$ and $a \in \mathbb{F}^*_q$, then 
   \[{\rm N}(XY(X+Y)+a)=\begin{cases}
       q-2 &\text { if } m \text { is odd},\\ 
q-2+2\sqrt{q} &\text { if } \frac{m}{2} \text { is odd and } \lambda(a)=1,\\ 
q-2+\sqrt{q} &\text { if } \frac{m}{2} \text { is even and } \lambda(a)\neq 1.
   \end{cases}
\]
\end{lemma}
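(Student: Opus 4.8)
The plan is to reduce the count ${\rm N}(XY(X+Y)+a)$ to the cubic character sum already evaluated in Lemma \ref{A10}, obtaining the clean identity ${\rm N}(XY(X+Y)+a)=q-2+\sum_{X\in\mathbb{F}_q}\chi(aX^3)$, from which all three listed cases follow immediately. First I would express the count through additive-character orthogonality: since $\frac{1}{q}\sum_{z\in\mathbb{F}_q}\chi(zt)$ is the indicator of $t=0$, we have ${\rm N}(XY(X+Y)+a)=\frac{1}{q}\sum_{z\in\mathbb{F}_q}\sum_{X,Y\in\mathbb{F}_q}\chi\big(z(XY(X+Y)+a)\big)$. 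The term $z=0$ contributes $q$, and for $z\neq0$ the summand factors as $\chi(az)\,T(z)$ with $T(z)=\sum_{X,Y}\chi\big(zXY(X+Y)\big)$, so it remains to evaluate $T(z)$ for $z\in\mathbb{F}_q^*$.

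The key step is to compute $T(z)$ by summing over $Y$ first. For fixed $X$ the inner sum is $\sum_{Y}\chi\big(zX\,Y^2+zX^2\,Y\big)$, a sum of exactly the shape treated in Proposition \ref{prop2}(iii) with $b=1$, $a_2=zX$, $a_1=zX^2$, $a_0=0$. When $X=0$ the inner sum is trivially $q$; when $X\neq0$ the coefficient $a_2=zX$ is nonzero, and the vanishing criterion $a_2+a_1^2=zX+z^2X^4=zX(1+zX^3)=0$ reduces, in characteristic $2$, to $zX^3=1$, in which case the inner sum equals $\chi(0)q=q$ and otherwise vanishes. Hence $T(z)=q+q\,\big|\{X\in\mathbb{F}_q^*:zX^3=1\}\big|$.

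Then I would assemble everything. The sum $\sum_{z\neq0}\chi(az)T(z)$ splits as $q\sum_{z\neq0}\chi(az)+q\sum_{z\neq0}\chi(az)\big|\{X\in\mathbb{F}_q^*:zX^3=1\}\big|$. The first piece equals $-q$ because $a\neq0$. In the second, for each $X\in\mathbb{F}_q^*$ there is a unique $z=X^{-3}$, so it equals $q\sum_{X\in\mathbb{F}_q^*}\chi(aX^{-3})=q\sum_{X\in\mathbb{F}_q^*}\chi(aX^{3})$ after the substitution $X\mapsto X^{-1}$, that is $q\big(\sum_{X\in\mathbb{F}_q}\chi(aX^3)-1\big)$. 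Dividing by $q$ and adding the $q$ coming from $z=0$ gives the identity ${\rm N}(XY(X+Y)+a)=q-2+\sum_{X\in\mathbb{F}_q}\chi(aX^3)$. Substituting the three values from Lemma \ref{A10}, and using in the even cases that $\lambda(a)=1$ forces $\bar\lambda(a)+\bar\lambda^2(a)=2$ whereas $\lambda(a)\neq1$ (a primitive cube root of unity) forces $\bar\lambda(a)+\bar\lambda^2(a)=-1$, yields exactly the three claimed values $q-2$, $q-2+2\sqrt{q}$, and $q-2+\sqrt{q}$.

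The main obstacle is the careful application of Proposition \ref{prop2}(iii): one must track that the canonical additive character corresponds to $b=1$, verify that the vanishing criterion simplifies precisely to the cube equation $zX^3=1$, and handle the degenerate row $X=0$ separately. Once $T(z)$ is pinned down, the passage to Lemma \ref{A10} is routine bookkeeping with the bijections $z=X^{-3}$ and $X\mapsto X^{-1}$ on $\mathbb{F}_q^*$, and no Gauss sum manipulation is needed beyond what Lemma \ref{A10} already provides.
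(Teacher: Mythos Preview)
Your proposal is correct and follows essentially the same approach as the paper: both use additive-character orthogonality, apply Proposition~\ref{prop2}(iii) to the quadratic inner sum (the paper sums over $X$ with $Y$ fixed, you sum over $Y$ with $X$ fixed, which is immaterial by symmetry), reduce to the cubic character sum, and invoke Lemma~\ref{A10}. The only cosmetic difference is that the paper strips off the degenerate $Y=0$ term before applying Proposition~\ref{prop2}(iii), whereas you package the intermediate result as $T(z)$ and sum over $z$ afterward; the arithmetic and the final identity ${\rm N}(XY(X+Y)+a)=q-2+\sum_{X\in\mathbb{F}_q}\chi(aX^3)$ are identical.
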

\begin{proof}
\begin{eqnarray*}
	{\rm N}(XY(X+Y)+a)&=& \frac{1}{q}\sum_{X, Y \in \mathbb{F}_q}\sum_{z \in \mathbb{F}_q}\chi(z(XY(X+Y)+a)) \\
	&=&q-1+\frac{1}{q}\sum_{z \in \mathbb{F}^*_q}\sum_{Y \in \mathbb{F}^*_q}\sum_{X \in \mathbb{F}_q}\chi(zY X^2+z Y^2X+za) \\
	&\stackrel{\rm(a)}{=}&q-1+\sum_{Y\in \mathbb{F}^*_q}\chi(\frac{a}{Y^3})=q-1+\sum_{Y\in \mathbb{F}^*_q}\chi(aY^3)\\
 &=&q-2+\sum_{Y\in \mathbb{F}_q}\chi(aY^3),\\
\end{eqnarray*}
where Eq. (a) follows from  Proposition \ref{prop2} (iii).
   The conclusion then follows from Lemma \ref{A10}  and the fact that $\Bar{\lambda}(a)+\Bar{\lambda}^2(a)+1=0$ when $\lambda(a)\neq 1$
\end{proof}

\begin{lemma}[\cite{LN97}, Example 6.38]\label{A12}
    For any $b \in \mathbb{F}_q$, ${\rm N}(X^3+Y^3-b) \geq q-2\sqrt{q}-2.$ Consequently, when $q \geq 16$, there exist $x, y \in \mathbb{F}^*_q$, such that $b=x^3+y^3$.
\end{lemma}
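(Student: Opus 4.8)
The plan is to compute ${\rm N}(X^3+Y^3-b)$ exactly by Gauss and Jacobi sums, in the spirit of the diagonal--equation count \cite[Theorem 6.37]{LN97}, and then to obtain the solvability claim by discarding the degenerate solutions. Write $d=\gcd(3,q-1)$. If $d=1$ then $t\mapsto t^{3}$ is a bijection of $\mathbb{F}_q$, so $U=X^3,\,V=Y^3$ turns the equation into $U+V=b$ and ${\rm N}=q$, for which the bound is trivial. The substantive case is $d=3$, where $\lambda$ (as in the appendix) is a multiplicative character of order $3$. Using the cube--counting function $C(t)=\sum_{j=0}^{2}\lambda^{j}(t)$, with the conventions $\lambda^{0}=\psi_0$, $\psi_0(0)=1$ and $\lambda(0)=\lambda^{2}(0)=0$, I would expand
\[
{\rm N}(X^3+Y^3-b)=\sum_{u\in\mathbb{F}_q}C(u)\,C(b-u)=\sum_{i,j=0}^{2}\ \sum_{u\in\mathbb{F}_q}\lambda^{i}(u)\,\lambda^{j}(b-u).
\]

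Next I would evaluate the inner sums. The term $i=j=0$ gives $q$, and each term with exactly one index zero vanishes by orthogonality of the nontrivial characters. For $b\neq 0$ and $i,j\in\{1,2\}$, the substitution $u=bs$ yields $\lambda^{i+j}(b)\,J(\lambda^{i},\lambda^{j})$ with the Jacobi sum $J(\lambda^{i},\lambda^{j})=\sum_{s\in\mathbb{F}_q}\lambda^{i}(s)\lambda^{j}(1-s)$. Since $\lambda^{3}=\psi_0$, the balanced pairs $(1,2),(2,1)$ contribute $J(\lambda,\lambda^{-1})=-\lambda(-1)$ and $J(\lambda^{2},\lambda^{-2})=-\lambda^{2}(-1)$, while $(1,1),(2,2)$ contribute $\lambda^{2}(b)J(\lambda,\lambda)$ and $\lambda(b)J(\lambda^{2},\lambda^{2})$. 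As $\lambda(-1)^{2}=\lambda(1)=1$ and $\lambda(-1)^{3}=\psi_0(-1)=1$ force $\lambda(-1)=1$, this gives
\[
{\rm N}(X^3+Y^3-b)=q-2+\lambda^{2}(b)\,J(\lambda,\lambda)+\lambda(b)\,J(\lambda^{2},\lambda^{2}),\qquad b\neq 0.
\]
Because $\lambda\cdot\lambda=\lambda^{2}\neq\psi_0$ and $\lambda^{2}\cdot\lambda^{2}=\lambda\neq\psi_0$, the relation $J(\lambda,\lambda)=G(\lambda,\chi)^{2}/G(\lambda^{2},\chi)$ combined with Proposition \ref{gauss}(ii) gives $|J(\lambda,\lambda)|=|J(\lambda^{2},\lambda^{2})|=\sqrt q$; moreover $J(\lambda^{2},\lambda^{2})=\overline{J(\lambda,\lambda)}$ and $\lambda(b)=\overline{\lambda^{2}(b)}$, so the correction is real of modulus at most $2\sqrt q$ and ${\rm N}\ge q-2\sqrt q-2$ follows. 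For $b=0$ the same expansion (without the change of variables) gives ${\rm N}(X^3+Y^3)=3q-2$, which again satisfies the bound.

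For the solvability statement I would remove the solutions with $XY=0$. When $b$ is a nonzero non-cube one has $C(b)=0$, hence no degenerate solutions and ${\rm N}\ge q-2\sqrt q-2>0$. When $b\neq 0$ is a cube the degenerate set consists of the $2C(b)\le 6$ pairs with $X=0$ or $Y=0$, so the number of solutions with $x,y\in\mathbb{F}_q^{*}$ is at least ${\rm N}-6\ge q-2\sqrt q-8=(\sqrt q-4)(\sqrt q+2)$; for $b=0$ only $(0,0)$ is degenerate and $3q-3>0$ solutions remain. These estimates are strictly positive for every $q>16$ and for every $q$ with $d=1$. The main obstacle is precisely the boundary $q=16$ with $b$ a nonzero cube, where the crude count degenerates to $0$: here I would use the exact value from Lemma \ref{A10} (for $q=16$ one has $\tfrac{m}{2}$ even, whence $G(\lambda,\chi)=-\sqrt q$ and $J(\lambda,\lambda)=-\sqrt q$) to compute ${\rm N}$ exactly, or equivalently analyze the order-$5$ group of nonzero cubes of $\mathbb{F}_{16}^{*}$ directly, so as to pin down exactly for which $b$ the nonzero representation is guaranteed.
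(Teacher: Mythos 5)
Your Jacobi--sum derivation of the bound is correct, and it is essentially the computation that underlies the paper's only ``proof'' of this lemma, namely the citation to \cite[Example 6.38]{LN97}: the expansion of ${\rm N}$ via the cube-counting function, the vanishing of the mixed terms, $J(\lambda,\lambda^{-1})=J(\lambda^{2},\lambda^{-2})=-1$ (using $\lambda(-1)=1$), and $|J(\lambda,\lambda)|=|J(\lambda^{2},\lambda^{2})|=\sqrt q$ with $J(\lambda^{2},\lambda^{2})=\overline{J(\lambda,\lambda)}$ give ${\rm N}=q-2+2\,\mathrm{Re}\bigl(\lambda^{2}(b)J(\lambda,\lambda)\bigr)\geq q-2\sqrt q-2$ for $b\neq 0$ when $\gcd(3,q-1)=3$, and your treatments of $b=0$ (where ${\rm N}=3q-2$) and of $\gcd(3,q-1)=1$ are also right. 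Your bookkeeping of the degenerate solutions and the resulting count $(\sqrt q-4)(\sqrt q+2)$ of nondegenerate ones are likewise correct for $b$ a nonzero cube.

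The loose end you flagged at $q=16$ is, however, not just an unfinished computation: carrying out exactly the calculation you propose shows that the ``consequently'' clause of the lemma is \emph{false} there. For $q=16$ one has $m=4$, $\frac m2$ even, so by Lemma \ref{A10} (via \cite{SE11}) $G(\lambda,\chi)=G(\lambda^{2},\chi)=-4$, whence $J(\lambda,\lambda)=J(\lambda^{2},\lambda^{2})=-4$; for $b$ a nonzero cube, $\lambda(b)=\lambda^{2}(b)=1$ and ${\rm N}=16-2-8=6$, which is exactly the number of degenerate solutions (three with $X=0$, three with $Y=0$). So there is no representation $b=x^{3}+y^{3}$ with $x,y\in\mathbb{F}_{16}^{*}$; concretely, the nonzero cubes of $\mathbb{F}_{16}$ form the subgroup of order $5$ of $\mathbb{F}_{16}^{*}$, and checking the ten sums of two distinct elements of that subgroup shows every such sum is a non-cube (the sums of two nonzero cubes cover precisely $\{0\}$ and the ten non-cubes). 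Thus your bound is attained with equality at $q=16$, the correct threshold for the second assertion is $q>16$ rather than $q\geq16$, and no proof can close the case you left open: the statement as printed is wrong at its boundary. This is not merely cosmetic for the paper, since Lemma \ref{A13} invokes the decomposition in its case (iii), where $4\mid m$ allows $q=16$ and $b=\pi^{-1}\theta^{-1}w_{r-1}$ can be a nonzero cube, so the defect propagates into the proof of Proposition \ref{prop10} at $q=16$; your analysis, pushed one step further as you yourself suggested, exposes this.
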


\begin{lemma}\label{A13}
Suppose $q=2^m \geq 16$, and  $1 \leq n \leq \frac{q}{4}$. Let $\pi$ be a primitive element of $\mathbb{F}_q$. For $w\in \mathbb{F}_q$, denote $Q(x_1, \cdots, x_{n})=\sum\limits_{i=1}^{n}x^3_i+\theta^{-3}+(\sum\limits_{i=1}^{n}x_i+\theta^{-1})^3+\theta^{-1}w$.  Then there exist pairwise distinct elements $\alpha_1, \cdots, \alpha_{n} \in \mathbb{F}_q$ and $c \in \mathbb{F}_q^*$, such that 
\begin{equation*}
    Q(\alpha_1, \cdots, \alpha_{n})=g(c)\triangleq\begin{cases}c &{\rm if~} 2\nmid m,\\
    c^3 & {\rm if~} 2 \mid m \text{ and } 2\nmid \frac{m}{2},\\
    \pi c^3 & {\rm if~} 2 \mid m \text{ and } 2\mid \frac{m}{2}.\end{cases}\end{equation*} 
\end{lemma}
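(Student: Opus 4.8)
The plan is to reduce the multivariate condition to the univariate and bivariate situations already handled in Lemmas~\ref{A11} and~\ref{A12}, treating $n=1$ separately. First, for $n\ge 2$, I would fix $\alpha_3,\dots,\alpha_n$ to be \emph{any} $n-2$ pairwise distinct elements and regard $\alpha_1,\alpha_2$ as free. Writing $t=\sum_{i=3}^n\alpha_i$, $U=\sum_{i=3}^n\alpha_i^3$ and $\mu=t+\theta^{-1}$, and expanding in characteristic $2$ (using $(A+B+C)^3=A^3+B^3+C^3+AB(A+B)+AC(A+C)+BC(B+C)$), a direct computation collapses $Q$ to
\[
Q=\alpha_1\alpha_2(\alpha_1+\alpha_2)+\mu(\alpha_1+\alpha_2)^2+\mu^2(\alpha_1+\alpha_2)+K,
\]
where $K=U+t^3+\theta^{-1}t^2+\theta^{-2}t+\theta^{-1}w$. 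The crucial algebraic identity is that in characteristic $2$,
\[
XY(X+Y)+\mu(X+Y)^2+\mu^2(X+Y)=(X+\mu)(Y+\mu)(X+Y),
\]
so that, setting $\hat X=\alpha_1+\mu$ and $\hat Y=\alpha_2+\mu$, one obtains the clean form $Q=\hat X\hat Y(\hat X+\hat Y)+K$. Thus the target equation $Q=g(c)$ becomes $\hat X\hat Y(\hat X+\hat Y)=g(c)+K$, exactly the shape counted in Lemma~\ref{A11}.

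Next I would choose the value $a=g(c)+K$ to lie simultaneously in $\operatorname{Im}(g)+K$ and in the ``good'' cubic class where Lemma~\ref{A11} gives a surplus of solutions. For $m$ odd, $\operatorname{Im}(g)=\mathbb{F}_q^*$ and every nonzero $a$ is good, so any $a\in\mathbb{F}_q\setminus\{0,K\}$ works. For $m$ even I would invoke Lemma~\ref{A12}: if $\tfrac m2$ is odd, write $K=x^3+y^3$ with $x,y\ne 0$ and take $a=x^3$ (a nonzero cube, hence good) and $c=y$, so $g(c)+K=y^3+K=x^3=a$; if $\tfrac m2$ is even, write $\pi^{-1}K=x^3+y^3$ and take $a=\pi x^3$ (a non-cube, since $\lambda(\pi)\ne 1$, hence good) and $c=y$, so $g(c)+K=\pi y^3+K=\pi x^3=a$. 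In every case $a\ne 0$, so any solution of $\hat X\hat Y(\hat X+\hat Y)=a$ automatically satisfies $\hat X\ne\hat Y$ and $\hat X,\hat Y\ne 0$.

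I would then finish by counting. By Lemma~\ref{A11} the chosen $a$ gives $\mathrm{N}(\hat X\hat Y(\hat X+\hat Y)=a)\ge q-2$. The only inadmissible solutions are those with $\hat X$ or $\hat Y$ in the $(n-2)$-element set $\{\alpha_i+\mu:3\le i\le n\}$, and each forbidden value kills at most $4$ solutions (two quadratics), so at most $4(n-2)$ are lost. Since $n\le q/4$, we have $q-2-4(n-2)=q+6-4n\ge 6>0$, leaving an admissible $(\hat X,\hat Y)$, hence pairwise distinct $\alpha_1,\dots,\alpha_n$ with $Q=g(c)$. For $n=1$ the reduction is unavailable, but directly $Q(\alpha_1)=\theta^{-1}(\alpha_1^2+\theta^{-1}\alpha_1+w)$; writing the target as $\kappa c^3$ with $\kappa\in\{1,\pi\}$, solvability of $\alpha_1^2+\theta^{-1}\alpha_1+(w+\theta\kappa c^3)=0$ amounts to $\mathrm{Tr}(\theta^2w+\theta^3\kappa c^3)=0$, and the bound $|\sum_{c\in\mathbb{F}_q}\chi(\theta^3\kappa c^3)|\le 2\sqrt q$ from Proposition~\ref{prop2}(i) guarantees at least $\tfrac{q-2\sqrt q}{2}\ge 4$ values of $c$ with the required trace, of which at least one is nonzero.

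The main obstacle, and the real content, is the second step: arranging $a=g(c)+K$ to lie at once in the reachable set $\operatorname{Im}(g)+K$ and in the cubic residue class for which Lemma~\ref{A11} yields the positive surplus (any surplus for $m$ odd, $+2\sqrt q$ or $+\sqrt q$ for $m$ even). This is precisely what the sum-of-two-cubes representation of Lemma~\ref{A12} secures, and it explains why the three branches of $g$ are defined as they are. The algebraic collapse of $Q$ to $\hat X\hat Y(\hat X+\hat Y)+K$ is routine but must be carried out carefully in characteristic $2$.
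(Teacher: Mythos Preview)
Your proposal is correct and rests on the same three ingredients as the paper's proof: the characteristic-$2$ identity collapsing $Q$ to $\hat X\hat Y(\hat X+\hat Y)+K$ (this is exactly the paper's recursion $Q(x_1,\dots,x_n)=XY(X+Y)+Q(x_1,\dots,x_{n-2})$, since your $K$ equals $Q(\alpha_3,\dots,\alpha_n)$), the solution count from Lemma~\ref{A11}, and the sum-of-two-cubes trick from Lemma~\ref{A12} to place $a$ in the favourable cubic class. Your $n=1$ argument also matches the paper's.

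The organizational difference is that the paper proceeds by induction in steps of two, at each stage re-invoking Lemma~\ref{A12} to keep the partial value $Q(\alpha_1,\dots,\alpha_{n-2})$ in the form $g(c')$ (and carrying the extra requirement $\alpha_i\in\mathbb{F}_q^*$ so that one may append $0$ for odd $n$), whereas you fix $\alpha_3,\dots,\alpha_n$ arbitrarily and solve once for $\alpha_1,\alpha_2$. Your route is a genuine streamlining: it uses Lemma~\ref{A12} only once, avoids the parity reduction and the $\mathbb{F}_q^*$ bookkeeping, and the forbidden-solution count $4(n-2)$ versus $q-2$ gives the same range $n\le q/4$. Nothing is lost; the paper's inductive structure is simply unnecessary once one observes, as you do, that $K$ need not itself be of the form $g(c')$.
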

\begin{proof}
For $n=1$, note that $Q(x_1)=\theta^{-1}x_1(x_1+\theta^{-1})+\theta^{-1}w$. Then ${\rm N}\big(Q(x_1)+g(c)\big)>0$ for some $c\in\mathbb{F}_q^*$ if and only if ${\rm Tr}\big(\theta^2w+\theta^3 g(x)\big)=0$ has a solution $x \in \mathbb{F}_q^*$.  This is equivalent to  $1<{\rm N}\big({\rm Tr}(\theta^3 g(x))\big)<q$. Since ${\rm N}\big({\rm Tr}(\theta^3 g(x))+1\big)+{\rm N}\big({\rm Tr}(\theta^3 g(x))\big)=q$ and $\sum\limits_{x\in\mathbb{F}_q}\chi(\theta^3 g(x))={\rm N}\big({\rm Tr}(\theta^3 g(x))\big)-{\rm N}\big({\rm Tr}(\theta^3 g(x))+1\big)$, hence ${\rm N}\big({\rm Tr}(\theta^3 g(x))\big)=\frac{q}{2}+\frac{1}{2}\sum_{x\in\mathbb{F}_q}\chi(\theta^3 g(x))$. By Lemma \ref{A10}, it has that 
$|{\rm N}\big({\rm Tr}(\theta^3 g(x))\big)-\frac{q}{2}|\leq \sqrt{q}$, which implies that $$2<\frac{q}{2}-\sqrt{q}\leq {\rm N}\big({\rm Tr}(\theta^3 g(x))\big)\leq \frac{q}{2}+\sqrt{q}<q.$$
When $n\geq 2$, $Q(x_1, \cdots, x_{n-1},x_n=0)=Q(x_1, \cdots, x_{n-1})$, thus we may assume that $n$ is even and prove that there exist pairwise distinct elements $\alpha_1, \cdots, \alpha_{n} \in \mathbb{F}^*_q$ such that the conclusion holds. Calculating yields the recursion relation 
 \[Q(x_1, \cdots, x_{n})=(x_{n-1}+\sum_{i=1}^{n-2}x_i+\theta^{-1})(x_n+\sum_{i=1}^{n-2}x_i+\theta^{-1})(x_{n-1}+x_n)+Q(x_1, \cdots, x_{n-2}).\]
 Thus we will prove the conclusion by induction on $n\geq 2$. 
 
Case (i): $2 \nmid m$. For $n=2$, then $Q(x_1, x_2)=(x_1+\theta^{-1})(x_2+\theta^{-1})(x_1+x_2)+\theta^{-1}wXY(X+Y)+\theta^{-1}w$, where $X=x_1+\theta^{-1}$ and $Y=x_2+\theta^{-1}$. Choose $c \in \mathbb{F}_q^*\backslash\{\theta^{-1}w_{r-1}\}$, by Lemma \ref{A11}, ${\rm N}(Q(x_1, x_2)+c) ={\rm N}(XY(X+Y)+\theta^{-1}w+c) =q-2$. Note that $Q(x_1, x_2)+c=0$ implies that $x_1 \neq x_2$. In addition, ${\rm N}(Q(0, x_2)+c) \leq 2$ and ${\rm N}(Q(x_1, 0)+c) \leq 2$. Then from $q-2 >2+2=4$, we deduce that there exist $\alpha_1 \neq \alpha_2 \in \mathbb{F}^*_q$ such that $Q(\alpha_1, \alpha_2)+c=0$. Now suppose that there exist pairwise distinct elements $\alpha_1, \cdots, \alpha_{n-2} \in \mathbb{F}^*_q$ such that $Q(\alpha_1, \cdots, \alpha_{n-2})=c'$ for some $c' \in \mathbb{F}_q^*$. Denote $X=x_{n-1}+\sum_{i=1}^{n-2}\alpha_i+\theta^{-1}$ and $Y=x_{n}+\sum_{i=1}^{n-2}\alpha_i+\theta^{-1}$, then
\[Q(\alpha_1, \cdots,\alpha_{n-2}, x_{n-1,} x_{n})=XY(X+Y)+c'.\]
 Choose $c \in \mathbb{F}_q^*\backslash\{c'\}$, then by Lemma \ref{A11} again, ${\rm N}(Q(\alpha_1, \cdots,\alpha_{n-2}, x_{n-1,} x_{n})+c)={\rm N}(XY(X+Y)+c'+c)= q-2.$ Since $c \neq c'$, $Q(\alpha_1, \cdots,\alpha_{n-2}, x_{n-1,} x_{n})=c$ implies that $x_{n-1} \neq x_n$. Given $x_{n-1}$ (resp. $x_n$) $\in \{0, \alpha_1, \cdots, \alpha_{n-2}\}$, there exist at most two $x_n$'s (resp. $x_{n-1}$'s), such that  $XY(X+Y)+c'+c=0$. Thus, from $q-2 > 4(n-1)$, we obtain that there exist $x_{n-1}\neq x_{n} \in \mathbb{F}_q^*\backslash\{\alpha_1, \cdots, \alpha_{n-2}\}$  such that $Q(\alpha_1, \cdots, \alpha_{n})=\alpha$. 

Case (ii): $2 \mid m$ and $2 \nmid \frac{m}{2}$. By Lemma \ref{A12}, $\theta^{-1}w_{r-1}=c_1^3+d_1^3$ for some $c_1, d_1\in \mathbb{F}_q^*$. For $n=2$, then $Q(x_1, x_2)=XY(X+Y)+\theta^{-1}w_{r-1}=XY(X+Y)+c_1^3+d_1^3$, where $X=x_1+\theta^{-1}$ and $Y=x_2+\theta^{-1}$. Then by Lemma \ref{A11}, ${\rm N}(Q(x_1, x_2)+c_1^3)={\rm N}(XY(X+Y)+\beta_1^3)= q-2+2\sqrt{q}.$ Similar to the proof of (i), by $q-2+2\sqrt{q} > 4$, we deduce that there exist $\alpha_1 \neq \alpha_2 \in \mathbb{F}^*_q$ such that $Q(\alpha_1, \alpha_2)=c_1^3$. Now suppose that there exist pairwise distinct elements $\alpha_1, \cdots, \alpha_{n-2} \in \mathbb{F}^*_q$ such that $Q(\alpha_1, \cdots, \alpha_{n-2})=c'^3$ for some $c' \in \mathbb{F}_q^*$. By Lemma \ref{A12} again, $c'^3=c^3+d^3$ for some $c, d\in \mathbb{F}_q^*$. Denote $X=x_{n-1}+\sum_{i=1}^{n-2}\alpha_i+\theta^{-1}$ and $Y=x_{n}+\sum_{i=1}^{n-2}\alpha_i+\theta^{-1}$, then
\[Q(\alpha_1, \cdots,\alpha_{n-2}, x_{n-1,} x_{n})=XY(X+Y)+c'^3.\]
By Lemma \ref{A11} again, ${\rm N}(Q(\alpha_1, \cdots,\alpha_{n-2}, x_{n-1,} x_{n})+c^3)={\rm N}(XY(X+Y)+d^3)= q-2+2\sqrt{q}.$ Since $d \neq 0$, $X+Y\neq 0$, which implies that $x_{n-1} \neq x_n$.  Given $x_{n-1}$ (resp. $x_n$) $\in \{0, \alpha_1, \cdots, \alpha_{n-2}\}$, there exist at most two $x_n$'s (resp. $x_{n-1}$'s), such that $Q(\alpha_1, \cdots,\alpha_{n-2}, x_{n-1,} x_{n})+c^3=0$. Thus from $q-1+2\sqrt{q} > 4(n-1)$, we obtain that there exist $\alpha_{n-1}\neq \alpha_{n} \in \mathbb{F}_q^*\backslash\{\alpha_1, \cdots, \alpha_{n-2}\}$  such that $Q(\alpha_1, \cdots, \alpha_{n})+c^3=0$.

Case (iii) $4 \mid m$. By Lemma \ref{A12}, $\pi^{-1}\theta^{-1}w_{r-1}=c_1^3+d_1^3$ i.e., $\theta^{-1}w_{r-1}=\pi c_1^3+\pi d_1^3$ for some $c_1, d_1\in \mathbb{F}_q^*$. Note that $\lambda(\pi c_1^3)\neq 1$. The remainder proof is then similar to the case (ii) by using Lemmas \ref{A11} and \ref{A12}.
\end{proof}

Now, we are going to prove Proposition \ref{prop10}.

\textbf{Proof of Proposition \ref{prop10}}: Without loss of generality, we suppose $w_{r-3}=1$, then $w_{r-2}=\theta^{-1}$.
    By the proof process of Lemma \ref{lem4} (or the conclusion of Lemma \ref{lem4}), we have 
    \begin{eqnarray*}
        &&\frac{\det\big(\bm c_{r, \theta}(x_1)|\bm c_{r, \theta}(x_2)|\cdots|\bm c_{r, \theta}(x_{r-1})|\bm w^\top\big)}{\theta V(x_1, x_2, \cdots, x_{r-1})} \\
        &=& S_{r-3}(x_1, \cdots, x_{r-1})(\theta^{-1}+\sum\limits_{t=1}^{r-1}x_t)+S_{r-4}(x_1, \cdots, x_{r-1})\\
        &&+\theta^{-1}(S_{r-2}(x_1, \cdots, x_{r-1})(\theta^{-1}+\sum\limits_{t=1}^{r-1}x_t)+S_{r-3}(x_1, \cdots, x_{r-1})\big)+\theta^{-1}w_{r-1}\\
      %  &=& S_{r-4}(x_1, \cdots, x_{r-1})+S_{r-3}(x_1, \cdots, x_{r-1})S_{r-2}(x_1, \cdots, x_{r-1})\\ 
      % &&+\theta^{-1}S^2_{r-2}(x_1, \cdots, x_{r-1})+\theta^{-2}S_{r-2}(x_1, \cdots, x_{r-1})+\theta^{-1}w_{r-1}\\
        &=&\sum_{1\leq i<j<\ell\leq r-1}x_ix_jx_{\ell}+\sum_{1\leq i<j\leq r-1}x_ix_j\sum_{i=1}^{r-1}x_i+\theta^{-1}(\sum_{i=1}^{r-1}x_i)^2+\theta^{-2}\sum_{i=1}^{r-1}x_i+\theta^{-1}w_{r-1}\\
        &=&Q(x_1, \cdots, x_{r-1}),
    \end{eqnarray*}
where $Q(x_1, \cdots, x_{r-1})$ is given as in Lemma \ref{A13}. Since $\frac{3q-4}{4}\leq k\leq q-4$, $1\leq r-3\leq \frac{q-8}{4}<\frac{q}{4}$. By Lemma \ref{A13}, there exist pairwise distinct elements $\alpha_1, \cdots, \alpha_{r-3} \in \mathbb{F}_q$ and $c \in \mathbb{F}_q^*$, such that \begin{equation*}
    Q(\alpha_1, \cdots, \alpha_{r-3})=g(c),\end{equation*} 
    where $g(c)$ is defined as in Lemma \ref{A13}.
 Note that  
\begin{eqnarray*}
   && Q(\alpha_1, \cdots, \alpha_{r-3}, x_{r-2}, x_{r-1})\\
   &=&(x_{r-2}+\sum_{i=1}^{r-3}\alpha_i+\theta^{-1})\cdot(x_{r-1}+\sum_{i=1}^{r-3}\alpha_i+\theta^{-1})
 \cdot(x_{r-2}+x_{r-1})+Q(\alpha_1, \cdots, \alpha_{r-3})\\
    &=&XY(X+Y)+g(c),
\end{eqnarray*}
where $X=x_{r-2}+\sum_{i=1}^{r-3}\alpha_i+\theta^{-1}$ and $Y=x_{r-1}+\sum_{i=1}^{r-3}\alpha_i+\theta^{-1}$. By Lemma \ref{A11}, ${\rm N}(Q(\alpha_1, \cdots, \alpha_{r-3}, x_{r-2}, x_{r-1}))={\rm N}(XY(X+Y)+g(c))\geq  q-2.$ Since $g(c) \neq 0$, $Q(\alpha_1, \cdots,\alpha_{n-2}, x_{r-2,} x_{r-1})=0$ implies that $x_{n-1} \neq x_n$.  Given $x_{r-2}$ (resp. $x_{r-1}$) $\in \{ \alpha_1, \cdots, \alpha_{r-3}\}$, there exist at most two $x_{r-1}$'s (resp. $x_{r-2}$'s), such that $Q(\alpha_1, \cdots, \alpha_{r-3}, x_{r-2}, x_{r-1})=0$.  Thus, from $q-2 > 4(r-3)$, we obtain that there exist $\alpha_{r-2}\neq \alpha_{r-1} \in \mathbb{F}_q^*\backslash\{\alpha_1, \cdots, \alpha_{r-3}\}$  such that $Q(\alpha_1, \cdots, \alpha_{r-1})=0$, i.e., there exist pairwise distinct elements $\alpha_1, \cdots,  \alpha_{r-1} \in \mathbb{F}_q$, such that $\det\big(\bm c_{r, \theta}(\alpha_1)|\bm c_{r, \theta}(\alpha_2)|\cdots|\bm c_{r, \theta}(\alpha_{r-1})|\bm w^\top\big)=0$. By Proposition \ref{prop8}, $\bm u$ is not a deep hole of ${\rm TRS}_k(\mathbb{F}_q, \theta)$.

\end{document}